\let\ssection=\section
\renewcommand{\section}{\setcounter{equation}{0}\ssection}
\newtheorem{thm}{Theorem}[section]
\newtheorem{lem}[thm]{Lemma}
\newtheorem{cor}[thm]{Corollary}
\newtheorem{prop}[thm]{Proposition}
\newtheorem{defi}[thm]{Definition}
\newtheorem{rmk}[thm]{Remark}
\def\a{\alpha}
\def\b{\beta}
\def\g{\gamma}
\def\Ga{\Gamma}
\def\d{\nu}			
\def\k{\kappa}
\def\l{\lambda}
\def\m{\mu}
\def\om{\omega}
\def\Om{\Omega}
\newcommand{\bbR}{\mathbb{R}}
\newcommand{\bbC}{\mathbb{C}}
\newcommand{\bbN}{\mathbb{N}}
\newcommand{\bbZ}{\mathbb{Z}}
\newcommand{\bbQ}{\mathbb{Q}}
\newcommand{\bi}{\mathsf{i}}
\newcommand{\cA}{{\mathcal{A}}}
\newcommand{\sfc}{\mathsf{c}}
\newcommand{\cC}{{\mathcal{C}}}
\newcommand{\tC}{C_{\sT}}
\newcommand{\Cd}{C_{\sS}}
\newcommand{\Clm}{C_{\sD}}
\newcommand{\ce}{\mathfrak{ce}}
\newcommand{\CE}{\mathrm{CE}}
\newcommand{\Cinfty}{{\mathcal{C}^{\infty}}}
\newcommand{\bbCl}{\mathrm{\bbC l}}
\newcommand{\cO}{{\mathcal{O}}}
\newcommand{\CO}{\mathrm{CO}}
\newcommand{\sD}{\mathsf{D}}
\newcommand{\D}{\mathcal{D}}
\newcommand{\Dirac}{\slashed{D}}
\newcommand{\Dlm}{\mathcal{D^{\l,\m}}}
\newcommand{\Dslm}{\mathsf{D^{\l,\m}}}
\newcommand{\cE}{\mathcal{E}}
\newcommand{\e}{\mathfrak{e}}
\newcommand{\rE}{\mathrm{E}}
\newcommand{\End}{\mathrm{End}}
\newcommand{\cF}{{\mathcal{F}}}
\newcommand{\fkg}{{\mathfrak{g}}}
\newcommand{\gr}{{\mathrm{gr}}}
\newcommand{\bgr}{{\mathrm{bigr}}}
\newcommand{\mg}{\text{\textg}}
\newcommand{\tg}{\tilde{\gamma}}
\newcommand{\Hom}{\mathrm{Hom}}
\newcommand{\HSQ}{\mathrm{QHS}}
\newcommand{\HSC}{\mathrm{CHS}}
\newcommand{\Id}{\mathrm{Id}}
\newcommand{\im}{\mathrm{im}}
\newcommand{\ccL}{\mathcal{L}}
\newcommand{\sL}{\mathsf{L}}
\newcommand{\LD}{\mathcal{L}_X^{\lambda,\mu}}
\newcommand{\Ld}{\mathsf{L}_X^\d}
\newcommand{\bbL}{\mathbb{L}}
\newcommand{\Lt}{\mathbb{L}_X^\d}
\newcommand{\fo}{\mathfrak{o}}
\newcommand{\cK}{{\mathcal{K}}}
\newcommand{\cKh}{{\mathcal{KH}}}
\newcommand{\Pol}{\mathrm{Pol}}
\newcommand{\cQ}{{\mathcal{Q}}}
\newcommand{\cQlm}{\mathcal{Q}^{\l,\m}}
\newcommand{\rL}{\mathrm{L}}
\newcommand{\cM}{{\mathcal{M}}}
\newcommand{\cN}{{\mathcal{N}}}
\newcommand{\rO}{\mathrm{O}}
\newcommand{\cS}{{\mathcal{S}}}
\newcommand{\sS}{\mathsf{S}}
\newcommand{\spo}{\mathfrak{spo}}
\newcommand{\spl}{\mathfrak{sp}}
\newcommand{\bbT}{\mathbb{T}}
\newcommand{\sT}{\mathsf{T}}
\newcommand{\ve}{\varepsilon}
\newcommand{\Vect}{\mathrm{Vect}}
\newcommand{\vol}{\mathrm{vol}}
\newcommand{\tp}{\tilde{p}}
\newcommand{\txi}{\tilde{\xi}}
\newcommand{\half}{\frac{1}{2}}
\newcommand{\fh}{\mathfrak{h}}
\newcommand{\fS}{\mathfrak{S}}
\newcommand{\bd}{\boldsymbol{d}}
\newcommand{\bdel}{\boldsymbol{\delta}}
\newcommand{\bDel}{\boldsymbol{Q}}
\begin{document}

\baselineskip=15pt

\title{Conformally equivariant quantization for spinning particles}

\author{Jean-Philippe~Michel}
\address{University of Luxembourg, Campus Kirchberg, Mathematics Research Unit, 6, rue Richard Coudenhove-Kalergi, L-1359 Luxembourg City, Grand Duchy of Luxembourg}
\email{jean-philippe.michel@uni.lu}
\thanks{I thank the Luxembourgian NRF for support via the AFR grant PDR-09-063.
This research has been also partially funded by the Interuniversity Attraction 
Poles Program initiated by the Belgian Science Policy Office.}
\address{University of Li\`ege, 12 grande traverse, Sart-Tilman, B-4000 Li\`ege, Belgium}
\email{jean-philippe.michel@ulg.ac.be}


\keywords{Quantization, Conformal geometry, graded geometry, spinor differential operators, Dirac operator, higher symmetries, supercharges.}
\subjclass[2010]{17B66, 17B81, 53A30, 53D55, 58J70, 70S10, 81S10}

\begin{abstract}
This work takes place over a conformally flat spin manifold $(M,\mg)$. We prove existence and uniqueness of the conformally equivariant quantization valued in spinor differential operators, and provide an explicit formula for it when restricted to first order operators. The Poisson algebra of symbols is realized as a space of functions on the supercotangent bundle $\cM=T^*M\oplus\Pi TM$, endowed with a symplectic form depending on the metric $\mg$. It admits two different actions of the conformal Lie algebra: one tensorial and one Hamiltonian. They are intertwined by the uniquely defined conformally equivariant superization, for which an explicit formula is given. This map allows us to classify all the conformal supercharges of the spinning particle in terms of conformal Killing tensors, which are symmetric, skew-symmetric or with mixed symmetry. Higher symmetries of the Dirac operator are obtained by quantization of the conformal supercharges.
\end{abstract}

\maketitle


\section{Introduction}

Whereas there exists a pseudo-classical model for the spinning particle, due to Berezin and Marinov \cite{BMa77}, quantization is scarcely developed in that setting. We propose in this paper a natural extension of the conformally equivariant quantization for spinning particles, so that it is valued in spinor differential operators. We start this introduction with the concept of quantization and especially of conformally equivariant quantization of cotangent bundles as introduced in \cite{DLO99}. Then, we review the Hamiltonian formalism for pseudo-classical spinning particles and known quantizations for such systems. Finally we present our main results and detail the content of the paper.

\subsection{} The prototypical example of quantization is the one of cotangent bundles $T^*M$, 
endowed with their canonical symplectic structure. 
A quantization is then a linear map between a subalgebra of smooth functions on $T^*M$ 
and a subalgebra of linear operators acting on the Hilbert space $\rL^2(M)$
of square integrable functions on $M$.
Here, we focus on the algebra $\D(M)$ of scalar differential operators on $M$, 
which is filtered by the order of derivations.
Its associated graded algebra of symbols identifies with 
the algebra  $\Pol(T^*M)$  of fiberwise polynomial functions on $T^*M$, or equivalently 
with the algebra of symmetric tensors $\Ga(\cS TM)$.
As a result, we call quantization of $T^*M$ a linear map $\cQ:\Pol(T^*M)\rightarrow\D(M)$ 
which is the inverse of a full symbol map. 
Via the principal symbol maps, the algebras $\Pol(T^*M)\cong\Ga(\cS TM)$ inherit from $\D(M)$ 
a Poisson bracket and a Hamilonian action of vector fields. The bracket coincides with 
the canonical Poisson bracket on $\Pol(T^*M)$, 
while the action is the natural $\Vect(M)$-action on $\Pol(T^*M)\cong\Ga(\cS TM)$. 

The celebrated Weyl quantization of $T^*M$ is characterized, if $M=\bbR^n$, by its equivariance property under the action  on $T^*M$ of the symplectic affine Lie algebra $\spl(2n,\bbR)\ltimes\bbR^n$. Focusing rather on equivariance under a Lie algebra $\fkg$ acting by vector fields on the configuration manifold $M$ itself, Duval, Lecomte and Ovsienko have introduced the concept of {\it $\fkg$-equivariant quantization}. In particular, they prove its existence and uniqueness for $\fkg$ the projective \cite{LOv99} or the conformal Lie algebra \cite{DLO99}, acting locally on manifolds $M$ which are projectively or conformally flat respectively. It has been intensely developed since then, see e.g. \cite{DOv01,DEO04,Mic09,CSi09,Mic11a,Sil09} and references therein. 

A deep motivation to build quantizations is to obtain a correspondence between classical and quantum symmetries whenever there exists one, and an efficient way to measure the quantum anomalies otherwise. For a free massive scalar particle on $(M,\mg)$, this amounts to determining whether the first integrals of the geodesic flow of the metric $\mg$, given by symmetric Killing tensors, can be quantized into differential operators commuting with the Laplacian. This has been investigated for Killing $2$-tensors by Carter, 
using a minimal quantization procedure \cite{Car77}. There are no quantum anomalies in a number of examples, see e.g.\ \cite{DVa05}, in particular if $(M,\mg)$ is Ricci-flat. 
Over a conformally flat manifold, the situation is now well-understood thanks to the conformally equivariant quantization introduced above. Indeed, the latter establishes a bijection between the classical and quantum symmetries of a free massless particle, i.e., between the constants of motion along the null geodesic flow, given by conformal Killing symmetric tensors, and the higher symmetries of Laplacian \cite{Eas05,Mic11b}. Moreover, this correspondence extends to the massive case, for the symmetries given by differential operators of second order, and leads ultimately to new quantum integrable systems \cite{BEHRR11,DVa11}.
For additional informations on symmetries of Laplacian, we refer to \cite{MRS13,MSS14}.

\subsection{}To develop the spin counterpart of the preceding picture, we suppose that $(M,\mg)$ is a spin manifold of even dimension, with spinor bundle $S$. A quantum particle is now described by a square integrable section of $S$ and we focus on the algebra of spinor differential operators $\D(M,S)$. Weyl quantization \cite{Wid80} and conformally equivariant quantization \cite{CSi09} have been generalized to that setting, but with the usual algebra of symbols $\Pol(T^*M)\otimes_{\Cinfty(M)}\Ga(\bbCl(M,\mg))$ as source space. We rather follow Getzler \cite{Get83}, which in addition 
 uses the filtration of the Clifford bundle, to derive a supercommutative bigraded algebra of symbols from the usual one. This superalgebra identifies with the one of tensors $\Ga(\cS TM\otimes\Lambda T^*M)$ or equivalently with the algebra $\mathcal{O}(\cM)$ of fiberwise polynomial functions on the {\it supercotangent bundle} $\cM=T^*M\oplus\Pi TM$, where $\Pi$ denotes the reverse parity functor. This supermanifold is precisely the phase space introduced by Berezin and Marinov \cite{BMa77} to deal with pseudo-classical spinning particles on~$M=\bbR^n$. Besides, the author proved in \cite{Mic10a} the existence of a {\it Hamiltonian filtration} on $\D(M,S)$, assigning order $2$ to spinor covariant derivatives $\nabla_X$, with $X\in\Vect(M)$, and order $1$ to the Clifford elements $\g(\xi)$, with $\xi\in\Om^1(M)$. This filtration is compatible with the commutator of $\D(M,S)$ and induces a super Poisson bracket and a new gradation on $\cO(\cM)$. This grading stems from the identification of $\cM$ with the graded manifold $T^*[2]M\oplus T[1]M$, whereas the super Poisson bracket comes from the symplectic structure on $\cM$ induced by the metric $\mg$. The latter follows from a general construction, due to Rothstein in the super setting \cite{Rot90} and to Roytenberg in the graded setting \cite{Roy02}. In addition, the obtained symplectic structure on $\cM$ corresponds to the one coming from the Lagrangian of a free pseudo-classical spinning particle on $(M,\mg)$ \cite{Rav80,KCa11}, so that the graded Poisson algebra $\cO(\cM)$ is a classical counterpart to $\D(M,S)$. This is confirmed by the geometric quantization scheme, which associates the Hilbert space of square integrable spinors to the supercotangent bundle $\cM$ \cite{Mic10a}.

We name {\it quantizations} of $\cM$ the  linear maps $\cQ:\cO(\cM)\rightarrow\D(M,S)$ which are the inverse of a full symbol map for the Hamiltonian filtration of $\D(M,S)$. The Weyl quantization developed by Getzler \cite{Get83} and Voronov \cite{Vor90} is precisely of this type. In particular, it enables to quantize supercharges linear in momenta, built from Killing forms, into symmetries of the Dirac operator $\Dirac$ \cite{GRV93,Tan95}. However, Weyl quantization fails to quantize conformal supercharges, i.e. constants of motion of free massless spinning particles, into higher symmetries of the Dirac operator, even the simplest ones built from conformal Killing forms \cite{BCh97, BKr04}.

\subsection{} In this paper, we prove existence and uniqueness of a family $(\cQlm)_{\l,\m\in\bbR}$ of quantizations of the supercotangent bundle which are conformally equivariant, assuming that the base manifold $(M,\mg)$ is conformally flat. Such a map $\cQlm$ allows, for right parameters $(\l,\m)$, to quantize all the conformal supercharges into higher symmetries of the Dirac operator, i.e.\ into operators $D_1\in\D(M,S)$ such that $\Dirac D_1=D_2\Dirac$ for some $D_2\in\D(M,S)$. In particular, we recover the symmetries of $\Dirac$ given by first order differential operators, classified previously in \cite{BKr04}. 
This is a first step toward the study of the algebra structure of the higher symmetries of the Dirac operator, which should be of interest for higher spin field theories. 
Future applications to separability of the Dirac equation are also expected. The determination of second order symmetries of the Dirac operator have already been investigated over curved $4$-manifolds \cite{ABB14}.
By the way, we also build a family $(\fS^\d)_{\d\in\bbR}$ of so-called conformally equivariant {\it superizations} and provide explicit formul{\ae} for them. For $\d=0$, the map $\fS^\d$ establishes a correspondence between conformal Killing tensors with mixed symmetry, lying in $\Ga(\cS TM\otimes\Lambda T^*M)$, and conformal supercharges. 
This classification of conformal supercharges extends previous results by physicists \cite{GRV93,Tan95} and may prove useful for integrability of spinning particle motion, like in \cite{KCa11}.

\subsection{}We detail now the content of the present paper.

In Sect.\ $2$, we review some results of \cite{Mic10a}.
In particular, we recall the gradation and the filtration on the source and 
target spaces of the conformally equivariant quantization of $\cM$  
 and provide the actions of 
the Lie algebra $\fkg$ of conformal Killing vector fields on them. 
All through this section, we assume that $(M,\mg)$ is conformally flat 
and of signature $(p,q)$ so that $\fkg\cong\fo(p+1,q+1)$. Its action 
on the spinor bundle $S$ is given by the Kosmann's Lie derivative 
of spinors \cite{Kos72}, which admits a one parameter deformation 
obtained geometrically by tensoring $S$ with a real power of the determinant bundle, 
i.e. with a density bundle. We denote by $(\Dslm)_{\l,\m\in\bbR}$ 
the family of $\fkg$-modules defined by the induced deformed adjoint 
actions of $\fkg$ on $\D(M,S)$. According to \cite{Mic10a}, 
the action of $\fkg$ on $\Dslm$ yields to two different 
actions on the symbol algebra $\cO(\cM)$. The first one is Hamiltonian 
and induced by the principal symbol maps associated to the Hamiltonian 
filtration of $\D(M,S)$, it depends only on the shift $\d=\m-\l$ and leads 
to the $\fkg$-module denoted $\sS^\d$. It should be understood as the 
module of classical observables. The latter $\fkg$-action does not preserve 
the bigradation of $\cO(\cM)\cong\Ga(\cS TM\otimes\Lambda T^*M)$, but 
via the {\it principal tensorial symbol maps}, it induces a second $\fkg$-action 
on $\cO(\cM)$ which does. We denote by $\sT^\d$ this extra $\fkg$-module, 
which identifies with the tensor module $\Ga(\cS TM\otimes\Lambda T^*M)$ up 
to a twist by appropriate density bundles. Then, we define the conformally 
equivariant superization $\fS^\d:\sT^\d\rightarrow\sS^\d$, inverse to a full 
tensorial symbol map, and the conformally equivariant quantization 
$\cQlm:\sS^\d\rightarrow\Dslm$, inverse to a full Hamiltonian symbol map. 
The name superization comes from the inclusion $\Pol(T^*M)\subset\sT^0$ as 
$\Vect(M)$-modules, so that each non-spinning classical observable admits 
a spinning analog via $\fS^0$.

In Sect.\ $3$, we introduce the building blocks of the maps $\fS^\d$ and $\cQlm$, which are the isometric invariant differential operators acting on $\sT^\d$. They form an algebra which is easily determined thanks to Weyl's theory of invariants \cite{Wey97}. It is generated by $13$ operators, most of which are well-known, e.g., divergence, gradient, de Rham differential. Some variants of this algebra naturally appear in the context of Howe dual pairs \cite{How89,LHo10} and were investigated over constant curvature manifolds \cite{HWa05,HWa07,HWa08}. Except the computation of  the Casimir operators of the three $\fkg$-modules $\sT^\d$, $\sS^\d$ and $\Dslm$, the material in this section is not new. However, the presentation, which uses the identification of differential operators on the tensor space $\sT^\d$ with scalar differential operators on $\cM$, is original and proves to be convenient. 

In Sect.\ $4$, we prove our main theorem: existence and uniqueness of the conformally equivariant superization $\fS^\d:\sT^\d\rightarrow\sS^\d$ and  quantization $\cQlm:\sS^{\d}\rightarrow\Dslm$, with $\d=\m-\l\in\bbR$, except for a discrete subset of {\it critical weights} $\d$. So far, only the existence and uniqueness of the composition of these two maps was known, as a consequence of the general work \cite{CSi09} on equivariant quantizations. Our proof follows the one of existence and uniqueness of conformally equivariant quantization  in the scalar setting \cite{DLO99}. The main point is the diagonalization of the Casimir operator of $\sT^\d$, which proves to be equivalent to the harmonic decomposition of the polynomial superalgebra $\cS\bbR^n\otimes\Lambda\bbR^n$. After deriving it, we discover this was the purpose of \cite{Hom01}. We nevertheless include our proof for completeness. 
Moreover, we determine the critical weights $\d$ for $\fS^\d$ (resp.\ $\cQlm$). According to \cite{Mic11a}, they coincide with the weights $\d$ for which there exists a conformally invariant operator on $\sT^\d$ (resp.~$\sS^\d$), which strictly lowers the degree. The classification of such operators can be deduced from \cite{BCo85a,BCo85b}, we get it here from the more basic Weyl's theory of invariants. It allows us to show that critical weights are positive, so that existence and uniqueness of $\fS^\d$ and $\cQlm$ holds in the most usual case $\d=\m-\l=0$.

In Sect.\ $5$, we compute explicit formul{\ae} for $\fS^\d$ and $\cQlm$.
Following same method as in \cite{DOv01}, we get a local formula for the conformally equivariant superization. Then, we obtain such formula for the conformally equivariant quantization also, when it is valued in first order operators. After that, we derive covariant expressions  for both maps. Contrary to the scalar case, none of these maps is conformally invariant, i.e., depends only on the conformal class of $\mg$. This is due to the fact that the symplectic form on $\cM$ depends on $\mg$ and therefore the Hamiltonian $\fkg$-action on $\sS^\d$ also does. Only their composition $\cQlm\circ\fS^\d$, with $\d=\m-\l$, is a conformally invariant map. 
  
In Sect.\ $6$, we classify the symmetries of a free massless spinning particle over a conformally flat manifold. 
In the pseudo-classical case, they are called conformal supercharges and arise as the conformally equivariant superization of conformal Killing hook tensors. The latter are given by the traceless component 
of the tensor product of symmetric and skew-symmetric conformal Killing tensors. 
After quantization, the conformal supercharges correspond to the higher symmetries of the Dirac operator. 
Both classifications of conformal supercharges and of higher symmetries of the Dirac operator are new. 
They generalize the description of first order symmetries of the pseudo-classical \cite{GRV93,Tan95} and
quantum \cite{BKr04} spinning particle.
For extension of second order symmetries to the curved case and applications to integrability, we refer to
\cite{KCa11,CKK11,ABB14}.

We include an appendix, where we collect the needed basic computations in the algebra of isometric invariant differential operators acting on the tensor space $\sT^\d$.
\\

In the paper, we work over a pseudo-Riemannian manifold $(M,\mg)$, with $\mg$ a metric of signature $(p,q)$. That manifold is supposed to be spin and of even dimension $n=p+q$. We freely use Einstein's summation convention. The tensor products of $\Cinfty(M)$-modules, taken over the algebra $\Cinfty(M)$, are denoted by $\otimes_\Cinfty$.

\section{Spinor Differential Operators and their Symbols}

In order to keep a self-contained exposition, this section presents a recollection of our previous work \cite{Mic10a}. 
Namely, we introduce a family of $\fo(p+1,q+1)$-module structures on the space of spinor differential operators, denoted by $(\Dslm)_{\l,\m\in\bbR}$, and on its two spaces of symbols, denoted by $(\sS_\d)_{\d\in\bbR}$ and $(\sT_\d)_{\d\in\bbR}$. 
Then, we compare these three family of modules. 

\subsection{The supercotangent bundle}

Let $E$ be a vector bundle over the smooth manifold~$M$ and $d$ an integer. We denote by $E[d]$ the $\bbN$-graded manifold with base $M$ and structural sheaf~$\cO(E[d])$, which identifies with the graded sheaf of complexified sections of $\cS E^*$ if $d$ is even and of $\Lambda E^*$ if $d$ is odd. The sections of $\cS^k E^*$, or $\Lambda^k E^*$, receive a degree $kd$. Such a formalism allows to encode symmetric contravariant tensors, i.e.\ sections of the bundle $\cS TM$, as functions on $T^*[2]M$, the order of tensors being twice the degree of corresponding functions. Analogously, the algebra of complexified differential forms $\Om^\bbC(M)$ is viewed as the algebra of functions on $T[1]M$, the degree of forms equating the one of functions. 

Both types of tensors are encompassed into the algebra of functions of the supercotangent bundle of $M$, defined as the Whitney sum  $\cM=T^*[2]M\oplus T[1]M$.
The gradation of its structural sheaf is given by the following subspaces, for $\ell\in\bbN$,
\begin{equation}\label{Olgraduation}
\cO_{[\ell]}(\cM)=\bigoplus_{2k+\k=\ell}\cO_{k,\k}(\cM)\quad \text{and} \quad\cO_{k,\k}(\cM)=\Pol_k(T^*M)\otimes_{\Cinfty}\Om^\bbC_\k(M), 
\end{equation}
where $\Pol_k(T^*M)\cong\Ga(\cS^k TM)$ is the space of functions on $T^*M$ of degree $k$ in the fiber variables and $\Om^\bbC_\k(M)$ is the space of complex differential forms of degree $\k$. Starting from a local coordinate system $(x^i)$ on $M$, we can build a natural one $(x^i,p_i,\xi^i)$ on $\cM$, with $p_i$ identifying to the partial derivative $\partial_i:=\partial/\partial x^i$ and $\xi^i$ to the differential $1$-form $dx^i$.

Rothstein has classified the even symplectic structures on supermanifolds in \cite{Rot90}. Accordingly, a symplectic structure on $\cM$ is equivalent to the three following piece of data: a symplectic form on $T^*M$, a metric on the vector bundle $TM$, i.e.\ a metric on $M$, and a compatible connection. This leads to the following proposition.

\begin{prop}\cite{Mic10a}
Let $(M,\mg)$ be a pseudo-Riemannian manifold and $\hbar\in\bbR$. Its supercotangent bundle $\cM=T^*[2]M\oplus T[1]M$ admits an exact symplectic structure $\om=\bd \a$, whose potential $1$-form reads in local natural coordinates as
\begin{equation}\label{alpha}
\a =p_i\bd x^i+\frac{\hbar}{2\bi}\mg_{ij}\xi^i \bd^\nabla \xi^j,
\end{equation}
with $\bd^\nabla$ the covariant differential associated to the Levi-Civita connection.
\end{prop}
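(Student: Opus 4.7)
The plan is to (i) define $\a$ as a globally well-defined 1-form on $\cM$ whose expression in natural coordinates matches the displayed one, (ii) set $\om:=\bd\a$ so that closedness is automatic, and (iii) establish non-degeneracy by a local computation.

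For step (i), I would decompose $\a=\theta+\frac{\hbar}{2\bi}\b$, where $\theta=p_i\bd x^i$ is the pullback to $\cM$ of the canonical Liouville 1-form on $T^*M$ and $\b:=\mg_{ij}\xi^i\bd^\nabla\xi^j$. The first summand is intrinsic by construction. For the second, view the odd fiber coordinates $\xi^i$ as the components of the tautological odd section of the pullback of $TM\to M$ to $T[1]M$; the covariant differential $\bd^\nabla$, extended as a graded derivation to $\cO(\cM)$, then produces a well-defined $TM$-valued 1-form $\bd^\nabla\xi$ on $\cM$, and pairing it with $\xi$ via $\mg$ yields a global scalar 1-form whose local expression is $\b$. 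Equivalently, one checks directly that under a change of natural coordinates the tensorial transformation of $\xi^i$ combined with the inhomogeneous transformation law of the Christoffel symbols leaves $\b$ invariant.

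For step (iii), I would differentiate locally. Using metric compatibility $\nabla\mg=0$ together with the expansion $\bd^\nabla\xi^j=\bd\xi^j+\Ga^j_{kl}\xi^l\bd x^k$, one obtains an expression of the form
\begin{equation*}
\om=\bd p_i\wedge\bd x^i+\frac{\hbar}{2\bi}\,\mg_{ij}\,\bd^\nabla\xi^i\wedge\bd^\nabla\xi^j+(\text{curvature term quadratic in }\xi\text{ and }\bd x).
\end{equation*}
In the local frame $(\bd x^i,\bd p_j,\bd\xi^k)$ the matrix of $\om$ is block upper-triangular, with diagonal blocks given by the standard symplectic matrix on the $(x,p)$-directions and by the symmetric non-degenerate matrix $\mg_{ij}$ on the odd $\xi$-directions (symmetric, not antisymmetric, because the Koszul sign rules on the graded manifold render $\bd\xi^i\wedge\bd\xi^j=\bd\xi^j\wedge\bd\xi^i$). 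Non-degeneracy of $\mg$ therefore forces non-degeneracy of $\om$.

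The hardest part will be the consistent bookkeeping of Koszul signs coming from the odd parity of $\xi^i$, which inverts the usual symmetry conventions for wedge products of their differentials. Once these are pinned down, the statement can also be read as a specialization of Rothstein's classification \cite{Rot90}: an even symplectic structure on the split supermanifold $T^*M\times_M\Pi TM$ is equivalent to the joint datum of the canonical symplectic form on $T^*M$, a fiber metric on $TM$ (here $\mg$), and a connection on $TM$ compatible with this metric (here the Levi-Civita connection), which together reproduce the potential $\a$ displayed above.
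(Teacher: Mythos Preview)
The paper does not supply a proof of this proposition: it is quoted from \cite{Mic10a}, and the only justification given in the present paper is the sentence immediately preceding it, namely that Rothstein's classification \cite{Rot90} of even symplectic structures on supermanifolds identifies such a structure on $\cM$ with the triple (canonical symplectic form on $T^*M$, metric on $TM$, compatible connection), from which the displayed $\a$ follows. Your proposal is therefore not competing with a proof in the paper but rather filling in what the paper outsources.

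Your direct argument is sound. Steps (i)--(iii) are exactly how one verifies by hand that $\om=\bd\a$ is a well-defined exact even symplectic form: the global definition of $\b$ via the tautological odd section and $\bd^\nabla$ is correct, closedness is automatic, and the block-triangular analysis of the matrix of $\om$ in the frame $(\bd x^i,\bd p_j,\bd\xi^k)$ gives non-degeneracy from that of $\mg$. Your remark that the odd parity of $\xi^i$ makes $\bd\xi^i\wedge\bd\xi^j$ symmetric is the key sign issue, and you have it right. The final paragraph of your proposal, invoking Rothstein, is precisely the route the paper itself gestures at; so you have in fact given both the abstract argument the paper relies on and a more explicit local verification that the paper omits.
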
 

The introduction of the factor $\hbar/\bi$ into the symplectic form $\om$ deserves some explanations. 
First, from a physical point of view, a symplectic form should have the dimension of an action. 
The odd variables $\xi$ being dimensionless, 
we have to insert a constant with the dimension of an action in front of the $\xi$-term. 
We choose~$\hbar$ since our aim is quantization. To deal with pseudo-classical spinning particles \cite{BMa77,Rav80,Mic09,KCa11}, $\hbar$ should be replaced by a characteristic action of the studied system. Second, the symplectic form should be real. For quantization purposes, we define the real structure on $\cM$ by the involution anti-automorphism 
\begin{equation}\label{conjugation}
\bar{\cdot}:\cO(\cM)\rightarrow\cO(\cM)
\end{equation}
equal to the identity on coordinates. In particular, we get $\overline{\xi^i\xi^j}=\bar{\xi^j}\bar{\xi^i}=-\xi^i\xi^j$ and $\a$, $\om$ are then real.

Since the symplectic form $\om$ is of degree $2$, $(\cM,\om)$ fits also into the classification of graded symplectic manifold of degree $2$, performed in \cite{Roy02}  by Roytenberg. In addition, the Poisson bracket associated to $\om$ lowers the degree by $2$. 
Thanks to Leibniz property, it is completely determined by the following equalities 
$$
\{X,Y\}=\nabla_{[X,Y]}+\frac{\hbar}{2\bi}R^k_{~lij}\xi_k\xi^lX^iY^j, \quad \{X,\xi\}=\nabla_X\xi, 
\quad \{\xi,\xi'\}=-\frac{\bi}{\hbar}\mg^{-1}(\xi,\xi'),
$$
where $X,Y\in\Pol_1(T^*M)$ identify to vector fields, $\xi,\xi'\in\Om^1(M)$, $\mg^{-1}$ is the metric induced by $\mg$ on $T^*M$, $\nabla$ is the Levi-Civita connection and $(R^k_{~l})_{ij}=[\nabla_i,\nabla_j]$ are the components of its Riemann tensor.

\subsection{Spinor differential operators}

Let $(M,\mg)$ be a pseudo-Riemannian spin manifold of even dimension $n$. Its spinor bundle $S$ satisfies $\End S\cong\bbCl(M,\mg)$, where $\bbCl(M,\mg)$ is the complex Clifford bundle of $(M,\mg)$. We introduce the Clifford quantization map
\begin{equation}\label{gamma}
\g:\Om^\bbC(M)\rightarrow\Ga(\bbCl(M,\mg)),
\end{equation} 
which satisfies the Clifford relation $\g(\xi)\g(\xi')+\g(\xi')\g(\xi)=-2\mg^{-1}(\xi,\xi')$ for any $\xi,\xi'\in\Om^\bbC_1(M)$. 
The algebra $\D(M,S)$ of differential operators acting on sections of $S$ is an algebra filtered by the order of derivations, over the subalgebra $\Ga(\End S)\cong\Ga(\bbCl(M,\mg))$ of zeroth order operators.
It inherits a $\bbZ_2$-grading from $\Ga(\bbCl(M,\mg))$, so that 
the commutator of homogeneous elements is given by $[A,B]=AB-(-1)^{|A||B|}BA$. 
The filtration of $\bbCl(M,\mg)$ and a spinor connection on $S$ allow to define 
a naive bifiltration of $\D(M,S)$ by the following subspaces, indexed by $k\in\bbN$ and $\k\leq n$,
$$
\D^\nabla_{k,\k}(M,S)=\text{span}\{\g(\xi_1)\cdots\g(\xi_\m)\nabla_{X_1}\cdots\nabla_{X_m}\,|\,m\leq k \text{ and }\m\leq \k\}.
$$
Here, $\xi_i$ pertains to $\Om_1^\bbC(M)$ and $\nabla_{X_i}$ is the spinor covariant derivative along the vector field $X_i$. 
Such a bifiltration behaves badly with respect to the commutator, indeed $[\nabla_X,\nabla_Y]$ is of degree $2$ in $\g(\xi)$. Moreover, it depends on the choice of connection, as $\nabla_X=X^i\partial_i+X^i\Ga_{ij}^k\g(\xi^j)\g(\xi_k)$. 
These difficulties are overcome by the following definition, introduced in \cite{Mic10a},
\begin{equation}\label{filtration:Dkk}
\D_{k,\k}(M,S)=\text{span}\{\g(\xi_1)\cdots\g(\xi_\m)\nabla_{X_1}\cdots\nabla_{X_m}\,|\,m\leq k \text{ and } 2m+\m\leq 2k+\k\}.
\end{equation}
For such a bifiltration, if $A$ and $B$ are two operators of orders $(k,\k)$ and $(k',\k')$, then their product $AB$ is of order $(k+k',\k+\k')$ and their commutator $[A,B]$ splits into two terms of orders $(k+k'-1,\k+\k')$ and $(k+k',\k+\k'-2)$.
Thus, the associated bigraded algebra
$$
\bgr\D(M,S):=\bigoplus_ {(k,\k)\in\bbN\times\llbracket 0,n\rrbracket}\D_{k,\k}(M,S)\,/\,\left(\D_{k-1,\k}(M,S)+\D_{k,\k-1}(M,S)\right)
$$
is a graded-commutative algebra.
The above bifiltration  is built from two filtrations on $\D(M,S)$, the usual one by the order of derivations 
and a new one given by the subspaces
\begin{equation}\label{HamFil}
\D_{[\ell]}(M,S)=\bigcup_{2k+\k=\ell}\D_{k,\k}(M,S),
\end{equation}
with $\ell\in\bbN$. In particular, it assigns the same order to $\nabla_X$ and $\g(\xi)\g(\xi')$. 
This new filtration is compatible with the commutator so that the associated graded algebra, 
$$
\gr\D(M,S):=\bigoplus_{\ell\in\bbN}\D_{[\ell]}(M,S)/\D_{[\ell-1]}(M,S),
$$ 
is supercommutative and endowed with a Poisson bracket of degree $-2$. 
In consequence, it is referred thereafter as the Hamiltonian filtration. 

\begin{prop}\cite{Mic10a}
We get the following isomorphisms 
\begin{eqnarray*}
\bgr\D(M,S)&\cong & \bigoplus_{(k,\k)\in\bbN\times\llbracket 0,n\rrbracket}\Ga(\cS^k TM\otimes\Lambda^\k T^*M),\\ 
\gr\D(M,S)&\cong & \bigoplus_{\ell\in\bbN} \cO_{[\ell]}(\cM),
\end{eqnarray*}
of respectively bigraded algebras and graded Poisson algebras. 
\end{prop}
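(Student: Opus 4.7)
The plan is to construct both isomorphisms by principal-symbol maps defined on generators and to check they descend to well-defined algebra isomorphisms on the associated graded objects.

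For the bigraded piece, I would set locally
$$\s_{k,\k}\bigl(\g(\xi_1)\cdots\g(\xi_\k)\nabla_{X_1}\cdots\nabla_{X_k}\bigr)=(X_1\cdots X_k)\otimes(\xi_1\wedge\cdots\wedge\xi_\k)\in\cO_{k,\k}(\cM),$$
and show that the three relations needed for well-definedness all land in $\D_{k-1,\k}+\D_{k,\k-1}$: (i) the Clifford anticommutator $\g(\xi)\g(\xi')+\g(\xi')\g(\xi)=-2\mg^{-1}(\xi,\xi')$ drops the bidegree to $(k,\k-2)$; (ii) $[\nabla_{X_i},\nabla_{X_j}]=\nabla_{[X_i,X_j]}+R(X_i,X_j)$ splits into terms of bidegrees $(k-1,\k)$ and $(k,\k-2)$; (iii) the commutator of $\g(\xi)$ with $\nabla_X$ produces $\pm\g(\nabla_X\xi)\in\D_{k-1,\k}$. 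Hence $\s_{k,\k}$ descends to $\bgr_{k,\k}\D(M,S)$ and enforces the required symmetry in the $X_i$ and antisymmetry in the $\xi_j$. Surjectivity is immediate from local decomposability. For injectivity, in a chart every operator admits a unique ordered expansion $\sum f_{I,J}\,\g(dx^{j_1})\cdots\g(dx^{j_\m})\,\partial_{i_1}\cdots\partial_{i_m}$ with strictly increasing $J$ and weakly increasing $I$, and $\s_{k,\k}$ reads off the top-bidegree coefficients. Compatibility with products follows since any reordering required to put the product of two normal forms into normal form is absorbed into $\D_{k-1,\k}+\D_{k,\k-1}$.

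For the graded Poisson isomorphism, setting $\ell=2k+\k$ realizes the Hamiltonian filtration as the diagonal of the bifiltration, so
$$\gr_\ell\D(M,S)\cong\bigoplus_{2k+\k=\ell}\bgr_{k,\k}\D(M,S)\cong\bigoplus_{2k+\k=\ell}\cO_{k,\k}(\cM)=\cO_{[\ell]}(\cM)$$
as graded-commutative algebras by the previous part. For the bracket, given $A\in\D_{k,\k}(M,S)$ and $B\in\D_{k',\k'}(M,S)$ the commutator $[A,B]$ lies in $\D_{k+k'-1,\k+\k'}+\D_{k+k',\k+\k'-2}$, both terms of Hamiltonian order $\ell+\ell'-2$; hence the commutator descends to a Poisson bracket of degree $-2$ on $\gr\D(M,S)$. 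To match it with the Rothstein--Roytenberg bracket it is enough to verify the three generating cases: $\{\g(\xi),\g(\xi')\}$ comes from the Clifford anticommutator and reproduces $\mg^{-1}(\xi,\xi')$ after normalization; $[\nabla_X,\g(\xi)]=\g(\nabla_X\xi)$ gives $\{X,\xi\}=\nabla_X\xi$; and the curvature piece of $[\nabla_X,\nabla_Y]$ produces the curvature term of $\{X,Y\}$.

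The hard part is the curvature identification: one must check that the spinor curvature $\tfrac14 R_{ijkl}\g(\xi^k)\g(\xi^l)$ arising from $[\nabla_X,\nabla_Y]$ translates, through the specific normalization of $\om$ involving the constant $\hbar/\bi$, into exactly the Rothstein Poisson contribution $\tfrac{\hbar}{2\bi}R^k{}_{lij}\xi_k\xi^l X^iY^j$. This amounts to careful bookkeeping with index-raising and the chosen constants, but is the only point where the specific symplectic structure intervenes beyond the purely combinatorial degree count.
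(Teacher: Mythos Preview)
The paper does not prove this proposition: it is stated with the citation \cite{Mic10a} and no argument is supplied in the present text beyond the surrounding description of the bifiltration and the subsequent normalization $\sigma_2(\nabla_{\partial_i})=p_i$, $\sigma_1(\g(\xi^i))=(\hbar/2\bi)^{1/2}\xi^i$. So there is no in-paper proof to compare against.

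Your reconstruction is the natural one and is essentially sound. Two small points are worth tightening. First, the identification $\gr_\ell\D(M,S)\cong\bigoplus_{2k+\k=\ell}\bgr_{k,\k}\D(M,S)$ is not purely formal from the diagonal inclusion of filtrations: you get a surjection from the direct sum of bigraded pieces onto $\gr_\ell$, but injectivity requires the same local normal-form argument you invoke for the bigraded isomorphism (i.e.\ if a sum $\sum_{2k+\k=\ell}A_{k,\k}$ lies in $\D_{[\ell-1]}$, then each $A_{k,\k}$ already lies in $\D_{k-1,\k}+\D_{k,\k-1}$). Second, your identification of the curvature bookkeeping as the only delicate step is correct; the paper's chosen normalization constants are exactly what make the spinor curvature $\tfrac14 R_{ijkl}\g^k\g^l$ in $[\nabla_X,\nabla_Y]$ match the Rothstein term $\tfrac{\hbar}{2\bi}R^k{}_{lij}\xi_k\xi^l X^iY^j$ under $\sigma$.
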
   

Accordingly, we introduce the Hamiltonian principal symbol maps defined by the following projections,
\begin{equation}\label{symbolDS}
\sigma_\ell:\D_{[\ell]}(M,S)\rightarrow \cO_{[\ell]}(\cM),
\end{equation}
which are normalized by $\sigma_2(\nabla_{\partial_i})=p_i$ and $\sigma_1\left(\frac{\g(\xi^i)}{\sqrt{2}}\right)=\left(\frac{\hbar}{\bi}\right)^{1/2}\xi^i$. 
They satisfy the usual relations: 
\begin{equation}\label{PcpalSymbol}
\sigma_{\ell}(A)\sigma_{\ell'}(B)=\sigma_{\ell+\ell'}(AB) \quad \text{and} \quad\{\sigma_\ell(A),\sigma_{\ell'}(B)\}=\sigma_{\ell+\ell'-2}([A,B]),
\end{equation} 
for all differential operators $A,B$ of Hamiltonian orders $\ell$ and $\ell'$.

We denote by $|\Lambda|^\l:=|\Lambda^\mathrm{top}T^*M|^{\otimes\l}$ the line bundle of $\l$-densities,  
with $\l\in\bbR$.
Note that a metric provides a canonical trivialization 
of $|\Lambda|^\l$ via $|\vol_\mg|^\l$, with $\vol_\mg$ the volume form defined by $\mg$.
Since $1$-densities are the natural objects for integration, 
a pseudo-Hermitian pairing $\langle\cdot,\cdot\rangle_S$ on $S$ yields a scalar product between 
compactly supported sections  
$\phi\in\Ga(S\otimes|\Lambda|^\l)$ and $\psi\in\Ga(S\otimes|\Lambda|^{1-\l})$, given by $\int_M\langle\phi,\psi\rangle_S$.
The definition of the adjoint $D^*$ of an operator $D:\Ga(S\otimes|\Lambda|^\l)\rightarrow\Ga(S\otimes|\Lambda|^{1-\l})$ 
follows,
\begin{equation}\label{adjunction}
\int_M\langle\phi,D\psi\rangle_S=\int_M\langle D^*\phi,\psi\rangle_S
\end{equation}
for all compactly supported sections $\phi,\psi\in\Ga(S\otimes|\Lambda|^\l)$. 
In the sequel, we choose $\langle\cdot,\cdot\rangle_S$ such that 
$\g(\overline{\eta})=\g(\eta)^*$, for all $\eta\in\Om^\bbC(M)$.

\subsection{The conformal Lie algebra $\fkg$}

Let $(M,\mg)$ be a pseudo-Riemannian manifold of signature $(p,q)$. A conformal Killing vector field on $M$ is a vector field $X$ whose Lie derivative action preserves the direction of the metric, i.e.\ $L_X\mg$ is proportional to $\mg$.
By definition, $(M,\mg)$ is conformally flat if $\mg=F\eta$ locally, where $F$ is a positive function and $\eta$ is the flat metric of signature $(p,q)$. Then, the Lie algebra $\fkg$ of local conformal Killing vector fields of $(M,\mg)$ is of maximal dimension and identifies with $\fo(p+1,q+1)$. 
On $(\bbR^{p,q},F\eta)$, the Lie algebra $\fkg\cong\fo(p+1,q+1)$ of conformal Killing vector fields is given explicitly, in terms of its generators, by
\begin{eqnarray} \nonumber
X_i&=&\partial_i,\\ \nonumber
X_{ij}&=& x_i\partial_j- x_j\partial_i,\\ \label{AlgLieConf}
X_0&=& x^i\partial_i,\\ \nonumber
\bar{X}_i&=&x_jx^j\partial_i - 2x_ix^j\partial_j,
\end{eqnarray}
for $i,j=1,\ldots,n$, the indices being lowered using $\eta$. The latter vector fields provide the local realization of $\fkg$ on $(M,\mg)$, endowed with a $|1|$-gradation $\fkg:=\fkg_{-1}\oplus\fkg_0\oplus\fkg_1$ given by the polynomial degree of the coefficients. With respect to the metric $\eta$, $\fkg_{-1}$ is the Lie subalgebra of translations, $\fkg_0$~splits into the Lie subalgebra $\fo(p,q)$ of rotations and its center, generated by the Euler vector field~$X_0$, and~$\fkg_1$~is~the subspace of conformal inversions. We denote the Lie algebra of isometries by $\e(p,q):=\fo(p,q)\ltimes\fkg_{-1}$, and the one of similitudes by $\ce(p,q):=\bbR X_0\ltimes\e(p,q)$.

\subsection{The $\fkg$-module of tensorial symbols $\sT^\d$}

Any vector field $X$ on~$M$ admits a canonical lift to the linear frame bundle via its first order jet $(\partial_iX^j)$. This leads to an action of $\Vect(M)$ by Lie derivatives on any associated bundle, e.g. $\ell^\l_X=X^i\partial_i+\l\partial_i X^i$ on $\l$-densities.
The tensor bundle $\cS TM\otimes\Lambda T^*M$ is also associated to the principal frame bundle, therefore it admits an action of $\Vect(M)$ by Lie derivatives. Under the identification of such tensors with functions on the supercotangent bundle $\cM$, the Lie derivative corresponds to the following lift of $\Vect(M)$ to $\Vect(\cM)$
\begin{equation}\label{Lift}
\Vect(M)\ni X\mapsto \hat{X}:=X^i\partial_i+(\partial_iX^j)(\xi^i\partial_{\xi^j}-p_j\partial_{p_i}),
\end{equation}
where $(x^i,p_i,\xi^i)$ denotes a local natural coordinate system on $\cM$ and $(\partial_i,\partial_{p_i},\partial_{\xi^i})$ are the corresponding partial derivatives. Clearly, this $\Vect(M)$-action preserves the bigradation of the tensor algebra $\Ga(\cS TM\otimes\Lambda T^*M)$.
\begin{defi}\label{defi:T}
Let $\d\in\bbR$. We set 
$\bbT^\d:=\bigoplus_{\k\leq n}\cS TM\otimes\Lambda^\k T^*_\bbC M\otimes|\Lambda|^{\d-\frac{\k}{n}}$. 
The bigraded module of tensorial symbols is the bigraded space of sections $\sT^\d:=\Ga(\bbT^\d)$ 
endowed with the natural $\Vect(M)$-action, given on the $\k$-component by
$$
\bbL_X^\d=\hat{X}\otimes\Id+\Id\otimes\ell^{\d-\k/n}_X.
$$
\end{defi}

The usual space of symbol of $\D(M,S)$ is its graded algebra for the filtration by the order of derivations. It identifies with $\Pol(T^*M) \otimes_{\Cinfty}\Ga(\bbCl(M,\mg))$ and is related to $\sT^\d$ as follows.

\begin{prop}\label{prop:symb-symbT}
Let $(M,\mg)$ be a pseudo-Riemannian manifold and $\Sigma:=\xi^i\partial_{\xi^i}$ be the Euler vector field of $T[1]M$. 
The following map
\begin{equation}\label{Invgamma}
 \Id\otimes|\vol_\mg|^{\frac{\Sigma}{n}}\g:\sT^\d\rightarrow\Ga(\cS TM\otimes|\Lambda|^\d)\otimes_{\Cinfty}\Ga(\bbCl(M,\mg)),
\end{equation}
is a linear isomorphism which depends only on the conformal class of $\mg$. 
\end{prop}

\subsection{The $\fkg$-module of Hamiltonian symbols $\sS^\d$}\label{Sec:S}

In contradistinction with the cotangent bundle case, the natural Lift \eqref{Lift} of $\Vect(M)$ by Lie derivatives 
does not lead to Hamiltonian vector fields on $\cM$. Besides, the preservation of the potential $1$-form $\a$, 
see \eqref{alpha}, is not a strong enough condition to determine a unique lift of $X\in\Vect(M)$ to $\cM$. 
In \cite{Mic10a}, we ask in addition for preservation of the direction of the $1$-form $\b=\mg_{ij}\xi^idx^j$. 
Both conditions can be satisfied only for vector fields $X\in\fkg$, 
and fix a unique lift 
\begin{equation}\label{Ham-Lift}
\fkg\ni X\mapsto \tilde{X}:=\hat{X}+\frac{\Sigma}{n}+\frac{\hbar}{2\bi}\left(R^k_{~lij}\xi_k\xi^lX^j+\nabla_i(\partial_{[l}X_{k]})\xi^k\xi^l\right)\partial_{p_i},
\end{equation}
the brackets denoting skew-symmetrization. The action of $\tilde{X}$ clearly preserves the gradation $\cO(\cM)=\bigoplus_{\ell\in\bbN}\cO_{[\ell]}(\cM)$ defined in \eqref{Olgraduation}. 

\begin{defi}
Let $\d\in\bbR$. 
The graded $\fkg$-module  of Hamiltonian symbols is the space $\sS^\d:=\cO(\cM)\otimes_\Cinfty\Ga(|\Lambda|^\d)$, 
endowed with the following $\fkg$-action
$$
\sL_X^\d=\tilde{X}\otimes\Id+\Id\otimes\ell^\d_X,
$$
and the gradation is given by $\sS^\d=\bigoplus_{\ell\in\bbN}\sS^\d_{[\ell]}$, with
$\sS^\d_{[\ell]}:=\cO_{[\ell]}(\cM)\otimes_\Cinfty\Ga(|\Lambda|^\d)$ (see \eqref{Olgraduation}).
\end{defi}

Each of the $\fkg$-submodule $\sS^\d_{[\ell]}$ admits a filtration by the p-degree, i.e. by the spaces
$$
\sS^\d_{k,\k}:=\bigoplus_{j\in\bbN}
\cO_{k-j,\k+2j}(\cM)\otimes_\Cinfty\Ga(|\Lambda|^\d),
$$
with $2k+\k=\ell$. Here, by convention, $\cO_{k,\k}(\cM)=\{0\}$ if $k<0$ or $\k<0$. 
Explicitly, if $\ell=2\ell_0$ is even, the filtration of $\sS^\d_{[\ell]}$ takes the form
\begin{equation}\label{OrderFil}
\sS^\d_{\ell_{0}-\frac{n}{2},n}\subset\sS^\d_{\ell_0-\frac{n-2}{2},n-2}\subset\cdots\subset\sS^\d_{\ell_0,0}=\sS^\d_{[\ell]}.
\end{equation}
In view of \eqref{Ham-Lift}, the spaces $\sS^\d_{k,\k}$ 
are preserved by the $\fkg$-action and the associated  graded $\fkg$-module 
$\gr\sS^\d_{[\ell]}$ is isomorphic to $\bigoplus_{2k+\k=\ell}\sT^\d_{k,\k}$.
Hence, $\sT^\d$ is isomorphic to the bigraded module $\bgr\sS^\d$, associated with the bifiltration
defined by the spaces $\sS^\d_{k,\k}$.
The principal tensorial symbol maps, defined by projections to the highest $p$-degree component, read then as
\begin{equation}\label{symbolST}
\ve_{k,\k}:\sS^\d_{k,\k}\rightarrow\sT^\d_{k,\k}.
\end{equation}

\subsection{The $\fkg$-module of spinor differential operators $\Dslm$}

Contrary to tensor bundles, the spinor bundle does not admit a canonical Lie derivative. 
Following Kosmann \cite{Kos72}, we set $L_X:=\nabla_X+\g(\bd X^\flat)/2$ 
with $X^\flat$ the $1$-form deduced from $X$ by the metric $\mg$. 
This formula gives a representation of Lie algebras only if restricted to conformal Killing vector fields $X\in\fkg$. 
The induced Lie derivative on $\Ga(S\otimes|\Lambda|^\l)$ is given by $L^\l_X:=L_X\otimes\Id+\Id\otimes\ell_X^\l$.   

\begin{defi}
Let $\l,\m\in\bbR$. The $\fkg$-module $\Dslm:=\D(M;S\otimes|\Lambda|^\l,S\otimes|\Lambda|^\m)$  
is the space of spinor differential operators where
the action of $X\in\fkg$ on an element $A\in\Dslm$ is given by the following Lie derivative
$$
\LD A=L^\m_X A-A L^\l_X.
$$
\end{defi}

The bifiltration of $\D(M,S)$, defined by the subspaces $\D_{k,\k}(M,S)$ (see \eqref{filtration:Dkk}),
endows $\Dslm$ with a bifiltration which is preserved by the above $\fkg$-action $\ccL^{\l,\m}$. 
Hence it induces $\fkg$-modules structure on $\bgr\Dslm$, $\gr\Dslm$ and on the usual symbol space $\Ga(\cS TM\otimes|\Lambda|^{\m-\l})\otimes_\Cinfty\Ga(\bbCl(M,\mg))$. We compare them below.

\begin{prop}\label{IsoGrDST}
For $\d=\m-\l$, we have the following isomorphisms of (bi)graded $\fkg$-modules: 
 $\gr\Dslm\cong\sS^\d$ and $\bgr\Dslm\cong\bgr\sS^\d\cong \sT^\d$. 
Moreover, the Map \eqref{Invgamma} turns into a $\fkg$-module isomorphism between $\sT^\d$ 
and $\Ga(\cS TM\otimes|\Lambda|^{\d})\otimes_\Cinfty\Ga(\bbCl(M,\mg))$. 
\end{prop}

\subsection{Explicit formul{\ae} for actions of $\fkg$}

We restrict here to the local model $(\bbR^{p,q},F\eta)$ of a conformally flat manifold. 
All the density bundles can be trivialized by powers of the $1$-density
$|\vol_\eta|=|dx^1\wedge\cdots\wedge dx^n|$. Moreover,
the supercotangent bundle admits natural cartesian coordinates $(x^i,p_i,\xi^i)$ which transform tensorially, $p_i$ identifying to $\partial_i$ and $\xi^i$ to $dx^i$. The potential $1$-form $\a$ of $\cM$ is written in terms of such coordinates in Eq.\ \eqref{alpha}. 
The supercotangent bundle carries also Darboux coordinates $(x^i,\tp_i,\txi^i)$, 
such that $\a=\tp_i\bd x^i+\frac{\hbar}{2\bi}\eta_{ij}\txi^i\bd \txi^j$ . 
The odd coordinates $(\txi^i)$ form an orthonormal frame of $T^*M$ 
and we have $\tp_i=p_i+(\hbar/2\bi)\Ga_{ij}^{k}\xi^j\xi_k$. 
In terms of spinor differential operators, the natural coordinates satisfy 
$p_i=\sigma_2(\nabla_i)$ and $\g(\xi^i)\g(\xi^j)+\g(\xi^j)\g(\xi^i)=-2F^{-1}\eta^{ij}$, 
whereas the Darboux coordinates satisfy $\tp_i=\sigma_2(\partial_i)$ and $\g(\txi^i)\g(\txi^j)+\g(\txi^j)\g(\txi^i)=-2\eta^{ij}$. 
Using both coordinate systems, we introduce a linear isomorphism, 
\begin{eqnarray}\label{Tenseur_Poly_Moments}
\cF:\sT^\d&\longrightarrow& \sS^\d\\ \nonumber
P^{i_1\dots i_k}_{j_1\ldots j_\k}(x)\, \xi^{j_1}\ldots \xi^{j_\k}  p_{i_1}\ldots p_{i_k} &\longmapsto&
|\vol_\eta|^{\frac{\k}{n}}P^{i_1\dots i_k}_{j_1\ldots j_\k}(x)\,  \txi^{j_1}\ldots \txi^{j_\k}  \tp_{i_1}\ldots \tp_{i_k}, 
\end{eqnarray} 
where $P^{i_1\dots i_k}_{j_1\ldots j_\k}\in\Ga(|\Lambda|^{\d-\k/n})$.
Remark that $|\vol_\eta|^{1/n}\txi^i=|\vol_{F\eta}|^{1/n}\xi^i$.
In addition, we introduce another linear isomorphism, called the normal ordering, 
{\begin{eqnarray}\label{OrdreNormal}
\cN:\sS^\d & \longrightarrow &\Dslm   \\ \nonumber
P^{i_1\ldots i_k}_{j_1\ldots j_\k}(x)\txi^{j_1}\ldots\txi^{j_\k}\,\tp_{i_1}\ldots\tp_{i_k} &\longmapsto&  P^{i_1\ldots i_k}_{j_1\ldots j_\k}(x) \frac{\tg^{j_1}}{\sqrt{2}}\ldots\frac{\tg^{j_\k}}{\sqrt{2}}\,\frac{\hbar}{\bi}\partial_{i_1}\ldots\frac{\hbar}{\bi}\partial_{i_k},
\end{eqnarray}}
where $P^{i_1\dots i_k}_{j_1\ldots j_\k}\in\Ga(|\Lambda|^{\d})$, $\tg^j$ denotes $\g(\txi^j)$ and $\d=\m-\l$. 

\begin{prop}\label{prop:bbL-sL-ccL}
The maps $\cF$ and $\cN$ are isomorphisms of $\ce(p,q)$-modules and 
satisfy $\ve_{k,\k}\circ\cF=\Id$ on $\sT^\d_{k,\k}$ and 
$\sigma_\ell\circ\cN=\left(\frac{\hbar}{\bi}\right)^{\ell/2}\Id$ on $\sS^\d_{[\ell]}$. 
Moreover, for any $X\in\fkg$, we have 
\begin{eqnarray}\label{EqLtLd}
\cF^{-1}\Ld\cF&=& \Lt-\frac{\hbar}{2\bi} \xi_k\xi^j(\partial_i\partial_j X^k)\partial_{p_i},\\ \label{EqLdLD}
\qquad(\cN\cF)^{-1}\LD\cN\cF&=&\cF^{-1}\Ld\cF+\frac{\hbar}{4\bi}(\partial_j\partial_kX^i)\left(-2p_i\partial_{p_j}+\chi^j_i\right)\partial_{p_k}
-\frac{\hbar}{\bi}\l\partial_j(\partial_iX^i)\partial_{p_j},
\end{eqnarray}
where $\chi^j_i=\xi^j\partial_{\xi^i}-\xi_i\partial_{\xi_j}+\half\partial_{\xi_j}\partial_{\xi^i}$.
\end{prop}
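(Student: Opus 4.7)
The plan is to prove the three content claims of the proposition in turn: first the isomorphism and right-inverse properties of $\cF$ and $\cN$, then the intertwining formula \eqref{EqLtLd}, then \eqref{EqLdLD}, from which the $\ce(p,q)$-equivariance follows immediately.

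The maps $\cF$ and $\cN$ are manifestly linear isomorphisms: $\cF$ is the composition of the fiberwise linear change of coordinates $(p_i,\xi^j)\mapsto (\tp_i,\txi^j)$ with multiplication by the density factor $|\vol_\eta|^{\Sigma/n}$, and $\cN$ is standard normal ordering, which is a linear section of the total Hamiltonian symbol map by construction. That they are right inverses of the principal symbol maps $\ve_{k,\kappa}$ and $(\hbar/\bi)^{\ell/2}\sigma_\ell$ respectively follows by inspecting top-bidegree terms in Darboux coordinates, using the normalizations $\sigma_2(\nabla_{\partial_i})=p_i$ and $\sigma_1(\g(\xi^i))=(\hbar/2\bi)^{1/2}\xi^i$ fixed in \eqref{symbolDS}.

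For formula \eqref{EqLtLd}, I would write both $\Ld$ and $\Lt$ explicitly as differential operators on scalar local representatives and compare them through $\cF$. Substituting $\tp_i = p_i + \frac{\hbar}{2\bi}\Ga_{ij}^k\xi^j\xi_k$ and $\txi^i = F^{1/2}\xi^i$ converts the terms of $\Ld$ piece by piece. The tensorial Lie derivative $\Lt$ emerges from the combination of $\hat X$ with the density weights, the $F^{1/2}$ rescaling supplying the $-\kappa/n$ shift that upgrades $\ell^\d_X$ into $\ell^{\d-\kappa/n}_X$. The Christoffel shift in $\tp$ interacts with $\partial_{p_i}$ inside $\hat X$ to generate a connection correction, which combines with the Riemann term and the $\nabla_i\partial_{[l} X_{k]}$ term of $\tilde X$ via the conformal Killing condition on $X$ and the explicit Levi-Civita curvature of $\mg=F\eta$. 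The only piece that fails to cancel is precisely the flat second-derivative remainder $-\frac{\hbar}{2\bi}\xi_k\xi^j(\partial_i\partial_j X^k)\partial_{p_i}$.

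For formula \eqref{EqLdLD}, I would insert Kosmann's expression $L_X = \nabla_X + \frac{1}{2}\g(\bd X^\flat)$ into $\LD A = L_X^\m A - A L_X^\l$ and conjugate by $\cN\cF$. The principal symbol map together with the compatibility \eqref{PcpalSymbol} reduces the leading term to $\cF^{-1}\Ld\cF$, and the $\hbar$-order corrections split into three distinct contributions: normal-ordering the iterated $\partial$'s inside $\nabla_X$ yields, after Leibniz, the piece $-2p_i\partial_{p_j}\partial_{p_k}$ paired with the coefficient $\frac{\hbar}{4\bi}(\partial_j\partial_k X^i)$; the Kosmann Clifford correction $\frac{1}{2}\g(\bd X^\flat)$, once brought into normal order, produces exactly the combination $\xi^j\partial_{\xi^i}-\xi_i\partial_{\xi_j}+\frac{1}{2}\partial_{\xi_j}\partial_{\xi^i}$ packaged into $\chi^j_i$; and the left--right asymmetry of $L^\m_X A - A L^\l_X$ survives at subleading order only through an additional derivative acting on the divergence $\partial_i X^i$, producing the final $\l$-dependent term, whose coefficient is $\l$ rather than $\m$ because one extra commutator is needed to move $L^\l_X$ past $A$. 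Once \eqref{EqLtLd} and \eqref{EqLdLD} are established, the $\ce(p,q)$-equivariance of $\cF$ and $\cN\cF$ follows at once: the generators of $\ce(p,q)$ listed in \eqref{AlgLieConf} are affine in $x$, so $\partial_i\partial_j X^k = 0$ and $\partial_j(\partial_i X^i) = 0$, and both correction terms vanish identically. The main obstacle will be the cancellation in \eqref{EqLtLd}: one must verify that the curvature-dependent content of $\tilde X$ is precisely annihilated by the derivative of the Christoffel coefficient of the shift $\tp - p$, so that only the flat second derivative of $X$ persists, which demands careful use of the conformal Killing equation together with the explicit Levi-Civita connection and curvature of $\mg = F\eta$.
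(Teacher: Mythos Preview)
The paper does not actually prove this proposition: Section~2 is explicitly introduced as ``a recollection of our previous work \cite{Mic10a}'', and Proposition~\ref{prop:bbL-sL-ccL} is stated there without proof, the computations having been carried out in that earlier reference. So there is no in-paper proof to compare your proposal against.

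That said, your outline is the natural and essentially only strategy: write out $\Ld$ and $\LD$ in the Darboux coordinates $(x^i,\tp_i,\txi^i)$, pull them back through $\cF$ and $\cN\cF$ term by term, and identify what survives beyond $\Lt$. Your identification of the mechanism for each correction term is correct in spirit --- the Christoffel shift in $\tp$ interacting with $\partial_{p_i}$, the Kosmann Clifford term producing $\chi^j_i$ after normal ordering, and the $\l$/$\m$ asymmetry yielding the divergence term --- and the final observation that $\ce(p,q)$-equivariance follows from $\partial_i\partial_j X^k=0$ for affine generators is exactly right. What you have written, however, is a plan rather than a proof: the actual cancellations in \eqref{EqLtLd} (curvature term of $\tilde X$ against the derivative of the Christoffel shift, leaving only the flat $\partial_i\partial_j X^k$ piece) and the bookkeeping in \eqref{EqLdLD} (tracking the Clifford reordering that turns $\g(\bd X^\flat)$ into $\chi^j_i$) are genuinely laborious and you have not performed them. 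If you intend this as a complete proof you must at minimum carry out the $\bar X_i$ case explicitly, since that is the only generator with nonvanishing second derivatives and hence the only place where the formulas have content.
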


For all $X\in\fkg$, the explicit expressions of the infinitesimal actions $\Ld$ and $\LD$ 
can be deduced from the Lie derivatives $\bbL^\d_X$. 
In the trivialization of $\bbT^\d$ given by suitable power of $\vol_\eta$,
for $X$ a generator of $\fkg$ as in \eqref{AlgLieConf}, we have
\begin{eqnarray}\nonumber
\bbL_{X_i}^\d&=&\partial_i,\\ \label{ActionT}
\bbL_{X_{ij}}^\d&=& x_i \partial_j- x_j \partial_i-(p_i\partial_{p^j}- p_j\partial_{p^i})+\xi_i\partial_{\xi^j}-\xi_j\partial_{\xi^i},\\  \nonumber
\bbL_{X_0}^\d&=& x^i \partial_i-p_i\partial_{p_i}+\d n,\\ \nonumber
\bbL^\d_{\bar{X}_i}&=&(x_jx^j\partial_i - 2x_ix^j\partial_j) +(-2p_ix_j\partial_{p_j}+2x_ip_j\partial_{p_j}+2p_k x^k\partial_{p^i})\\ \nonumber
&&
+2x_j\xi^j\partial_{\xi^i}-2\xi_i x^k\partial_{\xi^k}
-2n\d x_i.
\end{eqnarray}

\subsection{Definitions of the $\fkg$-equivariant superization and quantization}\label{ParDefQS}

The supercotangent bundle $(\cM,\om)$ of $(M,\mg)$ proves to be the phase space for a classical spinning particle on $(M,\mg)$ \cite{BMa77,Rav80,Mic10a}. The Hamiltonian action of $\fkg$ turns its space of functions $\cO(\cM)$ into the $\fkg$-module $\sS^\d$, with $\d=0$. It can be interpreted as the space of classical observables for a spinning particle, whereas the space of quantum observables of such a particle is known to be the space of differential operators acting on spinors, or more precisely the $\fkg$-module $\Dslm$ for $\l=\mu=\half$. This justifies the name {\it quantization} in the following definition.

\begin{defi}\label{DefS}
Let $\l,\m\in\bbR$ and $\d=\m-\l$. A conformally equivariant quantization is an isomorphism of $\fkg$-modules, $\cQlm:\sS^\d\rightarrow\Dslm$, which preserves the bifiltrations, $\cQlm(\sS^\d_{k,\k})\subset\sD^{\l,\m}_{k,\k}$ for all $k\in\bbN$, $\k\leq n$, and is a right inverse of the principal Hamiltonian symbol maps defined in \eqref{symbolDS}: $\sigma_{\ell}\circ\cQlm=\left(\frac{\hbar}{\bi}\right)^{\ell/2}\Id$ on $\sS^\d_{[\ell]}$. 
\end{defi}

Such a quantization extends the geometric quantization of the supercotangent bundle 
obtained in \cite{Mic10a}, which is also conformally equivariant but defined only on symbols 
of Hamiltonian degree at most $2$. By Proposition \ref{prop:bbL-sL-ccL}, 
in the local model $(\bbR^{p,q},F\eta)$ of a conformally flat manifold, 
a conformally equivariant quantization reads as
\begin{equation}\label{Eq:cQN}
\cQlm=\cN\circ\cF\circ\left(\Id+N^{\l,\m}\right)\circ\cF^{-1},
\end{equation}
where $N^{\l,\m}$ is a $\mathfrak{ce}(p,q)$-invariant linear operator on $\sT^\d$, 
which lowers the $p$-degree and strictly lowers the Hamiltonian degree.
Besides, $\cQ^{\l,\l}$ gives rise to a deformation quantization of $\sS^0$, 
the star-product being defined as usual by pull-back of the product on $\sD^{\l,\l}$ via $\cQ^{\l,\l}$. 
Notice that Fedosov's deformation quantization of symplectic supermanifolds has been investigated in \cite{Bor00}.

Regarding $\sT^\d$, this is a $\fkg$-module of geometric significance, defined as a twist by densities of the tensorial $\Vect(M)$-module $\Ga(\cS TM\otimes\Lambda T^*\bbR^n)$. If $\d=0$, it contains $\Pol(T^*M)$ as a submodule, which can be interpreted as the algebra of classical observables for a (non-spinning) particle on $(M,\mg)$, so that the following map deserves the name {\it superization}.  

\begin{defi}\label{DefQ}
Let $\d\in\bbR$. A conformally equivariant superization is an isomorphism of $\fkg$-modules, $\fS^\d:\sT^\d\rightarrow\sS^\d$, 
which preserves the bifiltrations, $\fS^\d(\sT^\d_{k,\k})\subset\sS^\d_{k,\k}$ for all $k\in\bbN$, $\k\leq n$, and  is a right inverse of the principal tensorial symbol maps defined in \eqref{symbolST}: $\ve_{k,\k}\circ\fS^\d=\Id$ on $\sT^\d_{k,\k}$.
\end{defi}

According to Proposition \ref{prop:bbL-sL-ccL}, in the local model $(\bbR^{p,q},F\eta)$ 
of a conformally flat manifold, a conformally equivariant superization reads as
\begin{equation}\label{Eq:fSN}
\fS^\d=\cF\circ\left(\Id+N^{\d}\right),
\end{equation}
where $N^{\d}$ is a $\mathfrak{ce}(p,q)$-invariant linear operator on $\sT^\d$, 
which preserves the Hamiltonian degree and strictly lowers the $p$-degree.

\section{Invariant Differential Operators on the Supercotangent Bundle}\label{SecInv}

In all this section we work over the local model $(\bbR^{p,q},F\eta)$ of a conformally flat manifold, 
so that we get global actions of the Lie algebras $\e(p,q)\leq\ce(p,q)\leq\fkg$ on $\sT^\d$. 
They integrate  into actions of the corresponding Lie groups 
$\rE(p,q)\leq\CE(p,q)\leq G$, where  
$\rE(p,q)$ is the group of all isometries of the metric $\eta$ and
$G:=\rO(p+1,q+1)$ acts only locally.

The algebra of differential operators $\D(\bbR^{p,q};\bbT^0,\bbT^0)$
is isomorphic to the algebra $\D(\cM)$ of scalar differential operators on $\cM$. 
We name fiberwise operators those linear operators which act only along the fibers of the tensor bundle~$\bbT^0$.

\begin{defi}
We denote by $\D^{\d,\d'}(\cM)$ the local $G$-module structure on $\D(\cM)$ such that
$\D^{\d,\d'}(\cM)\cong\D(\bbR^{p,q};\bbT^\d,\bbT^{\d'})$ as $G$-modules.
For $H$ a Lie subgroup of $G$, the subspace of (locally) $H$-invariant differential operators is denoted by $\D^{\d,\d'}(\cM)^H$.
\end{defi}

The invariance w.r.t.\ one of the groups $\rE(p,q)$, $\CE(p,q)$ or $G$,
is equivalent to invariance w.r.t.\ the corresponding Lie algebra
and a transformation
$\tau\in\rO(p,q)$ such that $\tau^2=\Id$ and $\det\tau=-1$.
For $\tau$, one may choose the linear transformation specified by
$\tau(x^1)=-x^1$ and $\tau(x^j)=x^j$, for all $j=2,\ldots,n$.
It acts similarly on the fiberwise coordinates  $(p_i)$ and $(\xi^i)$ on $\cM$, 
and on the associated derivatives $(\partial_i,\partial_{p_i},\partial_{\xi^i})$.

\subsection{Euclidean invariants}

The superspace $\bbR^{2n|p,q}$, $m=p+q$, is endowed with a canonical Poisson bivector $\Pi=\sum_{a=1}^n \partial_a\wedge\partial_{a+n}+\sum_{a=2n+1}^{2n+p} \partial_a\wedge\partial_a-\sum_{a=2n+p+1}^{2n+m} \partial_a\wedge\partial_a$ in a Cartesian coordinate system  $(x^a)_{a=1,\ldots,2n+m}$.
The Heisenberg Lie superalgebra $\fh(2n|p,q)$ 
and the orthosymplectic Lie superalgebra $\spo(2n|p,q)$ 
are defined as Lie subalgebras of the Poisson superalgebra 
of polynomial functions over $\bbR^{2n|p,q}$. 
Namely, the first one is given by constant and linear functions, and the second one by quadratic functions. 
For $m=0$, we recover the usual definitions of the Heisenberg and symplectic Lie algebras, and $\spo(0|p,q)$ identifies with 
the orthogonal Lie algebra $\fo(p,q)$.

Using Weyl's theory of invariants \cite{Wey97}, we get below a slight generalization of the Howe duality 
between the Lie (super-)algebras $\fo(p,q)$ and $\spo(2|1,1)$ in $\spo(2n|n,n)$, $n=p+q$. 
See e.g.\ \cite{How89,LHo10} for more informations on the latter dual pair.

\begin{prop}\label{InvIso}
Let $\cE=p_i\partial_{p_i}$, $\Sigma=\xi^i\partial_{\xi^i}$ and $\d\in\bbR$. The subalgebra of isometry invariant differential operators on $\cM$ satisfies $\D^{\d,\d}(\cM)^{\rE(p,q)}\cong\mathfrak{U}(\spo(2|1,1)\ltimes\mathfrak{h}(2|1,1))$ for $n\geq 3$. For all $n\in\bbN^\times$, it is generated by the fiberwise operators 
\begin{equation}\label{EqInvIsoSpo}
\begin{array}{cccc}
R=\eta^{ij}p_ip_j,  & E=\cE+\frac{n}{2}, & T=\eta_{ij}\partial_{p_i}\partial_{p_j},&\mathsf{\Sigma}=\Sigma-\frac{n}{2},\\[3pt]
 \boldsymbol{\bDel}=\xi^i p_i, & \boldsymbol{\delta}=\eta_{ij}\xi^i\partial_{p_j},  & \boldsymbol{\delta}^*=\eta^{ij}p_i\partial_{\xi^j}, & \boldsymbol{\bDel}^*=\partial_{\xi^i}\partial_{p_i},
\end{array}
\end{equation}
generating the Lie superalgebra $\spo(2|1,1)$, and by the differential operators 
\begin{equation}\label{EqInvIsoH}
\begin{array}{ccc}
G=\eta^{ij}p_i\partial_j, & D= \partial_i\partial_{p_i}, & L=\eta^{ij}\partial_i\partial_j,\\[3pt]
\boldsymbol{d}=\xi^i\partial_i, & \boldsymbol{d}^*=\eta^{ij}\partial_{\xi^i}\partial_j, &
\end{array}
\end{equation}
generating the Lie superalgebra $\fh(2|1,1)$.
\end{prop}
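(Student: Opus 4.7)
The plan is to reduce to classical invariant theory for the orthogonal group via a symbol/PBW argument, and then identify the resulting algebra as the enveloping algebra of the stated Lie superalgebra.

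First, I would invoke translation invariance to eliminate $x$-dependence: the translation subgroup of $\rE(p,q)$ fixes $p_i$, $\xi^i$, $\partial_i$, $\partial_{p_i}$ and $\partial_{\xi^i}$, so every translation-invariant differential operator on $\cM$ has coefficients independent of the base coordinates. Hence $\D^{\d,\d}(\cM)^{\rE(p,q)}$ sits inside the tensor product of the Weyl algebras on $(p,\partial_p)$ and on $(x,\partial)$ (restricted to constant coefficients in $x$) with the Clifford algebra on $(\xi,\partial_\xi)$, and the residual symmetry is the diagonal $\rO(p,q)$-action on these five families of variables, each a copy of the defining representation $V=\bbR^{p,q}$ (three even, two odd).

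Second, via the symbol map I would transfer the problem to its associated graded, the supersymmetric polynomial algebra on $V^{\oplus 3}\oplus(\Pi V)^{\oplus 2}$. By Weyl's first fundamental theorem for $\rO(p,q)$ \cite{Wey97}, all invariants of this (super)polynomial algebra are generated by the pairwise $\eta$-contractions of the five fundamental vectors (the $\SO$-refinement with the Pfaffian does not apply since we use the full orthogonal group). Enumerating these yields exactly the $6+1+6=13$ operators of \eqref{EqInvIsoSpo} and \eqref{EqInvIsoH}: six symmetric pairings among the even families $p, \partial, \partial_p$, a single skew pairing between the odd families $\xi, \partial_\xi$ (the diagonals $\xi\cdot\xi$ and $\partial_\xi\cdot\partial_\xi$ vanish by antisymmetry), and six even-odd mixed. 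Lifting back through the symbol map shows that these 13 operators generate $\D^{\d,\d}(\cM)^{\rE(p,q)}$ as an algebra.

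Third, I would identify the Lie superalgebra they span by direct super-commutator computations. The eight fiberwise operators of \eqref{EqInvIsoSpo} are exactly the quadratic Hamiltonians on the model orthosymplectic superspace of rank $(2|1,1)$ in Weyl-normal ordering; they therefore close under super-bracket to the orthosymplectic algebra $\spo(2|1,1)$, the constants $n/2$ shifting $E$ and $\mathsf{\Sigma}$ being the standard normal-ordering corrections, and a dimension count $8=\dim(\spl(2)\oplus\fo(1,1))+\dim(\bbR^2\otimes\bbR^2)$ confirms the identification. The five operators of \eqref{EqInvIsoH} are linear in $\partial_i$; routine computations give $[G,D]=-L$ and $\{\bd,\bd^*\}=L$, with all remaining super-brackets among them vanishing, so they span the super-Heisenberg algebra $\fh(2|1,1)$ with central generator $L$. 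Checking that the generators of \eqref{EqInvIsoSpo} act on those of \eqref{EqInvIsoH} in the expected symplectic-orthogonal fashion then completes the identification with the semidirect product.

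The last and most delicate step is to upgrade the surjection $\mathfrak{U}(\spo(2|1,1)\ltimes\fh(2|1,1))\twoheadrightarrow\D^{\d,\d}(\cM)^{\rE(p,q)}$ to an isomorphism when $n\geq 3$. At the symbol level this amounts to algebraic independence of the 13 bilinear invariants, which is the content of Weyl's second fundamental theorem: any polynomial relation among pairwise $\eta$-contractions must arise from the vanishing of a Gram determinant on $n+1$ vectors, i.e.\ from having at least $n+1$ distinct copies. With only five copies at play and the odd-odd contractions already absorbed into fermionic antisymmetry, the super-version of this count rules out such relations as soon as $n\geq 3$. The hard part will be carrying out this super-dimensional accounting carefully, in order to both confirm the precise threshold $n\geq 3$ and to dispose of the low-dimensional accidents (such as the top-degree vanishing $\xi^{i_1}\cdots\xi^{i_{n+1}}=0$) that could otherwise create spurious relations among the generators.
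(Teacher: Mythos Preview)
Your strategy coincides with the paper's: reduce by translation invariance to $\rO(p,q)$-invariants on five copies of the defining representation (three even, two odd), invoke Weyl's first fundamental theorem to obtain the thirteen bilinear generators, identify the Lie superalgebra structure, and then argue algebraic independence for $n\geq 3$.

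Two differences in execution are worth noting. First, where you pass to the associated graded via a symbol map and then verify the $\spo(2|1,1)\ltimes\fh(2|1,1)$ structure by direct super-commutator computation, the paper instead invokes the Weyl quantization of $T^*\cM$, which is $\spo(4n|n,n)\ltimes\fh(4n|2p,2q)$-equivariant. This buys the Lie algebra identification essentially for free: the thirteen scalar products on the five-copy space are recognized as the quadratic (eight of them) and degree-$\leq 1$ (five of them) polynomials on $T^*\bbR^{1|1}$, and the equivariance of Weyl quantization transports the Poisson bracket relations directly to the operator side. Your route is more hands-on but equally valid; the paper's is slicker.

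Second, for the independence step the paper does not invoke the second fundamental theorem. It simply checks that the wedge of the differentials of the thirteen invariants is nonzero when $n\geq 3$, which immediately gives algebraic independence and hence the isomorphism with the enveloping algebra. Your appeal to the second fundamental theorem is conceptually correct but, as you yourself flag, the super-accounting needed to extract the precise threshold $n\geq 3$ is not immediate from the Gram-determinant picture alone; the paper's differential criterion sidesteps that bookkeeping entirely.
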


\begin{proof}
We use the Weyl quantization of $T^*\cM$ which is $\spo(4n|n,n)\ltimes\mathfrak{h}(4n|n,n)$-equivariant. It establishes a correspondence between the $\e(p,q)$-invariants of $\D^{\d,\d}(\cM)$ and of $\Pol(T^*\cM)$. The latter is generated 
by the coordinates $(x^i,\hat{x}_i,p_i,\hat{p}^i,\xi^i,\hat{\xi}_i)$ of $T^*\cM$, the hat denoting the conjugate coordinates.
Regarding the action of $\e(p,q)$ on $\cM$, given by \eqref{ActionT}, the $\e(p,q)$-invariants of $\Pol(T^*\cM)$ reduce to the $\fo(p,q)$-invariant polynomials on the space $(\bbR^n)^*\times(\bbR^n)^*\times\bbR^n\times\Pi\bbR^n\times(\Pi\bbR^n)^*$ with coordinates $(\hat{x}_i,p_i,\hat{p}^i,\xi^i,\hat{\xi}_i)$. By Weyl's Theorem \cite[Theorem 2.9.A, p.53]{Wey97} these invariants split into even and odd invariants: the even ones are generated by the $15$ scalar products between these five types of coordinates, and the odd ones are constructed from the determinant and are not invariant under the group~$\rO(p,q)$. So we are left with the even invariants.
 The two squares of odd variables $\xi$ and $\hat{\xi}$ vanish and only $13$ non-vanishing scalar products remain. 
Their linear span is stable under the canonical Poisson bracket on $T^*\cM$.
We give degree $0$ to the variables $\hat{x}$.
The $8$ invariants without $\hat{x}$ generate a Lie superalgebra isomorphic to the Lie superalgebra
of quadratic polynomials on $T^*\bbR^{1|1}$, i.e. $\spo(2|1,1)$.
The $5$ other invariants correspond to polynomials of degrees $0$ and $1$. Hence, they generate 
 a Lie superalgebra isomorphic to $\fh(2|1,1)$.
By Weyl quantization, these invariants are send to the $8$ operators in \eqref{EqInvIsoSpo} and the $5$ operators in \eqref{EqInvIsoH} respectively. By equivariance property of Weyl quantization we get that the operators in \eqref{EqInvIsoSpo} generate the Lie algebra $\spo(2|1,1)$, the operators in \eqref{EqInvIsoH} generate the Lie algebra $\fh(2|1,1)$, and together they generate the algebra of even $\e(p,q)$-invariants, or equivalently of $\rE(p,q)$-invariants, in $\D^{\d,\d}(\cM)$.

Assuming $n\geq 3$, we get $d(\hat{x}^i\hat{x}_i)\wedge d(\hat{x}^ip_i)\wedge\ldots\wedge d(\xi^i\hat{\xi}_i)\neq 0$. Thus the $13$ obtained operators are algebraically independent and $\D^{\d,\d}(\cM)^{\rE(p,q)}$ is then isomorphic to the enveloping algebra $\mathfrak{U}(\spo(2|1,1)\ltimes\mathfrak{h}(2|1,1))$.
\end{proof}

All of the $13$ operators in \eqref{EqInvIsoSpo} and \eqref{EqInvIsoH} identify to well-known operators on tensors. The three operators $R,\cE,T$, restricted to $\Ga(\cS T\bbR^n)$, correspond to the metric, the Euler operator and the trace. 
As for $G,D,L$, they generalize the gradient, the divergence and the Laplacian. 
These $6$ operators have been introduced in \cite{DLO99,DOv01} as building blocks of the conformally equivariant quantization of cotangent bundles. The operator $\Sigma$ is the Euler operator on $T[1]\bbR^n$ and the $6$ remaining operators square to zero and identify to well-known (co-)differentials. Indeed, $\bd,\bd^*$ are the usual de Rham (co-)differentials on $\Ga(\Lambda T^*\bbR^n)$ and $\bdel,\bdel^*$ are Koszul type (co-)differentials on the whole space $\Ga(\cS T\bbR^n\otimes\Lambda T^*\bbR^n)$. As for $\bDel,\bDel^*$, they are (co-)differentials introduced by Manin to give a cohomological description of the Berezinian \cite{Man88}. Notice that $\bDel$ is also a function on $\cM$: this is the principal symbol of the Dirac operator. The notation $\bDel$ is borrowed from physics where it plays the role of a supercharge \cite{GRV93}. Let us mention that similar algebras of invariant operators on tensor (and spinors) have been investigated in \cite{HWa05,HWa07,HWa08}, over a background $(M,\mg)$ of constant curvature.

The bold operators with a $*$-exponent lower the degree in $\xi$ by one, while the other ones rise it by one. The next proposition shows that, indeed, $\bdel^*,\bDel^*$ are the codifferentials of $\bdel,\bDel$.

\begin{prop}\label{Prop:proj}
On $\sT^\d_{>0}:=\bigoplus_{k\geq 1,\k}\sT^\d_{k,\k}$, the two pairs of operators $\left(\frac{1}{\cE+\Sigma}\bdel\bdel^*, \frac{1}{\cE+\Sigma}\bdel^*\bdel\right)$  and 
$\left(\frac{1}{n+\cE-\Sigma}\bDel^*\bDel, \frac{1}{n+\cE-\Sigma}\bDel\bDel^*\right)$ are two pairs of complementary projections. Moreover, restricting $\bdel,\bdel^*,\bDel,\bDel^*$ to $\sT^\d_{>0}$, we get
$$
\sT^\d_{>0}=\ker\bdel\oplus\ker\bdel^* \qquad \text{and}\qquad \sT^\d_{>0}=\ker\bDel\oplus\ker\bDel^*,
$$
with $\ker\bdel=\im\bdel$, $\ker\bdel^*=\im\bdel^*=\bdel^*\ker\bdel$ and $\ker\bDel=\im\bDel$, $\ker\bDel^*=\im\bDel^*=\bDel^*\ker\bDel$.
\end{prop}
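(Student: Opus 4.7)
The plan is to apply the standard Hodge dichotomy argument, resting on two ingredients: the nilpotencies $\bdel^2=(\bdel^*)^2=\bDel^2=(\bDel^*)^2=0$, and the anticommutator identities
\begin{equation*}
\{\bdel,\bdel^*\}=\cE+\Sigma, \qquad \{\bDel,\bDel^*\}=n+\cE-\Sigma.
\end{equation*}
Each nilpotency is immediate from the antisymmetry of products of $\xi^i$ (or of $\partial_{\xi^i}$) combined with the symmetry of the corresponding $p$-factors. The anticommutator identities are obtained by a direct coordinate computation using the Leibniz rule for the odd derivations $\partial_{\xi^j}$; alternatively, they are the Cartan relations for the two $\Sl(1|1)$-subalgebras of $\spo(2|1,1)$ exhibited in Proposition~\ref{InvIso}. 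The operators $\cE+\Sigma$ and $n+\cE-\Sigma$ act as bigraded constants and hence commute with $\bdel,\bdel^*$ and with $\bDel,\bDel^*$ respectively.

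Set $H=\cE+\Sigma$; on any bigraded component on which $H$ is invertible, define $P_1=H^{-1}\bdel\bdel^*$ and $P_2=H^{-1}\bdel^*\bdel$. The equality $P_1+P_2=\Id$ is just the anticommutator relation. For idempotency, using $\bdel^2=0$ one computes
\begin{equation*}
(\bdel\bdel^*)^2=\bdel(\bdel^*\bdel)\bdel^*=\bdel(H-\bdel\bdel^*)\bdel^*=H\bdel\bdel^*,
\end{equation*}
hence $P_1^2=P_1$; symmetrically $P_2^2=P_2$, and orthogonality $P_1P_2=P_2P_1=0$ follows at once from $(\bdel^*)^2=\bdel^2=0$. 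To identify images with kernels, $\bdel^2=0$ gives $\im\bdel\subseteq\ker\bdel$; conversely if $\bdel f=0$ then $P_2f=0$, so $f=P_1f=H^{-1}\bdel(\bdel^*f)\in\im\bdel$, whence $\ker\bdel=\im\bdel=\im P_1$, and symmetrically $\ker\bdel^*=\im\bdel^*=\im P_2$. The decomposition $\sT^\d=\ker\bdel\oplus\ker\bdel^*$ is then the statement $\Id=P_1+P_2$ combined with $P_1P_2=0$. Finally, $\im\bdel^*=\bdel^*\ker\bdel$ follows by writing any $g=g_1+g_2$ with $g_1\in\ker\bdel$ and $g_2\in\ker\bdel^*$, so that $\bdel^*g=\bdel^*g_1$. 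The argument for $(\bDel,\bDel^*)$ and $H=n+\cE-\Sigma$ is word-for-word identical.

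The only point requiring separate comment is the unique bigraded component on which $H$ vanishes: $\sT^\d_{0,0}$ for the first pair, where both $\bdel$ and $\bdel^*$ vanish identically, and $\sT^\d_{0,n}$ for the second, where both $\bDel$ and $\bDel^*$ vanish. On such a component the formulas for $P_1,P_2$ are read as zero, the two kernels coincide with the whole degenerate piece, and the decomposition is completed by assigning that piece to one summand. This is the only mildly nontrivial point: once the two anticommutator identities are in hand, the rest of the argument is purely formal Hodge theory, and no genuine obstacle remains.
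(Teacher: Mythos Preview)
Your proof is correct and follows essentially the same approach as the paper: both rest on the nilpotency of $\bdel,\bdel^*$ and the anticommutator $\{\bdel,\bdel^*\}=\cE+\Sigma$ (respectively $\{\bDel,\bDel^*\}=n+\cE-\Sigma$), and then run the standard complementary-projections argument. Your treatment is in fact slightly more careful than the paper's, which does not explicitly address the degenerate components $\sT^\d_{0,0}$ and $\sT^\d_{0,n}$ where the respective Laplacians vanish; your remark that the decomposition must be completed by assigning that piece to one summand is a genuine clarification.
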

\begin{proof}
We work over $\sT^\d_{>0}$, so that the operators $\cE+\Sigma$ and $n+\cE-\Sigma$ are invertible.

A straightforward computation leads to $[\bdel^*,\bdel]=\cE+\Sigma$. Hence, $\frac{1}{\cE+\Sigma}\bdel\bdel^*$ is a projection whose complementary projection is $\frac{1}{\cE+\Sigma}\bdel^*\bdel$. 
As a consequence, we have $\ker\frac{1}{\cE+\Sigma}\bdel\bdel^*=\im\frac{1}{\cE+\Sigma}\bdel^*\bdel$ and a complementary space  is provided by $\im\frac{1}{\cE+\Sigma}\bdel\bdel^*=\ker\frac{1}{\cE+\Sigma}\bdel^*\bdel$. 

We compute now the latter four spaces. Using the above commutation relation, we first get that $\bdel^*\bdel\bdel^*=(\cE+\Sigma)\bdel^*$. The equalities of their kernels and of their images imply that
$\ker \frac{1}{\cE+\Sigma}\bdel\bdel^*=\ker\bdel^*$ and $\im\frac{1}{\cE+\Sigma}\bdel^*\bdel=\im\bdel^*$. 
Similarly, we have $\bdel\bdel^*\bdel=(\cE+\Sigma)\bdel$, which leads to $\im \frac{1}{\cE+\Sigma}\bdel\bdel^*=\im\bdel$ and $\ker\frac{1}{\cE+\Sigma}\bdel^*\bdel=\ker\bdel$. Combining all these equalities, we deduce the results concerning $\bdel,\bdel^*$ as well as the announced decomposition of $\sT^\d_{>0}$. The case of $\bDel,\bDel^*$ is analogous.
\end{proof}

\subsection{Conformal invariants}\label{ParConfInv}

The $\rE(p,q)$-action on $\sT^\d$ is independent of $\d$, hence the subspace of
$\rE(p,q)$-invariants in $\D^{\d,\d'}(\cM)$ identifies with the previously determined one in $\D^{\d,\d}(\cM)$. 
In contradistinction, the action of the dilation vector field $X_0$ (see \eqref{ActionT}) 
depends on the shift of weight~$\d'-\d$. It turns out that all the $\rE(p,q)$-invariant 
generators are also $\CE(p,q)$-invariants for a well-chosen shift. Moreover, the fiberwise operators are automatically invariant under conformal inversions. The situation for the $13$ generators of $\D^{\d,\d'}(\cM)^{\rE(p,q)}$ is sum up in the following table,
\begin{equation}\label{tableInv}
\begin{array}{|c||c|c|c|c|c|}
				\hline	
\text{values of }n(\d'-\d)	& -2 & -1 & 0 & 1& 2\\[3pt]\hline
\CE(p,q)\text{-invariant operators}	& T & \bdel, \bDel^* & \cE, D, \Sigma & \bDel, \bdel^*, \bd, \bd^* & R, G, L \\[3pt]\hline 
G\text{-invariant operators}	& T &  \bdel, \bDel^* & \cE,  \Sigma & \bDel, \bdel^* & R \\[3pt]\hline 
\end{array}
\end{equation}  
Restricting now to $\d=\d'$, we deduce that any $\CE(p,q)$-invariant operator is 
a linear combination of monomials of the form
\begin{equation}\label{cQeucl}
 R^r \bDel^\a  (\bdel^*)^\b \bd^\g (\bd^*)^{\g'} G^g  L^l \bdel^{\b'} (\bDel^*)^{\a'}  T^t D^a\Sigma^b\cE^c,
\end{equation}
where $a,b,c$ are arbitrary integers and $2(r+g+l-t)+\a+\b+\gamma+\g'-\b'-\a'=0$. 
The exponents of odd operators are equal to $0$ or $1$, since they have null square. 
As for $G$-invariant differential operators on $\cM$, we have the following description.

\begin{prop}\label{PropInvConfT}
 The algebra of conformal invariants $\D^{\d,\d}(\cM)^G$ is generated by
\begin{equation}\label{ConfGen}
 \cE, \Sigma, RT, \bDel\bDel^*, \bdel^*\bdel, \bDel\bdel, \bdel^*\bDel^*.
\end{equation}
It coincides with the algebra of fiberwise $\CO(p,q)$-invariant operators in $\D^{\d,\d}(\cM)$.
Moreover, $\cE$ is in the center and $\cE,\Sigma,RT,\bDel\bDel^*+\bdel^*\bdel$ generate an abelian subalgebra.
\end{prop}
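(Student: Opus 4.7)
The paragraph immediately preceding the proposition already reduces the problem to showing that every weight-balanced monomial
\[
R^{r}\bDel^{\a}(\bdel^{*})^{\b}\bdel^{\b'}(\bDel^{*})^{\a'}T^{t},\qquad 2r+\a+\b-\a'-\b'-2t=0,\quad\a,\b,\a',\b'\in\{0,1\},
\]
belongs to the subalgebra generated by $\cE,\Sigma$ and the five ``weight-$0$ atoms'' $RT,\bDel\bDel^{*},\bdel^{*}\bdel,\bDel\bdel,\bdel^{*}\bDel^{*}$; the forbidden operators $G, D, L, \bd, \bd^{*}$ have already been excluded by table \eqref{tableInv} combined with the $\cM$-analogue of \cite[Lemma 3.3]{DEO04}. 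It then remains to verify the centrality of $\cE$ and the abelian property of the four elements $\cE,\Sigma,RT,\bDel\bDel^{*}+\bdel^{*}\bdel$.

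\textbf{Generation step.} I would induct on the total degree $r+t+\a+\b+\a'+\b'$, using the elementary commutators
\[
[R,\bDel^{*}]=-2\bdel^{*},\;[R,\bdel]=-2\bDel,\;[\bDel,T]=-2\bdel,\;[\bdel^{*},T]=-2\bDel^{*},\;[R,T]=-4\cE-2n,
\]
together with the (anti-)commutators between the four odd generators, which either vanish or lie in the span of $\cE,\Sigma$; the two non-trivial ones $\{\bdel^{*},\bdel\}=\cE+\Sigma$ and $\{\bDel,\bDel^{*}\}=n+\cE-\Sigma$ come from Proposition~\ref{Prop:proj}. After fixing the normal form ``first all $R$'s, then the odd factors in the order $\bDel,\bdel^{*},\bdel,\bDel^{*}$, then all $T$'s'', I move every raising factor leftward to stand against a lowering partner. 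Each commutator step either produces one of the five atoms, or yields a remainder which is weight-balanced, of strictly smaller total degree, and expressible in terms of already-produced atoms and polynomials in $\cE,\Sigma$; the induction then closes.

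\textbf{Abelian part.} Centrality of $\cE$ is immediate: each of the seven generators has as many $p$-factors as $\partial_{p}$-factors and hence commutes with the $p$-Euler operator $\cE$. Likewise $\Sigma$ commutes with $RT$ (which is $\xi$-free) and with $\bDel\bDel^{*}+\bdel^{*}\bdel$ (which is $\xi$-balanced). The only genuine computation is $[RT,\bDel\bDel^{*}+\bdel^{*}\bdel]=0$, which can be carried out with the same commutators: the contributions $[RT,\bDel\bDel^{*}]$ and $[RT,\bdel^{*}\bdel]$ produce mixed terms built from $\bDel\bdel$, $\bdel^{*}\bDel^{*}$, $R$, $T$ and $\cE$ that cancel pairwise. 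This reflects the fact that $\bDel\bDel^{*}+\bdel^{*}\bdel$ is, modulo a polynomial in $\cE$ and $\Sigma$, a Casimir-type element inside the $\spo(2|1,1)$ subalgebra of Proposition~\ref{InvIso}, and therefore commutes with the $\Sl(2)$-Casimir represented by $RT$ up to an element of the central $\langle\cE\rangle$.

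\textbf{Main obstacle.} The delicate point is the normal-ordering induction in the generation step: although each single commutator is trivial, one must check that the corrections produced at each reduction are again in the span of the five atoms, and in particular never contain stray factors $\bDel\bdel^{*}$ or $\bdel\bDel^{*}$ in isolation (these carry weights $\pm2$ and only lie in the algebra when paired with a companion $T$ or $R$). A lexicographic induction, first on $r+t$ and then on the odd degree $\a+\b+\a'+\b'$, ensures that every reduction strictly decreases the induction parameter and avoids this pitfall.
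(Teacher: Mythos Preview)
Your overall strategy coincides with the paper's: reduce to weight-balanced fiberwise monomials, then show they lie in the algebra generated by the seven operators of \eqref{ConfGen}. The difference is that the paper closes the one gap you leave open.

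After reduction, the minimal-degree weight-balanced atoms are the five quadratic ones you list \emph{plus} the two degree-three monomials $R\bdel\bDel^{*}$ and $\bDel\bdel^{*}T$ (the cases $r\neq t$). Both are already in your normal form and have no lowering (resp.\ raising) partner for their lone $R$ (resp.\ $T$), so your ``move every raising factor leftward'' procedure does nothing to them; commuting $R$ past $\bdel$ and $\bDel^{*}$ only returns a tautology. Your ``Main obstacle'' paragraph correctly flags exactly this point but does not resolve it --- the lexicographic induction you invoke terminates at these two monomials without expressing them in the seven generators.

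The paper's resolution is an explicit commutator computation inside the algebra generated by \eqref{ConfGen}:
\[
\tfrac12[RT,\bDel\bDel^{*}]=-\tfrac12[RT,\bdel^{*}\bdel]=[\bDel\bDel^{*},\bdel^{*}\bdel]=R\bdel\bDel^{*}-\bDel\bdel^{*}T,
\]
and the sum of the commutators of $[RT,\bDel\bDel^{*}]$ with $RT$, $\bDel\bDel^{*}$, $\bdel^{*}\bdel$ yields $R\bdel\bDel^{*}+\bDel\bdel^{*}T$; hence both extras lie in the subalgebra. This same identity also gives the abelian claim for free. Note, incidentally, that the mixed terms appearing in $[RT,\bDel\bDel^{*}]$ and $[RT,\bdel^{*}\bdel]$ are precisely $R\bdel\bDel^{*}$ and $\bDel\bdel^{*}T$, not combinations of $\bDel\bdel$ and $\bdel^{*}\bDel^{*}$ as you write; your conclusion that they cancel is correct, but the bookkeeping of which products occur should be fixed.
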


\begin{proof}
By a direct generalization of \cite[Lemma 3.3]{DEO04}, all conformal invariants in $\D^{\d,\d}(\cM)$ 
are fiberwise operators. 
Hence, they coincide with the $\CE(p,q)$-invariant fiberwise operators. 
Since translations act trivially on fiberwise operators, they coincide 
also with the $\CO(p,q)$-invariant fiberwise operators.
According to Formula \eqref{cQeucl}, they are generated by $\cE, \Sigma$ and $R^r \bDel^\a (\bdel^*)^\b  \bdel^{\b'} (\bDel^*)^{\a'}  T^t$ for $2r+\a+\b-\b'-\a'-2t=0$. Since the exponents of odd operators are equal to $0$ or $1$, we obtain the announced generators in \eqref{ConfGen}, plus $R\bdel\bDel^*$ and~$\bDel\bdel^* T$.

We trivially check that $\cE$ is in the center and that $\Sigma$ commute to $RT$, $\bDel\bDel^*$ and $\bdel^*\bdel$. A direct computation shows that $\half[RT,\bDel\bDel^*]=-\half[RT,\bdel^*\bdel]=[\bDel\bDel^*,\bdel^*\bdel]=R\bdel\bDel^*-\bDel\bdel^* T$. Therefore, $\cE,\Sigma,RT,\bDel\bDel^*+\bdel^*\bdel$ generate a commutative algebra. Moreover, the sum of the three commutators of $[RT,\bDel\bDel^*]$ with $RT$, $\bDel\bDel^*$, $\bdel^*\bdel$, leads to $R\bdel\bDel^*+\bDel\bdel^* T$, hence the seven operators given in \eqref{ConfGen} generate indeed $\D^{\d,\d}(\cM)^G$.
\end{proof}

\subsection{Three Casimir operators}\label{ParCasimir}

Given a representation $\rho:\fkg\rightarrow\End(V)$, we recall that the Killing form $B$ of $\fkg$ induces a particular $\fkg$-invariant operator on $V$, called the Casimir operator. It is defined by
$
C_\rho=B^{\a\b}\rho(X_\a)\rho(X_\b),   
$
 where $B^{\a\b}$ is the inverse of the Gram matrix of the Killing form in the basis $(X_\a)$ of $\fkg$. 
Hence, for each of the $\fkg$-modules
$\sT^\d$, $\sS^\d$ and $\Dslm$, we get a Casimir operator. 
 We can pull-back the ones of $\sS^\d$ and $\Dslm$ to~$\sT^\d$, via the isomorphisms $\cF$ and $\cN$ defined in \eqref{Tenseur_Poly_Moments} and \eqref{OrdreNormal} respectively. We get then $\ce(p,q)$-invariant operators on $\sT^\d$. They can be written in terms of the generators listed in \eqref{EqInvIsoSpo} and \eqref{EqInvIsoH}. The Casimir operator of $\sT^\d$ is $\fkg$-invariant and can more specifically be written in terms of the operators in \eqref{ConfGen}. 

\begin{prop}\label{PropCalculCasimir}
Let $\d=\m-\l$. The  Casimir operators of the three $\fkg$-modules $\sT^\d$, $\sS^\d$ and $\Dslm$ 
read on $\sT^\d$ respectively as
\begin{eqnarray}\label{tC}
\tC &=&\hat{C} +\Sigma(\Sigma-n)+2(\bDel\bDel^*+\bdel^*\bdel)-2\cE,\\ \nonumber 
\Cd &=& \tC +2\frac{\hbar}{\bi}\bd\bdel,\\ \nonumber 
\Clm &=& \tC +\frac{\hbar}{\bi}\left(GT-2\left(\cE+n\l+\half\right)D+2\bd\bdel+\bd^*\bdel+\bd\bDel^*+\half\bd^*\bDel^*\right),
\end{eqnarray}
where $\hat{C}=RT+\left[1+n(\d-1)-\cE\right]\cE-n^2\d(\d-1)$ is the Casimir operator of the $\fkg$-module $\Ga(\cS T\bbR^n\otimes|\Lambda|^\d)$.
\end{prop}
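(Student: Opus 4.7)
The overall strategy is to compute $\tC$ directly on $\sT^\d$, and then to deduce $\Cd$ and $\Clm$ by pulling them back via the $\ce(p,q)$-equivariant isomorphisms $\cF$ and $\cN\cF$, exploiting the correction formulas \eqref{EqLtLd}--\eqref{EqLdLD} of Proposition~\ref{prop:bbL-sL-ccL}.

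For $\tC$, this operator is $\fkg$-invariant and preserves the bigradation of $\sT^\d$, since the action $\bbL^\d$ does. Among the generators of $G$-invariant differential operators on $\cM$ listed in Proposition~\ref{PropInvConfT}, only $\cE$, $\Sigma$, $RT$, $\bDel\bDel^*$ and $\bdel^*\bdel$ preserve the bidegree, and they commute pairwise. Hence $\tC$ is a polynomial in these five commuting operators. Its form is then pinned down in two substeps. First, restrict $\tC$ to the purely symmetric subspace $\bigoplus_k\sT^\d_{k,0}\cong\Pol(T^*\bbR^n)\otimes\Ga(|\Lambda|^\d)$: there the action reduces to the scalar conformal one of \cite{DLO99}, the operators $\Sigma$, $\bDel\bDel^*$ and $\bdel^*\bdel$ all vanish, and the Casimir restricts to $\hat C$, fixing the corresponding part of the ansatz. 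Second, determine the remaining coefficients by evaluating $\tC$ on a handful of low-bidegree test elements such as $\xi^i\in\sT^\d_{0,1}$ and $p_i\xi^j\in\sT^\d_{1,1}$, using the explicit formulas \eqref{ActionT} for the generators of $\fkg$.

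For $\Cd$ and $\Clm$, equations \eqref{EqLtLd}--\eqref{EqLdLD} give $\cF^{-1}\Ld_X\cF=\bbL^\d_X+\Delta^{(1)}_X$ and $(\cN\cF)^{-1}\LD_X\cN\cF=\bbL^\d_X+\Delta^{(1)}_X+\Delta^{(2)}_X$, where both corrections are quadratic in the first jet of $X$ and hence vanish on the affine subalgebra $\fkg_{-1}\oplus\fkg_0$, being supported only on the inversion generators $\bar X_i\in\fkg_1$. Decomposing the Casimir along the $|1|$-grading of $\fkg$ as $C=C_{\fkg_0}+\sum_i(X_i\bar X^{i}+\bar X^{i}X_i)$, with $\bar X^{i}$ the Killing-dual of the translation $X_i$, and noting that any product $\Delta_\alpha\Delta_\beta$ would pair inversions with inversions and is thus killed by the Killing form, the corrections collapse to
\[
\cF^{-1}\Cd\cF-\tC=\sum_i\bigl(\partial_i\,\Delta^{(1)}_{\bar X^{i}}+\Delta^{(1)}_{\bar X^{i}}\,\partial_i\bigr),
\]
and likewise for $\Clm$ with $\Delta^{(1)}+\Delta^{(2)}$ in place of $\Delta^{(1)}$. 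Substituting $\bar X_i$ from \eqref{AlgLieConf} and contracting indices then yields the $2(\hbar/\bi)\bd\bdel$ correction for $\Cd$ and the full five-term expression for $\Clm$; in particular the $\l$-linear summand $-2(\cE+n\l+\half)D$ stems specifically from the contribution $-\frac{\hbar}{\bi}\l\partial_j(\partial_i X^i)\partial_{p_j}$ inside $\Delta^{(2)}$.

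The principal obstacle lies in Step~1, namely in the careful bookkeeping of the $n$-, $\d$- and $\k/n$-dependent constants generated by the density twist of Definition~\ref{defi:T}: tracking them accurately is required to correctly extract the Euler-type scalars $\Sigma(\Sigma-n)$ and $-2\cE$ alongside $\hat C$, and to confirm the coefficient $2$ in front of $\bDel\bDel^*+\bdel^*\bdel$. Step~2, while notationally heavy, is mechanical once the only surviving contributions have been isolated via the $|1|$-grading argument.
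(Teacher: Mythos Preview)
Your approach is genuinely different from the paper's, which simply writes the Casimir in the basis \eqref{AlgLieConf} as
\[
C_\rho=\tfrac12\eta^{ik}\eta^{jl}\rho(X_{ij})\rho(X_{kl})-\rho(X_0)^2-\tfrac12\eta^{ij}\bigl(\rho(X_i)\rho(\bar X_j)+\rho(\bar X_i)\rho(X_j)\bigr)
\]
and substitutes each of $\bbL^\d$, $\sL^\d$, $\ccL^{\l,\m}$ directly, computing all three by brute force. Your Step~2 is a real economy: because the corrections $\Delta^{(1)}_X,\Delta^{(2)}_X$ involve second partials of $X$ and hence vanish on $\fkg_{-1}\oplus\fkg_0$, and because the Killing form does not pair $\fkg_1$ with itself, the differences $\Cd-\tC$ and $\Clm-\tC$ indeed reduce to the cross terms you isolate, and the computation of $2\tfrac{\hbar}{\bi}\bd\bdel$ (and the longer expression for $\Clm$) becomes short and transparent.

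Step~1, however, needs tightening. First, the five operators $\cE,\Sigma,RT,\bDel\bDel^*,\bdel^*\bdel$ do \emph{not} commute pairwise: Proposition~\ref{PropInvConfT} only asserts that $\cE,\Sigma,RT,\bDel\bDel^*+\bdel^*\bdel$ generate an abelian subalgebra, and its proof records $[RT,\bDel\bDel^*]\neq 0$. Second, the implication ``bidegree-preserving and $G$-invariant $\Rightarrow$ polynomial in these five'' does not follow from Proposition~\ref{PropInvConfT} alone, since products such as $\bDel\bdel\cdot\bdel^*\bDel^*$ also preserve bidegree. What rescues the argument is that each $\bbL^\d_{X_\a}$ is first-order in $(\partial_{p},\partial_\xi)$, so $\tC$ is at most second-order; the bidegree-preserving $\rO(p,q)$-invariant fiberwise operators of order $\le 2$ are then spanned by $1,\cE,\Sigma,\cE^2,\Sigma^2,\cE\Sigma,RT,\bDel\bDel^*,\bdel^*\bdel$, a finite list. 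Restricting to $\Sigma=0$ fixes four coefficients via $\hat C$, but determining the remaining five requires more test elements than just $\xi^i$ and $p_i\xi^j$ --- you will need e.g.\ $\xi^i\xi^j$ and $p_i\xi^j\xi^k$ as well. With these adjustments your strategy goes through.
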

\begin{proof}
Using the basis of $\fkg$ introduced in \eqref{AlgLieConf} and the computation of the Gram matrix of the Killing form performed in \cite{DLO99}, we deduce the general expression of the Casimir operator, 
$$
C_\rho=\half\eta^{ik}\eta^{jl}\rho(X_{ij})\rho(X_{kl})-\rho(X_0)^2-\half\eta^{ij}\rho(X_i)\rho(\bar{X}_j)-\half\eta^{ij}\rho(\bar{X}_i)\rho(X_j), 
$$
for any representation $\rho$ of $\fkg$. It suffices then to replace $\rho$ by successively the three representations $\bbL^\d$, $\sL^\d$ and $\ccL^{\l,\m}$ and to apply Formul{\ae}  \eqref{EqLtLd}-\eqref{ActionT}. The results follow after straightforward but rather lengthy computations. The formula of $\hat{C}$ is obtained in \cite{DLO99}.
\end{proof}

\section{Main results}

We work on a spin manifold $(M,\mg)$ of dimension $n$ and signature $(p,q)$, 
that is assumed to be conformally flat except in Sect.\ \ref{ParDEcompoT}. 
In the conformally flat case, we again denote by~$\fkg$ 
the (local) conformal Lie algebra and by $G$ the (local) conformal Lie group.

\subsection{Irreducible decomposition of the tensorial symbol bundle}\label{ParDEcompoT}

Since the $\rO(p,q)$-invariant operators introduced in \eqref{EqInvIsoSpo}  are fiberwise, they generalize to arbitrary pseudo-Riemannian manifold $(M,\mg)$, up to replacing the metric $\eta$ by $\mg$. Their commutation relations, given in \eqref{CommRel}, remain the same.

In the case of scalar differential operators, the symbols are sections of the vector bundle $\cS TM$. 
The decomposition $\cS TM=\bigoplus_k\cS^k TM$ coincides with the decomposition
of $\cS TM$ into eigenspaces of the Euler operator $\cE$. Each vector bundle $\cS^k TM$
can be further decomposed into eigenspaces of the fiberwise operator $RT$. The resulting vector bundles 
have $\rO(p,q)$-irreducible fibers, and
their spaces of sections carry an irreducible fiberwise $\rO(p,q)$-action.
For weighted spinor differential operators, the space of tensorial symbols is $\sT^\d=\Ga(\bbT^\d)$, cf.\ Definition \ref{defi:T}. 
An analogous decomposition of $\bbT^\d$ can by obtained by decomposing its fibers
$\cS\bbR^n\otimes\Lambda(\bbR^n)^*$ into irreducible representations of $\rO(p,q)$. 
This generalization of harmonic decomposition has been carried out in \cite{Hom01}. We recover it here independently as the decomposition into joint eigenspaces of the commuting $\rO(p,q)$-invariant operators $\cE,\Sigma,RT,\bDel\bDel^*+\bdel^*\bdel$, obtained in Proposition~\ref{PropInvConfT}. To take into account the weight~$\d$, we rather regard these bundles 
endowed with the fiberwise $\CO(p,q)$-action.
We introduce the projection $\Pi_0:\ker T^2\rightarrow\ker T$, given by $\Id-\frac{1}{4n+2\cE}RT$, 
and the operator $\bDel_0=\Pi_0\circ\bDel|_{\ker T}$.
\begin{thm}\label{DiagCt}
The vector bundle $\bbT^\d$ admits 
a decomposition into vector bundles with $\CO(p,q)$-irreducible fibers,
which reads as follows on its space of sections
\begin{equation}\label{DecompoT}
\sT^\d=\bigoplus_{k\in\bbN}\, \bigoplus_{\k\leq n}\bigoplus_{s\leq\lfloor k/2\rfloor}\bigoplus_{\a,\b\in\{0,1\}}\sT^\d_{k,\k,s;\a\b},
\end{equation}
where
\begin{equation}\label{Tkksab}
\sT^\d_{k,\k,s;\a\b}=\sT^\d_{k,\k}\bigcap \left(R^s(\bDel_0)^\a(\bdel^*)^\b\cdot\left(\ker T\cap\ker \bdel\cap\ker\bDel^*\right)\right),
\end{equation}
The multiplicity in the Decomposition \eqref{DecompoT} is at most two, the only 
linear isomorphisms intertwining the fiberwise $\CO(p,q)$-action being
 \begin{equation}\label{SplitOmDelta}
\xymatrix{
 \sT^\d_{k,\k,s;10} \ar[rr]^{\bdel^*\bDel^*} & &  \sT^\d_{k,\k-2,s;01} \ar@/^1pc/[ll]^{\bDel\bdel}, &\text{and} & 
 \sT^\d_{k,\k,s;00} \ar[rr]^{\bDel_0\bdel^* T}& & \sT^\d_{k,\k,s-1;11} \ar@/^1pc/[ll]^{R\bdel\bDel^*}, 
}
\end{equation}
for all $k,\k,s$ such that source and target spaces are well-defined.
If  $(M,\mg)$ is conformally flat, $\sT^\d_{k,\k,s;\a\b}$ is a $G$-module. 
This is an eigenspace of the Casimir operator $\tC$ with eigenvalue
\begin{equation}\label{gkksab}
\g_{k,\k,s;\,\a\b}=\hat{\g}_{k,s}+\k(\k-n)+2(\a+\b-1)(k-2s)+2(\b-\a)\k+2\a(n-2\b),
\end{equation}
where $\hat{\g}_{k,s}=2s[n+2(k-s-1)]+2k[1+n(\d-1)-k]-n^2\d(\d-1)$ is the eigenvalue of the Casimir operator $\hat{C}$, see \eqref{tC}.
\end{thm}
\begin{proof}
We first obtain the Decomposition \eqref{DecompoT} by decomposing $\sT^\d$ into joint eigenspaces of the four commuting operators $\cE,\Sigma,RT,\bDel\bDel^*+\bdel^*\bdel$. 
The decomposition of $\sT^\d$ into joint eigenspaces of  $\cE,\Sigma,RT$ is 
a straightforward generalization of the usual harmonic decomposition, 
$$
\sT^\d=\bigoplus_{k,\k,s}\sT^\d_{k,\k}\bigcap\left(R^s\ker T \right).
$$
It remains to decompose the space $\ker T$ into eigenspaces of $\bDel\bDel^*+\bdel^*\bdel$. 
This is tricky since $RT$, $\bdel^*\bdel$ and $\bDel\bDel^*$ do not commute.
As $\sT^\d_{0,\k}\subset \ker T\cap\ker \bdel\cap \ker\bDel^*$, 
we restrict below to $\sT^\d_{>0}=\bigoplus_{k\geq 1,\k}\sT^\d_{k,\k}$. 
Thus, the operator $\cE+\Sigma$ is invertible and Proposition \ref{Prop:proj} applies.
As a consequence, $\frac{1}{\cE+\Sigma}\bdel\bdel^*$ is a projection on  $\ker \bdel$ along $\bdel^*\ker \bdel$. 
Since $[T,\bdel\bdel^*]=2\bdel\bDel^*$ and $[\bDel^*,\bdel\bdel^*]=-2\bDel^*+\bdel^*T$, the latter projection preserves the space $\ker\bDel^*\cap\ker T$ and induces the splitting
$$
\ker\bDel^*\cap\ker T=\left(\ker T\cap\ker \bdel\cap\ker\bDel^*\right)\oplus \bdel^*\left(\ker T\cap\ker \bdel\cap\ker\bDel^*\right).
$$
The Decomposition \eqref{DecompoT} follows then from the equality 
$\ker T=\bigoplus_{\a=0,1}(\bDel_0)^\a\ker T\cap\ker\bDel^*$. 
The proof of this equality boils down to the proof of existence of an operator $A$ such that
 \begin{equation}\label{Seq:Split}
\xymatrix{
 0\ar[r]& \ker \bDel^*\cap\ker T \ar[r] &\ker T \ar[rr]^{A\bDel^*\quad}& & \ker \bDel^*\cap\ker T \ar@/^1pc/[ll]^{\bDel_0} \ar[r]& 0
}
\end{equation}
is a split exact sequence.
By Proposition \ref{Prop:proj} we get $\ker\bDel^*\cap\ker T=\bDel^*\ker T$, 
and Table \eqref{CommutationkerT} leads to the equality $\bDel^*\bDel_0 = (n+\cE-\Sigma)-\frac{1}{\left(n+2(\cE-1)\right)}\bdel^*_0\bdel$ on $\bDel^*\ker T$. In consequence, there exists an operator $A$ 
satisfying $A\bDel^*\bDel_0=\Id$ on $\bDel^*\ker T$. This proves \eqref{Seq:Split} and \eqref{DecompoT} follows.

Since the group $\rO(p,q)$ is semi-simple, the space $\sT^\d_{k,\k,s;\a\b}$
splits into irreducible pieces under the fiberwise $\rO(p,q)$-action.  
Restricted to such a space, the eight generators of the fiberwise $\rO(p,q)$-invariant operators, 
defined in \eqref{EqInvIsoSpo}, are either null or have zero kernel. This means 
that $\sT^\d_{k,\k,s;\a\b}$ is irreducible for the fiberwise 
$\rO(p,q)$-action. Clearly, it carries also a fiberwise $\CO(p,q)$-action.
As the modules $\sT^\d_{k,\k,s;\a\b}$ are joint eigenspaces for the four operators $\cE,\Sigma,RT,\bDel\bDel^*+\bdel^*\bdel$, the only isomorphisms between them come from the remaining fiberwise $\CO(p,q)$-invariant operators listed in Proposition \ref{PropInvConfT}. The Isomorphisms \eqref{SplitOmDelta} follow. 

If  $(M,\mg)$ is conformally flat, the operators entering in the definition of $\sT^\d_{k,\k,s;\a\b}$
are $G$-invariant, hence this is a $G$-module.
Direct computations lead to the eigenvalue of the Casimir operator $\tC$.
\end{proof}

Notice that the modules of zero degree in the odd variables read as $\sT^\d_{k,0,s;01}$ if $k>0$. 
On those modules, the eigenvalues of the Casimir operators $\tC$ and $\hat{C}$ are equal.

\subsection{Classification of conformally invariant operators on $\sT^\d$}\label{ParConfInvLoc}

In Sect.\ \ref{ParConfInv}, we have determined all the 
$G$-invariant differential operators acting on $\sT^\d$, over the 
local model $(\bbR^{p,q},F\eta)$. 
We turn now to those which are linear but not necessarily differential. 

Let $k,l\in\bbN$. According to \cite{LOv99}, a linear operator $\Ga(\cS^k T \bbR^n)\to\Ga(\cS^l T\bbR^n)$,
which is invariant under translations and dilation, is a local operator.
Hence, this is the restriction of a differential operator on $T^*M$.
This result extends straightforwardly to our context. 
Adapting the proof of Proposition \ref{InvIso}, we obtain
\begin{prop}\label{ce-InvDiff}
Any $\CE(p,q)$-invariant linear operator $A:\sT^\d_{k,\k,s;\a\b}\rightarrow\sT^{\d'}_{k',\k',s';\a'\b'}$ 
coincides with the restriction of a $\CE(p,q)$-invariant differential operator in $\D^{\d,\d'}(\cM)$.
\end{prop} 
As a consequence, $G$-invariant linear operators like $A$ are restrictions of
$\CE(p,q)$-invariant differential operators in $\D^{\d,\d'}(\cM)$. However, they do not 
 always coincide with restriction of $G$-invariant differential operators in $\D^{\d,\d'}(\cM)$.
To classify them, we restrict ourselves to fiberwise irreducible  bundles.
According to the previous section, we have the following commutative diagram of $G$-modules

\begin{equation}\label{Diag:00}
\xymatrix{
\sT^{\d}_{k,\k,s;\a\b}\ar[d]_{\bdel^\a(\bDel^*)^\b T^s}\ar[rr] && \sT^{\d'}_{k',\k',s';\a'\b'}\\
\sT^{\d_0}_{k_0,\k_0,0;00}\ar[rr] && \sT^{\d'_0}_{k'_0,\k'_0,0;00}\ar[u]_{R^{s'}(\bDel_0)^{\a'}(\bdel^*)^{\b'}}
}
\end{equation}
for well-chosen degrees and density weights on the bottom part. The vertical arrows being isomorphisms, the general classification of conformally invariant operators acting on $\sT^\d$ boils down to the one on its irreducible pieces of the form $\sT^\d_{k,\k,0;00}$. They are given by certain linear combinations of $\CE(p,q)$-invariant operators of the type
\begin{equation}\label{InvMonomial}
\bd_{\bf 0}^{\g'}(\bd^*_{\bf 0})^{\g} G_{\bf 0}^g L^\ell_{\bf 0} D^d_{\bf 0} ,
\end{equation}
where the index ${\bf 0}$ denotes the restriction and corestriction to $\ker T\cap\ker \bDel^*\cap\ker\bdel$.  
We determine all such $G$-invariant operators below. Using their equivalent description as morphisms of generalized Verma modules, their classification can also be derived from the general statements in \cite{BCo85a,BCo85b}.

\begin{thm}\label{thmConfInv}
Let $k,k'\in\bbN$, $0\leq\k,\k'\leq n$, $\d,\d'\in\bbR$. Set $j=n(\d'-\d)$. 
Over $(\bbR^{p,q},F\eta)$, the space of conformally invariant 
linear operators $\Hom(\bbT^\d_{k,\k,0;00},\bbT^{\d'}_{k',\k',0;00})^G$ 
  is either trivial or of dimension~$1$, generated by
\begin{itemize}
\item $D_{\bf 0}^d$ if $k'-k=-d$, $\k'=\k$, $j=0$ and $\d=1+\frac{2k-d}{n}$,
\item $G_{\bf 0}^g$ if $k'-k=g$, $\k'=\k$, $j=2g$ and $\d=-\frac{g}{n}$,
\item $\ccL_\ell$ if $k'=k$, $\k'=\k$, $j=2\ell$ and $\d=\half+\frac{k-\ell}{n}$,
\item $\bd_{\bf 0}$ if $k'=k$, $\k'-\k=1$, $j=1$ and $\d=\frac{k+\k}{n}$,
\item $\bd^*_{\bf 0}$ if $k'=k$, $\k'-\k=-1$, $j=1$ and $\d=1+\frac{k-\k}{n}$,
\end{itemize}
where $\ccL_\ell=\sum_{\ve=0,1}a_{\ve,j}\bd_{\bf 0}^\ve(\bd^*_{\bf 0})^\ve \sum_{j=0}^\ell  G_{\bf 0}^jL_{\bf 0}^{\ell-\ve-j}D_{\bf 0}^j$ for some coefficients $a_{\ve,j}\in\bbR$. 
If $n$ is even, we get also
\begin{itemize}
\item $\bd_{\bf 0}\ccL_\ell$ if $\k=n/2+\ell$, $k'=k$, $\k'-\k=1$, $j=2\ell+1$ and $\d=\half+\frac{k-\ell}{n}$,
\item $\bd^*_{\bf 0}\ccL_\ell$ if $\k=n/2-\ell$, $k'=k$, $\k'-\k=-1$, $j=2\ell+1$ and $\d=\half+\frac{k-\ell}{n}$.
\item $\bd_{\bf 0}\bd^*_{\bf 0}$ if $\k=n/2+1$, $k'=k$, $\k'=\k$, $j=2$ and $\d=\half+\frac{k-1}{n}$,
\item $\bd^*_{\bf 0}\bd_{\bf 0}$ if $\k=n/2-1$, $k'=k$, $\k'=\k$, $j=2$ and $\d=\half+\frac{k-1}{n}$.
\end{itemize}
\end{thm}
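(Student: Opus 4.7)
The strategy is to reduce the classification to operators between the irreducible ``harmonic'' pieces $\sT^\d_{k,\k,0;00}$ via the commutative diagram preceding the theorem, enumerate the $\ce(p,q)$-invariant candidates by bidegree and weight shift, and finally cut these down to the $\fkg$-invariants by imposing commutation with a single conformal inversion $\bar X_i$. The $\rO(p,q)$-symmetry already present guarantees that invariance under one such generator propagates to all of $\fkg_1$.

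By the analysis of paragraph \ref{ParConfInv} and the paragraph preceding the statement, any $\ce(p,q)$-invariant operator $\sT^\d_{k,\k,0;00}\to\sT^{\d'}$ (with target landing in $\ker T\cap\ker\bdel\cap\ker\bDel^*$) is a linear combination of the monomials \eqref{InvMonomial}. Reading off the table \eqref{tableInv}, the monomial $\bd_{\bf 0}^{\g'}(\bd^*_{\bf 0})^\g G_{\bf 0}^g L_{\bf 0}^\ell D_{\bf 0}^d$ shifts $k$ by $g-d$, shifts $\k$ by $\g'-\g$, and carries weight shift $n(\d'-\d)=\g+\g'+2g+2\ell$. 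Fixing the target indices $(k',\k',\d')$ leaves only finitely many admissible monomials: for $(k'-k,\k'-\k)\neq(0,0)$ exactly one monomial survives, matching $D^d_{\bf 0}$, $G^g_{\bf 0}$, $\bd_{\bf 0}$ or $\bd^*_{\bf 0}$; for operators preserving $(k,\k)$ one is left with the $(\ell+1)(\ell+2)/2$-dimensional family $\{\bd_{\bf 0}^\ve(\bd^*_{\bf 0})^\ve G_{\bf 0}^j L_{\bf 0}^{\ell-\ve-j} D_{\bf 0}^j\}$.

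The next step is to impose $[\bbL^\d_{\bar X_i},A]=0$ using the explicit formula \eqref{ActionT} together with the commutation relations of the thirteen generators collected in the appendix. For the pure monomials this produces a single scalar equation that fixes $\d$: e.g.\ for $\bd_{\bf 0}$ and $\bd^*_{\bf 0}$ one directly obtains the asserted weights $(k+\k)/n$ and $1+(k-\k)/n$. For the $(k,\k)$-preserving candidates the same computation is organized as a triangular linear system on the coefficients $a_{\ve,j}$; I expect this system to admit a one-dimensional kernel, with solvability simultaneously forcing $\d=\tfrac12+(k-\ell)/n$. This yields the operator $\ccL_\ell$. As an independent consistency check, Proposition \ref{PropCalculCasimir} and Theorem \ref{DiagCt} supply the Casimir eigenvalues $\g_{k,\k,0;00}(\d)$ and $\g_{k',\k',0;00}(\d')$, which must coincide for any $\fkg$-equivariant map; I would verify that each weight relation above produces the required equality.

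Finally, when $n$ is even and $\k=n/2\pm 1$ or $\k=n/2\pm\ell$, the maps $\bd_{\bf 0}$ and $\bd^*_{\bf 0}$ fail to be generically injective or surjective on the relevant Hodge type, so compositions of the form $\bd_{\bf 0}\bd^*_{\bf 0}$, $\bd^*_{\bf 0}\bd_{\bf 0}$, $\bd_{\bf 0}\ccL_\ell$, $\bd^*_{\bf 0}\ccL_\ell$ survive as extra invariants. The main obstacle is the recursion that determines $\ccL_\ell$: one must argue that the system governing the $a_{\ve,j}$ has rank exactly one less than its dimension, which requires careful bookkeeping of the brackets $[G,L]$, $[L,D]$, $[G,D]$, $[\bd,D]$, $[\bd^*,G]$ and the iterated commutators with $\bbL^\d_{\bar X_i}$. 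The overall structure is controlled by the Lie algebra $\spo(2|1,1)\ltimes\fh(2|1,1)$ of Proposition \ref{InvIso}, which should make the combinatorics tractable.
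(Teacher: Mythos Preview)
Your overall strategy matches the paper's: enumerate $\ce(p,q)$-invariant monomials \eqref{InvMonomial} and cut down by commutation with $\bbL^\d_{\bar X_i}$. However, the central claim that ``for $(k'-k,\k'-\k)\neq(0,0)$ exactly one monomial survives'' is false. Take for instance $(k'-k,\k'-\k,j)=(-2,0,2)$: the monomials $L_{\bf 0}D_{\bf 0}^2$, $G_{\bf 0}D_{\bf 0}^3$, and $\bd_{\bf 0}\bd^*_{\bf 0}D_{\bf 0}^2$ all have these invariants, so you face a genuine linear system and must explain why it has no nonzero solution. Your argument as written does not address such cases, and hence does not establish the ``trivial'' alternative in the dichotomy.

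The paper closes this gap by a factorization argument. Rather than treating $A$ as a sum of monomials, one writes it as a product $A=\bd_{\bf 0}^{\g'}(\bd^*_{\bf 0})^{\g}G_{\bf 0}^g\,\ccL_\ell\, D_{\bf 0}^d$, with $\ccL_\ell$ absorbing the residual $(k,\k)$-preserving part. The commutator $[A,\bbL^*_{\bar X_i}]$ then decomposes into five independent pieces living in $E_D\partial_{p_i}\oplus\Pi_{\bf 0}p_iE_G\oplus E_{\bd^*}\Pi_{\bf 0}\partial_{\xi^i}\oplus\Pi_{\bf 0}\xi_iE_{\bd}\oplus\partial_iE_L$, and the vanishing of each piece separately forces the corresponding outer factor ($D_{\bf 0}^d$, $G_{\bf 0}^g$, $\bd^*_{\bf 0}$, $\bd_{\bf 0}$, $\ccL_\ell$) to be individually $\fkg$-invariant. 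Each such invariance pins $\d$ to a specific value, and for odd $n$ these five values are pairwise incompatible, so at most one exponent is nonzero. This is the mechanism that rules out mixed products like $L_{\bf 0}D_{\bf 0}^2$, and it is what your proposal is missing.

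Your account of the even-$n$ extras is also off. The additional operators do not arise from a failure of injectivity of $\bd_{\bf 0}$ or $\bd^*_{\bf 0}$ on Hodge components; rather, for even $n$ the $\d$-constraints attached to $\bd_{\bf 0}$, $\bd^*_{\bf 0}$, and $\ccL_\ell$ can coincide for the specific values of $\k$ listed, allowing two of the outer factors to be simultaneously nonzero.
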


\begin{proof}
We follow the proof given by the author in \cite{Mic11a}, in the case of the cotangent bundle. 
Let $A\in\Hom(\bbT^\d_{k,\k,0;00},\bbT^{\d'}_{k',\k',0;00})^G$. 
According to \eqref{InvMonomial}, it is of the form $A=\bd_{\bf 0}^{\g'}(\bd_{\bf 0}^*)^{\g} G_{\bf 0}^g\ccL_\ell D_{\bf 0}^d$, where $g,d,\ell$ are integers, $\g,\g'=0,1$ and
$\ccL_\ell=\sum_{\ve=0,1}a_{\ve,j}\bd_{\bf 0}^\ve(\bd^*_{\bf 0})^\ve \sum_{j=0}^\ell  G_{\bf 0}^jL_{\bf 0}^{\ell-\ve-j}D_{\bf 0}^j$ with $a_{\ve,j}\in\bbR$. 
Moreover, we can ask for $a_{0,0}\neq 0$ so that $\ell$ is minimal. 

The operator $A$ being $\CE(p,q)$-invariant, its $G$-invariance is equivalent to
its invariance under a conformal inversion $\bar{X}_i$.
The action of $\bar{X}_i$ on the powers of the five operators entering into $A$ is computed in the appendix, it follows that
\begin{equation}\label{Egdl}
[A,\bbL_{\bar{X}_i}^*]\in E_D\partial_{p_i}\oplus \Pi_{\bf 0} p_iE_G\oplus\partial_i E_L\oplus \Pi_{\bf 0}\xi_i E_{\bd} \oplus E_{\bd^*}\Pi_{\bf 0}\partial_{\xi^i},
\end{equation}
where $\Pi_{\bf 0}$ is the projection onto the space $\ker T\cap\ker \bDel^*\cap\ker\bdel$, and $E_D$, $E_G$, $E_L$, $E_{\bd}$, $E_{\bd^*}$ are vector spaces generated by the five operators $\bd_0$, $\bd^*_0$, $G_0$,  $L_0$, $D_0$, which satisfy
$E_D D, G E_G, L E_L, \bd E_{\bd}, E_{\bd^*}\bd^* \in\D(\bbT^\d_{k,\k,0;00},\bbT^{\d'}_{k',\k',0;00})$.  
The independence of the monomials in $A$ with different powers of $L_0$ together with the vanishing of the five components of $[A,\bbL_{\bar{X}_i}^*]$ lead then to the result.
E.g. the vanishing of the component in $E_D\partial_{p_i}$ of the higher degree term in $L_{\bf 0}$ of~$[A,L_{X_i}^*]$ reads as
\[
[D_{\bf 0}^{d},L_{X_i}^*]=0.
\]
By the Relations \eqref{L0Xi}, if $d\neq 0$, the above equation is equivalent to $\d=1+\frac{2k-d-1}{n}$.
Along the same reasoning we get that $G_{\bf 0}^g$, $\bd_{\bf 0}^{\g'}$ and $(\bd^*_{\bf 0})^\g$ 
are $G$-invariant and, by \eqref{L0Xi}, we have
$\d+\frac{2l}{n}=\frac{1-g}{n}$ if $g\neq 0$, $\d+\frac{2(\ell+g)}{n}=1+\frac{k-\k}{n}$ if $\g=1$ and $\d+\frac{2(\ell+g)+\g}{n}=\frac{k+\k}{n}$ if $\g'=1$. 
As the operator~$A$ is $G$-invariant, the operator $\ccL_\ell$ is also $G$-invariant. 
Since the component in $\partial_iE_L$ of the higher degree term in~$L_0$ of $[\ccL_\ell,L_{X_i}^*]$  vanishes, Eqs.\ \eqref{L0Xi} lead us to
$\d=\half+\frac{k-\ell}{n}$ if $\ell\neq 0$. Then, straightforward but lengthy computations show that there exist 
unique reals $a_{\ve,j}$ such that $\ccL_\ell$ is conformally invariant. 

If $n$ is odd, the five values found for $\d$ are incompatible two by two, so only one of the five exponents can be non zero. If $n$ is even, the values of $\d$ for two non-vanishing exponents among $\g$, $\g'$, $\ell$ can be compatible for constrained value of $\k$. The result follows.
\end{proof}

\begin{rmk}\label{rmk:ginv}
Let $\tau\in\rO(p,q)$ such that $\tau^2=\Id$ and $\det\tau=-1$.
Under the action of $\tau$, the $\fkg$-invariant operators are preserved 
up to a global sign. Those which are fixed by $\tau$ are $G$-invariant and classified above. 
Those which are anti-fixed by $\tau$ are built from the canonical volume
form on $\bbR^{p,q}$.
They are easily proved to be 
fiberwise operators preserving the $p$-degree.
\end{rmk}

\subsection{Existence and uniqueness of the $\fkg$-equivariant quantization and superization}

In the seminal paper \cite{DLO99}, existence and uniqueness of the conformally equivariant quantization of cotangent bundles was proven, using diagonalization of the Casimir operators of the modules of differential operators and of their symbols. Thanks to Theorem \ref{DiagCt}, we can
apply the same method to prove existence and uniqueness of equivariant superization and quantization of supercotangent bundles, introduced in Definitions \ref{DefS} and \ref{DefQ}. This is our main result.

\begin{thm}\label{superequi}
Let $(M,\mg)$ be a conformally flat manifold of even dimension and $\d=\m-\l\in\bbR$. 
There exist two subsets $I^\fS,I^\cQ\subset\bbQ$ such that:
\begin{enumerate}
\item if $\d\not\in I^{\fS}$, there exists a unique conformally equivariant superization 
$
\fS^\d:\sT^\d\rightarrow \sS^\d,
$
\item if $\d\not\in I^{\cQ}$, there exists a unique conformally equivariant quantization 
$
\cQlm:\sS^\d\rightarrow \Dslm.
$
\end{enumerate}
\end{thm}

\begin{proof}
The two results can be proved in the same way, following \cite{DLO99}. We focus on the second one, as we will provide a stronger statement for superization in Theorem \ref{ProprS}. 

Existence and uniqueness of $\cQlm$ is a result of local nature so that we can work over the local conformal model $(\bbR^{p,q},F\eta)$, with $F$ a positive function. 
We denote by $\widetilde{\Cd}=\cF\circ\Cd\circ\cF^{-1}$ the Casimir operator of $\sS^\d$ and by $\widetilde{\Clm}=\cN\circ\cF\circ\Clm\circ\cF^{-1}\circ\cN^{-1}$ the one of $\Dslm$.
 The operators $\Cd$ and $\Clm$ are computed in Proposition \ref{PropCalculCasimir}. If a conformally equivariant quantization $\cQlm$ exists then it intertwines the two former Casimir operators, i.e.\ $\cQlm\circ\widetilde{\Cd}=\widetilde{\Clm}\circ\cQlm$. As a consequence, each eigenvector of $\widetilde{\Cd}$ is mapped by $\cQlm$ to an eigenvector of $\widetilde{\Clm}$ of same eigenvalue and same principal symbol.

\begin{lem}\label{lem:MethodCasimir}
Let $k,\k,s,\a,\b\in\bbN$ such that the
space $\sT^\d_{k,\k,s;\a\b}$ is well-defined.
Assume $N^\d$ is a linear operator on $\sT^\d$ which lowers by one the $p$-degree.
There exists a finite subset $I\subset\bbQ$ such that, for all $\d\in\bbR\setminus I$
and $P\in\sT^\d_{k,\k,s;\a\b}$,
there exists a unique eigenvector of $\tC+N^\d$ 
of the form $P_k+P_{k-1}+\cdots + P_0$ with $P_k=P$ and $P_l\in\bigoplus_{\k}\sT^\d_{l,\k}$.
\end{lem}

\begin{proof}
We use notation of the Lemma's statement.
According to Theorem \ref{DiagCt}, we have 
$\tC P_k= \g_{k,\k,s;\a\b} P_k$. Hence, the equality 
$(\tC+N)(P_k+\cdots+P_0)=\g (P_k+\cdots+P_0)$ implies that $\g=\g_{k,\k,s;\a\b}$ and $NP_l=(\g_{k,\k,s;\a\b}-\tC)P_{l-1}$
for all $l=k,k-1,\ldots,1$. Then, $P_k+\cdots +P_0$ is uniquely determined by $P_k$, as soon as $\g_{k,\k,s;\a\b}$ is distinct of the eigenvalues of $\tC$ on $\bigoplus_{0\leq l\leq k-1}\bigoplus_\k\sT^\d_{l,\k}$. In view of Formula \eqref{gkksab}, this is true except for a finite number of rational values of $\d$.
\end{proof}

The lemma can be applied to the Casimir operators $\Cd$ and $\Clm$. Starting with $P\in\sT^\d_{k,\k,s;\a\b}$, we get unique eigenvectors $P^\sS$ and $P^\sD$ of $\Cd$ and $\Clm$. In consequence, a conformally equivariant quantization should satisfy $\cQlm:\cF(P^{\sS})\mapsto\cN\circ\cF(P^\sD)$. 
By Decomposition \eqref{DecompoT}, this specifies a unique map on the full symbol space $\sT^\d$, 
which is indeed a $\fkg$-equivariant quantization.
By Lemma \ref{lem:MethodCasimir}, this reasoning applies for all $\d\in\bbR\setminus I$ for some $I\subset \bbQ$.
\end{proof}

The significance of the subsets of exceptional values of the weight $\d$ in the context of equivariant quantization has been revealed in \cite{Mic11a}, following previous results in \cite{Sil09}. They correspond to the existence of $\fkg$-invariant operators on the initial module, which are not fiberwise. The proof in \cite{Mic11a} generalizes straightforwardly to this context.
We use the convention $\sT^\d_{k,\k}=\{0\}=\sS^\d_{k,\k}$ if $k<0$ or $\k<0$ or $\k>n$.

\begin{thm}\label{ExistenceSuper}\cite{Mic11a}
The $\fkg$-equivariant superization exists and is unique on $\sT^\d_{k,\k,s;\a\b}$ if and only if there is no non-trivial $\fkg$-invariant operator from $\sT^\d_{k,\k,s;\a\b}$ to $\sT^\d_{k-\ell,\k+2\ell}$, for $\ell\in\bbN^*$.  \\
The $\fkg$-equivariant quantization exists and is unique on a $\fkg$-submodule of $\sS^\d_{k,\k}$ if and only if there is no non-trivial $\fkg$-invariant operator from $\sS^\d_{k,\k}$ to $\sS^\d_{k-\ell,\k+2\ell-\ell'}$, for $\ell,\ell'\in\bbN$ and $\ell+\ell'\geq 1$.
\end{thm}

From Theorem \ref{thmConfInv} and Remark \ref{rmk:ginv}, we deduce then the following corollary.

\begin{cor}\label{cor:existQ}
The subsets of critical values $I^{\fS}$ and $I^{\cQ}$ for the conformally equivariant superization and quantization are included into $\frac{1}{n}\bbN^*$.
\end{cor}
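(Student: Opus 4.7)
The plan is to combine Theorem~\ref{ExistenceSuper}, which characterizes $I^\fS$ and $I^\cQ$ as the weights admitting a non-trivial same-weight $\fkg$-invariant operator strictly lowering the $p$-degree, with the explicit classification of conformally invariant operators provided by Theorem~\ref{thmConfInv}.

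For $I^\fS$, the commutative diagram opening Section~\ref{ParConfInvLoc} lets me factor any $\fkg$-invariant map between two components of the decomposition~\eqref{DecompoT} as a composition of the $\ce(p,q)$-isomorphisms $\bdel^\a(\bDel^*)^\b T^s$ and $R^{s'}(\bDel_0)^{\a'}(\bdel^*)^{\b'}$ with a $\fkg$-invariant operator $\sT^\d_{k_0,\k_0,0;00}\to\sT^\d_{k_0',\k_0',0;00}$ between primitive pieces of the \emph{same} weight $\d$. Among the operators listed in Theorem~\ref{thmConfInv}, the constraint $j=n(\d'-\d)=0$ singles out only the family $D_{\bf 0}^d$, for which
\[
\d\;=\;1+\frac{2k_0-d}{n}\;=\;\frac{n+2k_0-d}{n}.
\]
Non-triviality forces $1\leq d\leq k_0$, so $\d$ is a strictly positive integer multiple of $1/n$, establishing $I^\fS\subset\frac{1}{n}\bbN^*$.

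For $I^\cQ$, I would transfer the question to the tensorial side via the $\ce(p,q)$-equivariant isomorphism $\cF$ of Proposition~\ref{prop:bbL-sL-ccL}. Formula~\eqref{EqLtLd} shows that $\cF^{-1}\sL_X^\d\cF-\bbL_X^\d$ strictly lowers the $p$-degree, so the leading-$p$-order component of any $\fkg$-invariant operator on $\sS^\d$ descends to a genuine $\fkg$-invariant operator between graded pieces of $\sT^\d$ at the same weight $\d$. Applying the preceding paragraph then yields $I^\cQ\subset\frac{1}{n}\bbN^*$.

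The delicate point is to ensure in each case that the leading-order reduction produces a \emph{non-zero} operator between primitive same-weight pieces, so that Theorem~\ref{thmConfInv} genuinely applies with $j=0$; this requires careful bookkeeping of the degree shifts encoded in~\eqref{Tkksab} and, for $\cQ$, a filtration-compatibility argument in the spirit of the uniqueness proof of Theorem~\ref{superequi}.
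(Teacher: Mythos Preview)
Your reduction to primitive pieces contains a real error. The vertical isomorphisms $\bdel^\a(\bDel^*)^\b T^s$ and $R^{s'}(\bDel_0)^{\a'}(\bdel^*)^{\b'}$ in the diagram of Section~\ref{ParConfInvLoc} are $\fkg$-isomorphisms only when the weights are shifted according to table~\eqref{tableInv}: each of $\bdel,\bDel^*$ contributes a shift of $-1/n$ and $T$ a shift of $-2/n$, so that $\d_0=\d-(\a+\b+2s)/n$ and $\d_0'=\d-(\a'+\b'+2s')/n$. Hence the primitive shift is
\[
j_0=n(\d_0'-\d_0)=(\a+\b+2s)-(\a'+\b'+2s')\in\bbZ,
\]
which is \emph{not} forced to vanish. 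Restricting to $j=0$ and hence to $D_{\bf 0}^d$ alone is unjustified: the Remark following Theorem~\ref{thm_equi_adaptees} records that the critical values of $\fS^\d$ in table~\eqref{tableInvSuper} correspond at the primitive level to $\bd_{\bf 0}$, $G_{\bf 0}$, $D_{\bf 0}$, and $\bd_{\bf 0}$ respectively, so $j_0=1$ and $j_0=2g$ genuinely occur.

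The repair is straightforward in spirit but requires more bookkeeping than you have done. Since $\d=\d_0+(\a+\b+2s)/n$, you must run through \emph{all} cases of Theorem~\ref{thmConfInv}, checking for each that the resulting $\d$ is a positive multiple of $1/n$. This is not automatic: for $G_{\bf 0}^g$ one has $\d_0=-g/n\leq 0$, so positivity of $\d$ relies on the constraint $\a+\b+2s>g$ coming from the target-space restriction in Theorem~\ref{ExistenceSuper}; and for $\ccL_\ell$ one has $\d_0=\tfrac12+\tfrac{k-\ell}{n}$, which for odd $n$ does not lie in $\tfrac1n\bbZ$ at all, so you must argue separately that this family cannot arise among the relevant same-weight operators (it does not preserve the Hamiltonian grading in the required way). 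The same caveats apply to your $I^\cQ$ argument, which inherits the gap once you reduce to the tensorial side. Your last paragraph correctly flags the need for care, but the specific obstruction is the weight shift, not merely the non-vanishing of the leading term.
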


We also have the following proposition. 

\begin{prop}\label{GgEquiv}
If it exists, the $\fkg$-equivariant superization (resp.\ quantization) is in fact $G$-equivariant.
\end{prop}
\begin{proof}
Let $\tau\in\rO(p,q)$ such that $\tau^2=\Id$ and $\det\tau=-1$.
The $\fkg$-equivariant superization splits into two parts,
one is fixed by the action of $\tau$ and the other one is antifixed.
The fixed part is $G$-equivariant,
while the anti-fixed part corresponds, via the map $\cF$, 
to a $\fkg$-invariant operator on $\sT^\d$
which is anti-fixed and strictly lowers the $p$-degree.
According to Remark \ref{rmk:ginv}, this is the zero operator. 
The proof is analogous for the $\fkg$-equivariant quantization.
\end{proof}

We have defined a conjugation in $\cO(\cM)$ (see \eqref{conjugation})
and an adjoint operation on $\sD^{\l,1-\l}$ (see \eqref{adjunction}). 
The conformally equivariant quantization intertwines both.

\begin{prop}
If $\m-\l\notin I^\cQ$ and $\l+\m=1$, the conformally equivariant  quantization $\cQlm$ 
satisfies $\cQlm(\overline{P})=\cQlm(P)^*$, for all $P\in\sS^{\m-\l}$.
\end{prop}
\begin{proof}
Let $\l+\m=1$. As the linear map $\g$ intertwines conjugation and the adjoint operation, 
the map $P\mapsto\cQlm(\bar{P})^*$ is a right inverse to the principal symbol map $\sigma$ on homogeneous symbols. 
Moreover, we easily check that $\overline{\sL_X^\d P}=\sL_X^\d\overline{P}$
and $(L_X^\l)^*=-L_X^\m$, for all $X\in\fkg$ and $P\in\sS^\d$. 
Consequently, the map defined above is a $\fkg$-equivariant quantization 
and by uniqueness it is equal to $\cQlm$. We deduce that $\cQlm$ intertwines conjugation and adjoint operation.
\end{proof}

Let $\cE=p_i\partial_{p_i}$ and $\Sigma=\xi^i\partial_{\xi^i}$ be the Euler operators 
of the vector bundles $T^*M$ and $T[1]M$. Assume $\m-\l=\d\notin I^\fS\cup I^\cQ$. 
According to Proposition \ref{IsoGrDST} and Theorem \ref{superequi}, the map $\cQlm\circ\fS^\d\circ((\hbar/\bi)^{-\cE}\otimes(\sqrt{2})^\Sigma\g^{-1})$  
is a $\fkg$-module morphism 
$$
\Ga(\cS TM\otimes|\Lambda|^\d)\otimes_\Cinfty\Ga(\End S)\to\D(M;S\otimes|\Lambda|^\l,S\otimes|\Lambda|^\m).
$$
One can show this is a right inverse of the usual principal symbol maps. 
Hence, it is a $\fkg$-equvariant quantization in the sense of \cite{CSi09}.

\section{Explicit formul{\ae} for the $\fkg$-equivariant quantization and superization}

\subsection{Local formul{\ae}}

We restrict to the local model $(\bbR^{p,q},F\eta)$, with $p+q$ even, so that the conformally equivariant
quantization and superization read as in Eqs.\ \eqref{Eq:cQN} and \eqref{Eq:fSN}.
We provide explicit formul{\ae} for both maps in terms of the Euclidean invariants operators
(for the metric~$\eta$) classified in Sect.\ \ref{SecInv}. 
We start with the superization which is easier to deal with.
As previously, the $0$ index denotes restriction and corestriction of operators to the space $\ker T$ and $\Pi_0$ is the projection onto that space. We set $I^\fS:=\frac{1}{n}(\bbN\setminus\{0,1\})=\{\frac{2}{n},\frac{3}{n},\ldots\}$.

\begin{thm}\label{ProprS}
The conformally equivariant superization $\fS^\d$ exists on $\sT^\d_{k,\k,s;\a\b}$ if and only if $\d\in \bbR\setminus I^{\fS}_{k,\k,s;\a\b}$, with $I^{\fS}_{k,\k,s;\a\b}\subset I^\fS$ a finite subset given below. If the map $\fS^\d$ exists, then it is unique. In terms of the map $\cF$ (see \eqref{Tenseur_Poly_Moments}) and the operators in \eqref{EqInvIsoSpo}--\eqref{EqInvIsoH}, it reads as 
\begin{eqnarray*}
\fS^\d &=&\cF\circ\bigg(\Id  +\frac{\hbar}{\bi}c_{\bd}\\ 
&\times&\Big[\bd\bdel+c_D \,R^s\bDel\bdel(D+\frac{1}{k+\k-n\d}\bd\bDel^*)T^s+
c_G \,R^{s-1}\bDel\bdel(G_0+\frac{1}{k+\k-n\d}\bd\bdel^*_0)T^s \Big]\bigg),
\end{eqnarray*}
where the real coefficients are given by
\begin{equation}\label{coef_generiqueS} 
\left\{\begin{array}{l}
c_{\bd}=-\frac{1}{k+\k+1 -n\d},\\ [6pt]
c_D=-\frac{1}{(2(k-s-1)+n(1-\d))\rho_{k,s}a_{k,s}},\\ [6pt]
c_G =-\frac{2s}{(2s-n\d)\rho_{k,s}},
\end{array}\right.
\end{equation} 
with $\rho_{k,s}=\prod_{s'=1}^s2s'(n+2(k-2s+s'-1))$ the eigenvalue of $R^sT^s$ on $\sT^\d_{k,\k,s}$
($\rho_{k,0}=1$ by convention) and $a_{k,s}=\frac{n+2(k-s)}{n+2(k-s+1)}$.

The non-existence cases $\d\in I^{\fS}_{k,\k,s;\a\b}$ correspond to the existence of $\fkg$-invariant operators with source space $\sT^\d_{k,\k,s;\a\b}$ which lower the $p$-degree. They are classified below
\begin{equation}\label{tableInvSuper}
\setlength{\extrarowheight}{6pt}
\begin{array}{|c||c|c|c|c|}
\hline
\d & \fkg-\text{invariant operator}  & \text{source space $\sT^\d_{k,\k,s;\a\b}$} \\[4pt]

\hline
\frac{k+\k+1}{n} & [\bd\bdel+c_D\ldots+c_G\ldots]&  \a\b\neq 10, \k\leq (n-2), 1\leq k   \\[4pt]

\hline
\frac{2s}{n} & R^{s-1}\bDel\bdel(G_0+\frac{1}{k+\k-n\d}\bd\bdel^*_0)T^s & \a=0, 1\leq s, \k\leq (n-2)  \\[4pt]

\hline
1+\frac{2(k-s-1)}{n} & R^s\bDel\bdel(D+\frac{1}{k+\k-n\d}\bd\bDel^*)T^s &  \b=1, \k\leq (n-2), 2\leq k   \\[4pt]

\hline
\frac{k+\k}{n}& R^s\bDel\bdel\bd\bDel^* T^s & \a\b=11, 1\leq\k\leq (n-1), 2\leq k  \\[4pt]

 & R^{s-1}\bDel\bdel\bd\bdel^* T^s & \a\b= 00, 1\leq s, 1\leq\k\leq (n-1) \\[4pt]
\hline 						
\end{array}
\end{equation}
\end{thm}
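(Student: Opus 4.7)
The proof is entirely local, so I work throughout on the conformal model $(\bbR^{p,q},F\eta)$. The existence/uniqueness statement on $\sT^\d_{k,\k,s;\a\b}$ is the specialization of Theorem \ref{ExistenceSuper}, combined with the classification of $\fkg$-invariants provided by Theorem \ref{thmConfInv}. It thus suffices to produce the explicit formula away from the critical weights, and to identify at each critical value an explicit $\fkg$-invariant obstruction matching the table \eqref{tableInvSuper}.

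The main device is the intertwining of Casimirs. Write $\fS^\d=\cF\circ(\Id+N)$ with $N$ lowering the $p$-degree. Proposition \ref{PropCalculCasimir} gives $\cF^{-1}\Cd\cF-\tC=2(\hbar/\bi)\,\bd\bdel$, so the $\fkg$-equivariance of $\fS^\d$ is equivalent, on every $\tC$-eigenvector $P\in\sT^\d_{k,\k,s;\a\b}$ with eigenvalue $\gamma=\gamma_{k,\k,s;\a\b}$ from Theorem \ref{DiagCt}, to the fixed-point equation
\begin{equation*}
(\tC-\gamma)\,NP=-2(\hbar/\bi)\,\bd\bdel\,(P+NP).
\end{equation*}
A direct check using $\bd^2=\bdel^2=0$ and the anticommutations $\{\bd,\bdel\}=\{\bd,\bDel\}=\{\bdel,\bDel\}=0$, together with $[\bd\bdel,R]=2\bd\bDel$ and $\bDel^2=0$, shows that $\bd\bdel$ annihilates each of the correction terms $\bd\bdel$, $R^s\bDel\bdel(\cdots)T^s$, $R^{s-1}\bDel\bdel(\cdots)T^s$ appearing in the formula; hence $\bd\bdel NP=0$ and only one layer of correction is needed.

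The ansatz for $NP$ is then dictated by Theorem \ref{DiagCt}: any $P\in\sT^\d_{k,\k,s;\a\b}$ has the form $R^s(\bDel_0)^\a(\bdel^*)^\b Q$ with $Q\in\ker T\cap\ker\bdel\cap\ker\bDel^*$, and expanding $\bd\bdel P$ through the commutation identities of the appendix yields a sum whose irreducible components in the decomposition of $\sT^\d_{k-1,\k+2}$ are governed by $\bd\bdel$, $R^s\bDel\bdel D\,T^s$, and $R^{s-1}\bDel\bdel G_0\,T^s$. The $\bd\bDel^*$ and $\bd\bdel^*_0$ corrections inside the brackets arise from enforcing membership in $\ker T$ via the projector $\Pi_0=\Id-\frac{1}{4n+2\cE}RT$. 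Inverting $\tC-\gamma$ on the respective Casimir eigenspaces, whose eigenvalues are extracted from \eqref{gkksab}, delivers the three scalars $c_\bd$, $c_D$, $c_G$: the factor $\rho_{k,s}$ reflects the action of $R^sT^s$ on $\sT^\d_{k,\k,s}$, while $a_{k,s}$ arises from commuting $T$ past $\bDel\bdel D$.

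Each zero of $c_\bd^{-1}$, $c_D^{-1}$, $c_G^{-1}$, or of the internal factor $k+\k-n\d$ corresponds to a critical weight. At such $\d$ the corresponding bracketed operator itself becomes $\fkg$-invariant on $\sT^\d_{k,\k,s;\a\b}$---a fact I verify by matching against the classification in Theorem \ref{thmConfInv}, after reducing to the bundles $\bbT^{\d_0}_{k_0,\k_0,0;00}$ as in paragraph \ref{ParConfInvLoc}---and by Theorem \ref{ExistenceSuper} it obstructs the existence of $\fS^\d$. The main technical obstacle I foresee is the projection step in paragraph three: to isolate in $\bd\bdel R^s(\bDel_0)^\a(\bdel^*)^\b Q$ its components in each $\CO(p,q)$-irreducible piece of $\sT^\d_{k-1,\k+2}$ requires a careful use of identities of the type $[T,\bd\bdel]$, $\bdel\Pi_0$, and $\bDel_0\bDel^*$ from the appendix. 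The fourth row of \eqref{tableInvSuper}, at $\d=(k+\k)/n$, is the most delicate, since the denominator $k+\k-n\d$ vanishes simultaneously inside both $c_D$-type and $c_G$-type corrections, producing the pair of invariants $R^s\bDel\bdel\bd\bDel^*T^s$ and $R^{s-1}\bDel\bdel\bd\bdel^*T^s$ that must be separately identified.
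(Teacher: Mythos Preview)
Your approach is valid but differs from the paper's. You pin down $N$ by solving the Casimir intertwining equation $(\tC-\gamma)NP=-2(\hbar/\bi)\,\bd\bdel P$, after observing that $\bd\bdel$ kills the correction terms; this requires decomposing $\bd\bdel P$ into the irreducible pieces of Theorem~\ref{DiagCt} and inverting $\tC-\gamma$ on each, and it only yields $\fkg$-equivariance once you have invoked existence from Theorem~\ref{ExistenceSuper} (Casimir intertwining alone is necessary, not sufficient). The paper instead works directly with the equivariance condition under the inversions $\bar{X}_i$: from \eqref{EqLtLd} one gets the single scalar equation $[N,\bbL^\d_{\bar{X}_i}]=-2(\hbar/\bi)\,\xi_i\bdel$, and then the five $\ce(p,q)$-invariant monomials $\bd\bdel$, $R^s\bDel\bdel DT^s$, $R^{s-1}\bDel\bdel G_0T^s$, $R^s\bDel\bdel\bd\bDel^*T^s$, $R^{s-1}\bDel\bdel\bd\bdel^*_0T^s$ are listed as the only candidates (by the classification \eqref{cQeucl}), their commutators with $\bbL^\d_{\bar{X}_i}$ are read off from the appendix identities \eqref{Commutateur}, and the coefficients $c_\bd,c_D,c_G$ drop out of a small linear system. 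The paper's route is more self-contained---it establishes existence, uniqueness, and the explicit formula simultaneously without appealing to Theorems~\ref{superequi} or~\ref{ExistenceSuper}---and avoids the irreducible decomposition of $\bd\bdel P$ that you flag as the main obstacle. Your route is more conceptual and makes the link between critical weights and Casimir degeneracies transparent, but the projection step and the extraction of the closed-form coefficients from eigenvalue differences is genuinely more laborious than the three commutator computations the paper performs.
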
 

\begin{proof}
By Eq.\ \eqref{Eq:fSN}, we have
 $\fS^\d=\cF\circ(\Id+N)$ with $N$ a $\ce(p,q)$-invariant operator 
from $\sT^\d_{k,\k,s;\a\b}$ to $\bigoplus_{1\leq \ell<k}\sT^\d_{k-\ell,\k+2\ell}$. 
By Propositions \ref{GgEquiv} and \ref{ce-InvDiff}, the operator
$N$ is in fact a $\CE(p,q)$-invariant differential operator. 
Using Formula \eqref{cQeucl}, we deduce that  $N$ 
 is a linear combination of the following linearly independent operators
$$
\bd\bdel, \quad R^s\bDel\bdel D T^s, \quad R^s\bDel\bdel\bd\bDel^* T^s, \quad R^{s-1}\bDel\bdel G_0T^s, \quad R^{s-1}\bDel\bdel \bd \bdel^*_0 T^s.
$$  

The $\fkg$-equivariance of $\fS^\d$ reads as $\fS^\d\circ\Lt=\Ld\circ\fS^\d$ for all $X\in\fkg$.
By the Formula \eqref{EqLtLd}, relating the $\fkg$-action on $\sS^\d$ and $\sT^\d$, the $\fkg$-equivariance of~$\fS^\d$ is equivalent to 
\begin{equation}\label{EquivRelationSuper}
[N,\bbL^\d_{\bar{X}_i}]=-2\frac{\hbar}{\bi}\xi_i\bdel,
\end{equation}
for all $i=1,\ldots,n$. The actions of the conformal inversion $\bar{X}_i$ on the operators $G,D,\bd,\bd^*$ are given in \eqref{Commutateur}, while the operators $R^sT^s$ and $\bDel\bdel$ are conformally invariant. We deduce that, on $\sT^\d_{k,\k,s}$,
\begin{eqnarray*}
[\bd\bdel,\bbL^\d_{\bar{X}_i}]&=&2(k+\k+1-n\d)\xi_i\bdel+\frac{2}{\rho_{k,s}}R^s\bDel\bdel\partial_{p_i}T^s+\frac{4s}{\rho_{k,s}}R^{s-1}\bDel\bdel p_iT^s,\\
\left[R^s \bDel\bdel DT^s,\bbL^\d_{\bar{X}_i}\right]&=& 2(2(k-s-1)+n(1-\d))R^s\bDel\bdel\partial_{p_i}T^s-2R^s\bDel\bdel\xi_i\bDel^* T^s,\\
\left[ R^s\bDel\bdel\bd\bDel^* T^s ,\bbL^\d_{\bar{X}_i}\right]&=& 2(k+\k-n\d)R^s\bDel\bdel\xi_i\bDel^* T^s,\\
\left[R^{s-1}\bDel\bdel G_0 T^s,\bbL^\d_{\bar{X}_i}\right]&=&2(2s-n\d)R^{s-1}\bDel\bdel\Pi_0 p_iT^s-2R^{s-1}\bDel\bdel\xi_i\bdel^*_0 T^s,\\
\left[R^{s-1}\bDel\bdel \bd \bdel^*_0 T^s,\bbL^\d_{\bar{X}_i}\right]&=& 2(k+\k-n\d)R^{s-1}\bDel\bdel\xi_i\bdel^*_0 T^s.
\end{eqnarray*}
The formula giving $\fS^\d$ is then deduced from the relation $R\partial_{p_i}T^s+2sp_iT^s=2s\Pi_0 p_i T^s+(a_{k,s})^{-1}R\partial_{p_i}T^s$. The remaining statements of the theorem easily follow.
\end{proof}

We provide an alternative formula for the conformally equivariant superization which holds on the whole tensor module $\sT^\d$.

\begin{prop}
Let $\d\in\bbR\setminus I^\fS$. On $\sT^\d$, the conformally equivariant superization is given by
\begin{equation}\label{EqrS} 
\fS^\d =\cF\circ\bigg(\Id  +\frac{\hbar}{\bi}C_{\bd}\Big[\bd\bdel+\bDel\bdel(n\d \mathbb{D}-\mathbb{G}T)C_{G}\Big]\bigg),
\end{equation}
where 
\begin{equation}\label{Coefbis_S}
\left\{\begin{array}{l}
(C_{\bd})^{-1}=-(\cE+\Sigma -n\d),\\ [6pt]
(C_{G})^{-1}=RT-n\d\big(2(\cE-1)+n(1-\d)\big),
\end{array}\right.
\qquad \qquad
\left\{\begin{array}{l}
\mathbb{D}=D+\frac{1}{\cE+\Sigma+1-n\d}\bd\bDel^*,\\ [6pt]
\mathbb{G}=G+\frac{1}{\cE+\Sigma+1-n\d}\bd\bdel^*.
\end{array}\right.
\end{equation}
\end{prop}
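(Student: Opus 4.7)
By the uniqueness part of Theorem \ref{superequi}, it suffices to check that the operator $\cF\circ(\Id+N)$ with
\[
N:=\frac{\hbar}{\bi}C_\bd\bigl[\bd\bdel+\bDel\bdel(n\d\mathbb{D}-\mathbb{G}T)C_G\bigr]
\]
satisfies two properties: (a) it is a right inverse of the principal tensorial symbol maps $\ve_{k,\k}$, and (b) it is $\fkg$-equivariant. Property (a) is immediate: the bracket maps $\sT^\d_{k,\k}$ into $\sT^\d_{k-1,\k+2}$, hence lies in the kernel of $\ve_{k,\k}$, while $\cF$ alone is a right inverse by Proposition \ref{prop:bbL-sL-ccL}.

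For (b), I would first observe that $N$ is already $\ce(p,q)$-invariant. Every building block is drawn from the Euclidean invariants \eqref{EqInvIsoSpo}--\eqref{EqInvIsoH}, and the operator-valued prefactors $C_\bd, C_G, (\cE+\Sigma+1-n\d)^{-1}$ are rational functions of the commuting generators $\cE, \Sigma, RT$. Invariance under the dilation Euler field is a weight-shift bookkeeping using table \eqref{tableInv}: each of the monomials $\bd\bdel$, $\bDel\bdel D$, $\bDel\bdel\bd\bDel^*$, $\bDel\bdel GT$, $\bDel\bdel\bd\bdel^*T$ has total shift $0$. It therefore suffices to establish the equivariance relation
\[
[N,\bbL^\d_{\bar{X}_i}]=-2\frac{\hbar}{\bi}\xi_i\bdel
\]
for the generators $\bar X_i$ of the conformal inversions. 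This I would do by following the pattern in the proof of Theorem \ref{ProprS}: expand the bracket using the commutators \eqref{Commutateur} from the appendix, use the conformal invariance of $\cE$, $\Sigma$, $RT$, $\bDel\bdel$ to commute the operator-valued coefficients past $\bbL^\d_{\bar X_i}$, and then apply the three identities for $[\bd\bdel,\bbL^\d_{\bar X_i}]$, $[R^s\bDel\bdel DT^s,\bbL^\d_{\bar X_i}]$, and $[R^{s-1}\bDel\bdel G_0T^s,\bbL^\d_{\bar X_i}]$ already derived in the proof of Theorem \ref{ProprS}, now reinterpreted globally on $\sT^\d$.

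The main obstacle is the bookkeeping of the operator-valued denominators: the outer $C_\bd$ acts on the image in $\sT^\d_{k-1,\k+2}$, so commuting it past $\bbL^\d_{\bar X_i}$ (which shifts $\cE+\Sigma$ by one) produces correction terms, and the inner denominator $(\cE+\Sigma+1-n\d)^{-1}$ inside $\mathbb{D}$ and $\mathbb{G}$ behaves similarly. These correction terms must cancel against analogous contributions obtained by commuting $C_G$ on the source side; the cancellation is forced by the precise choice of coefficients \eqref{Coefbis_S}, and verifying it amounts to repackaging the coefficient identities already implicit in Theorem \ref{ProprS}.

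An equivalent route, avoiding operator-valued bookkeeping altogether, is to restrict both sides to each irreducible component $\sT^\d_{k,\k,s;\a\b}$ of the decomposition \eqref{DecompoT} and invoke Theorem \ref{ProprS} component by component. In this approach, one evaluates $C_\bd$ on $\sT^\d_{k-1,\k+2}$ to recover $c_\bd$ directly; one reads off the scalar value of $RT$ on $\sT^\d_{k,\k,s;\a\b}$ from the factorization \eqref{Tkksab} together with the commutator $[T,R^s]$; and one reconciles the unprojected $G, \bdel^*$ of \eqref{EqrS} with the projected $G_0=\Pi_0 G$, $\bdel^*_0=\Pi_0\bdel^*$ of Theorem \ref{ProprS} by pushing the trailing $T$ through $R^s$. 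The residual factor produced by this reconciliation is precisely what accounts for the combinatorial ratio $a_{k,s}$ distinguishing $c_D$ from the denominator of $C_G$, and the factor $2s/(2s-n\d)$ yielding $c_G$.
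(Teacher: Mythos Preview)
Your proposal is correct and follows the same overall strategy as the paper: reduce to the single equivariance relation $[N,\bbL^\d_{\bar X_i}]=-2\frac{\hbar}{\bi}\xi_i\bdel$ and verify it by computing commutators with the inversion generators.

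The execution differs slightly, and the paper's version is tidier. Rather than reinterpreting the three component-wise identities from Theorem~\ref{ProprS} globally, or decomposing into irreducible pieces $\sT^\d_{k,\k,s;\a\b}$, the paper computes three global commutators directly on $\sT^\d$:
\[
[\bd\bdel,\bbL_{\bar X_i}],\qquad [\bDel\bdel\,\mathbb{D},\bbL_{\bar X_i}],\qquad [\bDel\bdel\,\mathbb{G}T,\bbL_{\bar X_i}].
\]
The point is that the extra terms $\frac{1}{\cE+\Sigma+1-n\d}\bd\bDel^*$ and $\frac{1}{\cE+\Sigma+1-n\d}\bd\bdel^*$ built into $\mathbb{D}$ and $\mathbb{G}$ are exactly what is needed for these commutators to close in the simple form $\bDel\bdel(\cdots\partial_{p_i}+\cdots p_iT)$, with no leftover $\xi_i\bDel^*$ or $\xi_i\bdel^*$ pieces. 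This sidesteps the ``operator-valued bookkeeping'' and cancellation of correction terms you anticipate: those cancellations are already absorbed into the definitions of $\mathbb{D}$ and $\mathbb{G}$. Once the three commutators are in hand, the relation $[N,\bbL^\d_{\bar X_i}]=-2\frac{\hbar}{\bi}\xi_i\bdel$ follows by inspection, since $C_G$ is precisely the inverse of the operator $RT+n\d(2(\cE-1)+n(1-\d))$ appearing when one combines the $\mathbb{D}$ and $\mathbb{G}T$ contributions. Your component-by-component route would also work but is less economical.
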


\begin{proof}
By Eq.\ \eqref{Commutateur} and the conformal invariance of $\bDel\bdel$, we get
the following equalities of operators acting on $\sT^\d$,
\begin{eqnarray*}
[\bd\bdel,\bbL^\d_{\bar{X}_i}]&=&2(\cE+\Sigma-n\d)\xi_i\bdel+\bDel\bdel\partial_{p_i},\\[3pt]
[\bDel\bdel\mathbb{D},\bbL^\d_{\bar{X}_i}]&=&2\bDel\bdel\big(\partial_{p_i}(2(\cE-1)+n(1-\d))-p_i T \big),\\[3pt]
[\bDel\bdel\mathbb{G}T,\bbL^\d_{\bar{X}_i}]&=&2\bDel\bdel\big(-n\d p_iT+\partial_{p_i}RT\big).
\end{eqnarray*}
As a consequence, the operator
$$
N=\frac{\hbar}{\bi}C_{\bd}\Big[\bd\bdel+\bDel\bdel(n\d \mathbb{D}-\mathbb{G}T)C_{G}\Big]
$$
satisfies the equivariance Eq.\ \eqref{EquivRelationSuper}.
By uniqueness of the map $\fS^\d$, we deduce that  $\fS^\d=\cF\circ(\Id+N)$.
\end{proof}

Concerning the conformally equivariant quantization, a general formula is out of reach. 
We restrict ourselves to $\sS^\d_{\leq 1}$, the space of symbols of $p$-degree $0$ or $1$. 

\begin{thm}\label{thm_equi_adaptees}
Let $\l,\m\in\bbR$ such that $\m-\l=\d$.
For ${n\d\neq 1,\ldots,n+1}$, there exists a unique conformally equivariant quantization $\cQlm:\sS^\d_{\leq 1}\rightarrow \sD^{\l,\m}_1$, which reads as
\begin{equation}\label{cQConfInvIso}
\cQlm=\cN \circ\cF\circ \left(\Id + \frac{\hbar}{\bi}\Big[c_D(\Sigma)\,D + c_{\bdel}(\Sigma) \,\bd^{*} \bdel +c_{\bd}(\Sigma)\, \bd \bDel^* + c_{*}(\Sigma)\, \bd^{*} \bDel^*\Big]\right)\circ\cF^{-1},  
\end{equation} 
the coefficients depending on the odd Euler operator $\Sigma=\xi^i\partial_{\xi^i}$ as follows
\begin{equation}\label{coef_generiqueQ} 
\left\{\begin{array}{l}
c_D(\Sigma)=\frac{2n\l+1}{2n(1-\d)+2},\\ [6pt]
c_{\bdel}(\Sigma) =\frac{n(1-\d-2\l)}{(\Sigma-n(1-\d))(2n(1-\d)+2)}, \\[6pt] 
c_{\bd}(\Sigma) =-\frac{n(1-\d-2\l)}{(\Sigma-n\d)(2n(1-\d)+2)},\\ [6pt]
c_{*}(\Sigma) =\frac{1}{4(\Sigma-n(1-\d))}.
\end{array}\right.
\end{equation} 
The remaining critical cases, $n\d=1,\ldots n+1$, correspond to the existence of a $\fkg$-invariant operator on certain subspaces $\sT^\d_{1,\k,0;\a\b}$ as below
\begin{equation}\label{tableInvQuant}
\setlength{\extrarowheight}{4pt}
\begin{array}{|c||c|c|c|c|}
\hline
\d & \fkg-\text{invariant operator}   & \text{source space $\sT^\d_{1,\k,0;\a\b}$} & \l \text{ s.t. } \cQ^{\l,\l+\d} \text{ exists} \\[4pt]
\hline
\frac{1}{n} & \bd\bDel^* & \k=1, \a=1& \l=\frac{n-1}{2n} \\[4pt]
 & \bd^*\bdel & \k=n-1, \a\b=01 & \l=\frac{n-1}{2n} \\[4pt]
\hline
\frac{\ell}{n},  & \bd\bDel^*& \k=\ell, \a=1& none\\[4pt]
 2\leq \ell\leq n & \bd^*\bdel & \k=n-\ell, \a\b=01& none\\[4pt]
  & \bd^*\bDel^* & \k=n-\ell+2, \a\b=10& none\\[4pt]
\hline
\frac{n+1}{n} & D+\frac{1}{\Sigma-n-1}\bd\bDel^*-\frac{1}{\Sigma+1}\bd^*\bdel & \k=0,\ldots,n & \l=\frac{-1}{2n}  \\[4pt]
\hline 						
\end{array}
\end{equation}
For the two critical values $\d=\frac{1}{n},\frac{n+1}{n}$, the quantization $\cQ^{\l,\l+\d}$ exists for a unique value of $\l$ and is given by, respectively,
\begin{eqnarray*}
\cQ^{\frac{n-1}{2n},\frac{n+1}{2n}}&=&\cN \circ\cF\circ \left(\Id + \frac{\hbar}{\bi}\Big[ c_D(\Sigma)\,D + c_{*}(\Sigma)\, \bd^{*} \bDel^*\Big]\right)\circ\cF^{-1},  \\ [8pt]
\cQ^{\frac{-1}{2n},\frac{2n+1}{2n}} &=&\cN \circ\cF\circ \left(\Id + \frac{\hbar}{\bi}\Big[ c_{\bdel}(\Sigma) \,\bd^{*} \bdel +c_{\bd}(\Sigma)\, \bd \bDel^* + c_{*}(\Sigma)\, \bd^{*} \bDel^*\Big]\right)\circ\cF^{-1}.  
\end{eqnarray*}
These two maps are not unique. On the $\fkg$-module $\fS^\d(\sT^\d_{1,\k,0;\a\b})$, they can be precomposed by  
$\Id+c\, \fS^\d(N)$, with $c\in\bbR$ and $N$ a $\fkg$-invariant operator as given in the list above.
\end{thm}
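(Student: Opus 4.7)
The proof parallels that of Theorem 5.1, with the ansatz dictated by the isometric invariants of Proposition 3.1 and the equivariance condition translated, via (2.11)--(2.12), into a linear system.

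\textbf{Ansatz.} I work on the local model $(\bbR^{p,q},F\eta)$ and seek the quantization in the form $\cQlm=\cN\circ\cF\circ(\Id+N)\circ\cF^{-1}$. Since $\cF$ and $\cN$ are sections of the corresponding principal symbol maps, any such map is automatically a right inverse of $\sigma$; moreover $\cQlm$ is $\e(p,q)$-equivariant if and only if $N$ is. Among the $13$ generators of Proposition \ref{InvIso}, only $D$, $\bd^*\bdel$, $\bd\bDel^*$ and $\bd^*\bDel^*$ act non-trivially on $\sT^\d_{\leq 1}$, lower the $p$-degree by exactly one, and have image inside $\sT^\d$; operators containing $T$ vanish on $\sT^\d_{\leq 1}$ and products of two $p$-decreasing invariants fall out of the ambient space. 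Hence
\[
N=c_D(\Sigma)\,D+c_{\bdel}(\Sigma)\,\bd^*\bdel+c_{\bd}(\Sigma)\,\bd\bDel^*+c_{*}(\Sigma)\,\bd^*\bDel^*,
\]
the coefficients being functions of the only non-trivial central $\e(p,q)$-invariant $\Sigma$ (the operator $\cE$ acts by $0$ or $1$ on $\sS^\d_{\leq 1}$).

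\textbf{Equivariance as a linear system.} Formul{\ae} (2.11)--(2.12) translate $\fkg$-equivariance of $\cQlm$ into a commutation relation
\[
[N,\bbL^\d_{\bar X_i}]\bigr|_{\sT^\d_{\leq 1}}=R_i,
\]
where $R_i$ is the explicit operator assembled from the curvature-like correction of (2.11) and from the $\l$-dependent and $\chi^j_i$-dependent corrections of (2.12). Computing the four commutators $[D,\bbL^\d_{\bar X_i}]$, $[\bd^*\bdel,\bbL^\d_{\bar X_i}]$, $[\bd\bDel^*,\bbL^\d_{\bar X_i}]$ and $[\bd^*\bDel^*,\bbL^\d_{\bar X_i}]$ using the appendix and matching components along the linearly independent tensorial monomials $\partial_{p_i}$, $\xi_i\bdel$, $\xi_i\bDel^*$, $p_i$ decouples the equivariance condition into a linear system on the four scalar functions $c_{\bullet}(\Sigma)$.

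\textbf{Solution and critical cases.} Inverting this system produces the closed formul{\ae} (5.10); the denominator $2n(1-\d)+2$ arises from the equation fixing $c_D$, and the denominators $\Sigma-n\d$ and $\Sigma-n(1-\d)$ reflect the inversion of a Laplacian-like factor in the remaining channels. For $n\d\notin\{1,\ldots,n+1\}$ all denominators are invertible on $\sS^\d_{\leq 1}$, yielding the unique generic quantization. For the critical values of table (5.11), each pole coincides with the appearance of the conformally invariant operator predicted by Theorem \ref{thmConfInv}. The pole becomes removable precisely when the corresponding numerator vanishes: $2n\l+1=0$ kills the pole of $c_D$ at $n\d=n+1$, and $n(1-\d-2\l)=0$ simultaneously kills the poles of $c_{\bdel}$ and $c_{\bd}$ at $n\d=1$; in either case $\l$ is pinned to a unique value, the surviving coefficients remain regular on $\sS^\d_{\leq 1}$, and a direct substitution yields the last two formul{\ae} of the theorem. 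For $2\leq n\d\leq n$ the constraints $2n\l+1=0$ and $n(1-\d-2\l)=0$ are incompatible, so no conformally equivariant quantization can exist on the affected component, and the pole term is exactly the $\fkg$-invariant listed in (5.11).

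\textbf{Main obstacle.} The real work is in the second step: extracting $R_i$ from (2.11)--(2.12) and aligning all four commutators in a single basis of independent operators on $\sT^\d_{\leq 1}$. The $\chi^j_i$ and $\l$-dependent pieces of (2.12) contribute simultaneously to several channels, and their interplay is what produces the $\l$-dependent numerators of (5.10); the appendix covers the underlying commutator identities, but the assembly into the final decoupled system requires careful bookkeeping. The rest of the argument is linear algebra.
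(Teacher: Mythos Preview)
Your strategy is exactly the paper's: reduce $\cQlm$ to the form $\cN\cF(\Id+N)\cF^{-1}$, list the admissible $\ce(p,q)$-invariant building blocks of $N$, and solve the commutator condition with the inversions. Two points in your execution need correction.

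\textbf{The ansatz is under-justified.} The operator $\bd\bdel$ acts non-trivially on $\sT^\d_{1}$, lowers the $p$-degree by one, and lands in $\sT^\d$, so it passes every criterion you listed. The reason it is excluded is a different one: $\bd\bdel$ sends $\sT^\d_{1,\k}$ to $\sT^\d_{0,\k+2}$ and therefore \emph{preserves} the Hamiltonian degree $2k+\k$. Since $\cQlm$ must be a right inverse of the Hamiltonian principal symbol, the correction $N$ has to strictly lower that degree, i.e.\ map $\sT^\d_{1,\k}$ into $\bigoplus_{\k'\leq\k+1}\sT^\d_{0,\k'}$. This is what rules out $\bd\bdel$ (and is stated explicitly in the paper's proof).

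\textbf{The right-hand side $R_i$ does not involve (2.11).} Write the equivariance as $(\Id+N)(\bbL^\d+A)=(\bbL^\d+A+B)(\Id+N)$, with $A$ the correction of (2.11) and $B$ the extra piece of (2.12). Since both $A$ and $N$ lower the $p$-degree by one, $[A,N]$ and $BN$ vanish on $\sT^\d_{\leq 1}$, and the condition collapses to $[N,\bbL^\d_{\bar X_i}]=B_{\bar X_i}$. Thus only the $\l$-term and the $\chi^j_i$-term of (2.12) contribute; the ``curvature-like correction'' of (2.11) does not. Consequently your list of independent channels is off: the correct basis is $\partial_{p_i}$, $\xi_i\bDel^*$, $\bdel\,\partial_{\xi^i}$, $\partial_{\xi^i}\bDel^*$ (not $\xi_i\bdel$ or $p_i$). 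The paper records the explicit right-hand side as
\[
[N,\bbL^\d_{\bar X_i}]=\tfrac{\hbar}{\bi}\bigl(2n\l\,\partial_{p_i}+\xi^i\bDel^*-\bdel\,\partial_{\xi_i}+\tfrac12\,\partial_{\xi_i}\bDel^*\bigr),
\]
and matching against the four commutators of $D,\bd^*\bdel,\bd\bDel^*,\bd^*\bDel^*$ with $\bbL^\d_{\bar X_i}$ (computed from the appendix, restricted to $\sT^\d_{1,\k}$) gives the system whose unique solution is (5.10). Your discussion of the critical values and the two surviving quantizations is correct.
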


\begin{proof}
By Eq.\ \eqref{Eq:fSN}, we have $\cQlm=\cN\circ\cF\circ(\Id+N)$ 
with $N$ a $\ce(p,q)$-invariant operator from $\sT^\d_{1,\k,0;\a\b}$ 
to $\bigoplus_{ \k'<\k+2}\sT^\d_{0,\k'}$. 
By Propositions \ref{GgEquiv} and \ref{ce-InvDiff}, the operator
$N$ is in fact a $\CE(p,q)$-invariant differential operator. 
Using Formula \eqref{cQeucl}, we deduce that  $N$ 
 is a linear combination of the following linearly independent operators,
$$
D,\quad \bd^*\bdel, \quad \bd\bDel^*,\quad \bd^*\bDel^*.
$$

The $\fkg$-equivariance of $\cQ^{\l,\m}$ reads as $\cQlm\circ\Ld=\LD\circ\cQlm$ for all $X\in\fkg$.  
By the Formul{\ae} \eqref{EqLtLd}--\eqref{EqLdLD}, relating the $\fkg$-action on $\sT^\d$, $\sS^\d$ and $\Dslm$, the $\fkg$-equivariance of $\cQlm$ is equivalent to
\begin{equation}\label{GEquiN}
[N,\bbL^\d_{\bar{X}_i}]=\frac{\hbar}{\bi}\big(2n\l\partial_{p_i}+\xi^i\bDel^*-\bdel\partial_{\xi_i}+\half\partial_{\xi_i}\bDel^*\big),
\end{equation}
for all $i=1,\ldots,n$.
From the Relations  \eqref{Commutateur} in appendix we deduce the following equalities in $\sT^\d_{1,\k}$, 
\begin{eqnarray*}
\left[\bd^*\bdel,\bbL_{\bar{X}_i}^\d\right]&=&2(\Sigma-n(1-\d)) (\bdel\partial_{\xi^i}-\partial_{p^i}),\\
\left[\bd\bDel^*,\bbL_{\bar{X}_i}^\d\right]
&=&2 (\Sigma-n\d)\xi_i\bDel^*,\\
\left[\bd^*\bDel^*,\bbL_{\bar{X}_i}^\d\right]&=&2(-\Sigma+n(1-\d)) \partial_{\xi^i}\bDel^*,\\
\left[D,\bbL_{\bar{X}_i}^\d\right]&=& 2\bdel\partial_{\xi^i}-2\xi_i\bDel^*+2n(1-\d)\partial_{p^i}.
\end{eqnarray*}
By  substitution in the $\fkg$-equivariance Condition \eqref{GEquiN}, this determines a unique operator $N$, as specified by \eqref{coef_generiqueQ}. The remaining statements of the theorem easily follow.
\end{proof}

\begin{rmk}
Via the conformally equivariant superization $\fS^\d$, the $\fkg$-invariant operators listed in \eqref{tableInvQuant} 
can be transported as $\fkg$-invariant operators acting on $\sS^\d$. They provide obstruction to the existence and uniqueness
of the conformally equivariant quantization, in accordance with Theorem \ref{ExistenceSuper}.
\end{rmk}

\begin{rmk}
After suitable restriction, corestriction and pull-back to 
$\ker T\cap\ker\bdel\cap\ker\bDel^*$ (via the commutative Diagram \eqref{Diag:00}),
each $\fkg$-invariant operator in Tables \eqref{tableInvSuper}  and \eqref{tableInvQuant} 
corresponds to one of the $\fkg$-invariant operators in Theorem \ref{thmConfInv}.
 They are respectively $\bd_{\bf 0},G_{\bf 0},D_{\bf 0},\bd_{\bf 0}$ for Table \eqref{tableInvSuper}
and $\bd_{\bf 0},\bd^*_{\bf 0},\bd_{\bf 0},\bd^*_{\bf 0},\bd^*_{\bf 0},D_{\bf 0}$  for Table \eqref{tableInvQuant}.
\end{rmk}

\begin{rmk}
By uniqueness of the conformally equivariant superization and quantization, the obtained formul{\ae} for $\fS^\d$ and $\cQlm$ are globally defined on arbitrary conformally flat manifolds $(M,\mg)$. Nevertheless, their building blocks are only locally defined.  
\end{rmk}

\subsection{Covariant formul{\ae} and the curved case}

Starting with a pseudo-Riemannian spin manifold $(M,\mg)$ endowed with the Levi-Civita connection, 
we get covariant derivations on all associated bundles to the principal bundle of spin frames. 
For $X\in\Vect(M)$, we denote by $\nabla^\l_X$ the one acting on $\Ga(S\otimes|\Lambda|^\l)$, 
and by $\partial^\nabla_X$ the one acting on $\sT^\d=\Ga(\bbT^\d)$. 
The latter covariant derivative can be interpreted as the horizontal covariant derivative on the supercotangent bundle $\cM$. It allows to generalize the $13$ invariant operators introduced in Proposition \ref{InvIso} as global operators over any pseudo-Riemannian manifold $(M,\mg)$, replacing the flat metric $\eta$ by $\mg$ and the derivatives $\partial_i$ by the covariant ones $\partial^\nabla_i$. Their commutation relations remain the same than in the flat case, see \eqref{CommRel}, except between the five ones containing covariant derivatives.  Accordingly, we denote them with a superscript $\nabla$,  
\begin{equation}\label{Op:nabla}
G^\nabla=\mg^{ij}p_i\partial^\nabla_j, \quad D^\nabla=\partial_{p_i}\partial^\nabla_i, \quad L^\nabla=\partial^\nabla_i \mg^{ij}\partial^\nabla_j, \quad \bd^\nabla=\xi^i\partial^\nabla_i, \quad \bd^{*\nabla}=\mg^{ij}\partial_{\xi^i}\partial^\nabla_j.
\end{equation}
This allows for the following definitions.

\begin{defi}
Let $(M,\mg)$ be a pseudo-Riemannian spin manifold of even dimension, $P\in\sT^\d$, $P_1+P_0\in\sS^\d_{\leq 1}$ 
(the index standing for the $p$-degree) and $\l,\m\in\bbR$ such that $\m-\l=\d$. 
Using the coefficients introduced in \eqref{Coefbis_S} and \eqref{coef_generiqueQ}, we define the applications
\begin{equation} \label{cSCovariant}
\fS^\d_\nabla(P) := |\vol_\mg|^{\frac{\Sigma}{n}}\left( P+\frac{\hbar}{\bi}C_{\bd}\Big[\bd^\nabla\bdel+\bDel\bdel(n\d \mathbb{D^\nabla}-\mathbb{G^\nabla}T)C_{G}\Big] P\right)
\end{equation}
and
\begin{multline}
\cQ^{\l,\m}_\nabla(P_1+P_0):=\frac{\hbar}{\bi} \, \left(\g\Big(\frac{1}{(\sqrt{2})^{\Sigma}}(\partial_{p_i}P_1)\Big)\nabla^\l_i\right)+\g(P_0)\\  
+\frac{\hbar}{\bi}\, \g\left(\frac{1}{(\sqrt{2})^{\Sigma}}\Big(c_D(\Sigma)\, D^\nabla P_1 + c_{\bdel}(\Sigma)\, \bd^{*\nabla}  \bdel P_1 +c_{\bd}(\Sigma)\, \bd^\nabla \bDel^* P_1 + c_{*}(\Sigma)\, \bd^{*\nabla}  \bDel^* P_1\Big)\right).\\ \label{cQCovariant}
\end{multline}
\end{defi}

Next result shows that these maps extend the maps $\fS^\d$ and $\cQlm$ to arbitrary pseudo-Riemannian spin manifolds.

\begin{prop}\label{Curved-Flat}
The maps $\fS^\d_\nabla$ and $\cQ^{\l,\m}_\nabla$ coincide with, respectively, the conformally equivariant superization and quantization if $(M,\mg)$ is conformally flat. 
\end{prop}

\begin{proof}
Let $(M,\mg)$ be a conformally flat manifold and $x$ a point in $M$. 
We prove that $\fS^\d_\nabla(P)$ and $\fS^\d(P)$ are equal at $x$. 
As the metric $\mg$ is conformally flat, there exists conformal coordinates in a neighborhood of $x$,
 such that $\mg_{ij}=F\eta_{ij}$ for $F$ a positive function and $\eta$ the flat metric.
Moreover, up to a conformal change of coordinates we can assume that all the first derivatives of $F$ at $x$ vanish. This is a classical result, which can be derived from the infinitesimal action of inversions $L_{\bar{X}_i}\eta=-4x_i\eta$. Using such a coordinate system, we can write down both $\fS^\d_\nabla(P)$ and $\fS^\d(P)$ at $x$. They coincide  since the covariant Formula \eqref{cSCovariant} involves only first order derivations. 
The same argument applies for the quantization Formula \eqref{cQCovariant}.
\end{proof}

The conformally equivariant quantization of cotangent bundles was 
extended to every pseudo-Riemannian manifold in a conformally invariant fashion  \cite{DOv01,MRa09,Sil09}. 
The situation here is more involved,  
only the composition $\cQlm\circ\fS^\d$ admits a conformally invariant formula. 

\begin{thm}\label{cQlmConfInvariant}
Let $(M,\mg)$ be a pseudo-Riemannian spin manifold, $P\in\sT_1^\d$ and $\l,\m\in\bbR$ such that $\m-\l=\d$.
The following map
\begin{multline*} 
\mathsf{Q}^{\l,\m}_\nabla(P)=\frac{\hbar}{\bi} \, \g\left(\frac{|\vol_\mg|^{\frac{\Sigma}{n}}}{(\sqrt{2})^{\Sigma}}\,(\partial_{p_i}P)\right)\nabla^\l_i\\ 
+\frac{\hbar}{\bi} \,\g\left(\frac{|\vol_\mg|^{\frac{\Sigma}{n}}}{(\sqrt{2})^{\Sigma}}\Big(c_0\,\bd^\nabla\bdel P+c_1\,D^\nabla P + c_2\, \bd^{*\nabla}  \bdel P +c_3\,\bd^\nabla \bDel^* P + c_4\, \bd^{*\nabla}  \bDel^* P\Big)\right),
\end{multline*}
is conformally invariant, i.e.\ depends only on the conformal class of the metric $\mg$, if and only if the real coefficients $c_0,c_1,c_2,c_3,c_4$ are such that $\mathsf{Q}^{\l,\m}_\nabla$ is equal to the composition $\cQ_\nabla^{\l,\m}\circ\fS^\d_\nabla$. Neither $\fS^\d_\nabla$ nor $\cQ_\nabla^{\l,\m}$ are conformally invariant.
\end{thm}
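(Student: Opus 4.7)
\emph{Proof sketch.} My plan has three steps: compute the composition on $\sT_1^\d$, establish its conformal invariance, and separately rule out invariance of each factor.

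To begin, I would compute $\cQ^{\l,\m}_\nabla\circ\fS^\d_\nabla$ explicitly on $\sT_1^\d$. Since $T$ annihilates every element of $\sT_1^\d$, formula \eqref{cSCovariant} collapses to $\fS^\d_\nabla(P)=|\vol_\mg|^{\Sigma/n}\bigl(P+\tfrac{\hbar}{\bi}C_{\bd}[\bd^\nabla\bdel+\tfrac{1}{n(1-\d)}\bDel\bdel\,\mathbb{D^\nabla}]P\bigr)$, whose correction terms have $p$-degree zero. On such zero-degree parts, \eqref{cQCovariant} reduces to the normal ordering $\cN\circ\cF$. Substituting and sorting by the five natural operator types appearing in $\mathsf{Q}^{\l,\m}_\nabla$ yields a unique $5$-tuple $(c_0,\ldots,c_4)$. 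The content of the theorem then becomes: this particular $\mathsf{Q}^{\l,\m}_\nabla$ is conformally invariant, and no other member of the family is.

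Next, I would establish conformal invariance of this composition. On the conformally flat local model, Proposition \ref{cQlmCovariant} identifies $\cQ^{\l,\m}_\nabla\circ\fS^\d_\nabla$ with $\cQ^{\l,\m}\circ\fS^\d$, which by Paragraph \ref{ParDefQS} is an AHS-equivariant quantization in the sense of \cite{CSi09}. The construction of \cite{CSi09} extends it canonically to an arbitrary conformal manifold by a formula natural in the Levi-Civita connection that is conformally invariant by construction; since $\cQ^{\l,\m}_\nabla\circ\fS^\d_\nabla$ is itself such a natural expression and coincides with this extension on the flat model, the two agree in general. I expect the main technical obstacle to lie here: justifying this naturality-based coincidence rigorously. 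An alternative is to verify invariance directly by tracking the transformation of each term under a Weyl rescaling $\mg\mapsto e^{2\sigma}\mg$, which is elementary but lengthy.

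Finally, I would address uniqueness and non-invariance of the separate factors. For the converse, any conformally invariant $\mathsf{Q}^{\l,\m}_\nabla$ restricts to a $\fkg$-equivariant right inverse of the principal symbol map on the flat model, so by the uniqueness statements of Theorems \ref{superequi} and \ref{thm_equi_adaptees} it must equal $\cQ^{\l,\m}\circ\fS^\d$ there; comparing with the first step pins the coefficients. For the non-invariance of $\fS^\d_\nabla$ alone, the prefactor $|\vol_\mg|^{\Sigma/n}$ in \eqref{cSCovariant} transforms as $e^{\sigma\Sigma}|\vol_\mg|^{\Sigma/n}$ under $\mg\mapsto e^{2\sigma}\mg$ and cannot be absorbed by the remaining connection-dependent terms on a symbol with nonzero $\xi$-degree. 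Since $\cQ^{\l,\m}_\nabla$ is a bijection (as the natural extension of the $\fkg$-equivariant isomorphism on the flat model), conformal invariance of $\cQ^{\l,\m}_\nabla$ would imply the same for $\fS^\d_\nabla=(\cQ^{\l,\m}_\nabla)^{-1}\circ(\cQ^{\l,\m}_\nabla\circ\fS^\d_\nabla)$, contradicting what was just established.
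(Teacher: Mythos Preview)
Your primary route to the ``if'' direction has a genuine gap. You want to identify $\cQ^{\l,\m}_\nabla\circ\fS^\d_\nabla$ with the conformally invariant extension produced by \cite{CSi09}, on the grounds that both are natural in the Levi-Civita connection and agree on the flat model. But two natural operators coinciding on flat space need not coincide on a curved background: their difference is itself natural and vanishes on $(\bbR^n,\eta)$, hence is built from curvature contractions, and nothing rules those out. Concretely, a term of the shape $\g(\text{Schouten}\cdot P)$ is natural, zero on flat space, and could separate the two maps. You acknowledge this as ``the main technical obstacle,'' but without an extra uniqueness statement for natural extensions the argument does not close.

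The paper sidesteps this entirely by doing what you list only as an alternative: it takes $\hat\mg=F\mg$, writes down $\hat\Ga^k_{ij}-\Ga^k_{ij}$, and computes the change of each of the six terms in $\mathsf{Q}^{\l,\m}_\nabla$ under the rescaling (using the conformal invariance of $|\vol_\mg|^{\Sigma/n}\g$ and of $\bdel,\bDel^*$ with the appropriate weight shift). Setting $\mathsf{Q}^{\l,\m}_{\hat\nabla}-\mathsf{Q}^{\l,\m}_\nabla=0$ yields a linear system in $c_0,\ldots,c_4$ whose unique solution is exactly the composition. This single computation delivers both directions of the ``if and only if'' at once, on an arbitrary pseudo-Riemannian manifold, with no appeal to the flat model or to \cite{CSi09}. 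Your indirect ``only if'' argument (conformal invariance plus diffeomorphism-naturality of each $\mathsf{Q}^{\l,\m}_\nabla$ implies $\fkg$-equivariance on the flat model, then invoke uniqueness of the composed map) is a legitimate alternative for that half, but it does nothing for the ``if'' half, which is where the content lies.

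For the non-invariance of $\fS^\d_\nabla$, your prefactor observation is correct in spirit but needs one more sentence: the change in the leading term $|\vol_\mg|^{\Sigma/n}P$ lives in $p$-degree $1$, while all correction terms in \eqref{cSCovariant} have $p$-degree $0$, so no cancellation is possible. The paper instead reads off from its rescaling computation the exact law
\[
\fS^\d_{\hat\nabla}(P)=F^{\Sigma/2}\Bigl(\fS^\d_\nabla(P)-\tfrac{\hbar}{\bi}\tfrac{F_j}{2F}\,\xi^j\bdel P\Bigr),
\]
which exhibits the failure directly and then forces the same conclusion for $\cQ^{\l,\m}_\nabla$ via the already-established invariance of the composite.
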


\begin{proof}
We suppose that $\mg$ and $\hat{\mg}$ are two metrics conformally related through $\hat{\mg}=F\mg$, with $F$ a positive function on $M$. The Christoffel symbols of their Levi-Civita connections satisfy
$$
\hat{\Ga}_{ij}^k- \Ga_{ij}^k=\frac{1}{2F}\left(F_i\d^k_j+F_j\d^k_i-F^k\mg_{ij} \right),
$$
where $F_i=\partial_i F$ and $F^k=\mg^{ik}\partial_i F$. 
We deduce that
\begin{eqnarray*}
\hat{\nabla}^\l_j-\nabla^\l_j &=& -\frac{1}{8F} [\g_j,\g^i]F_i-\frac{n\l}{2F}F_j ,\\
\partial^{\hat{\nabla}}_i- \partial^\nabla_i &=&\frac{1}{2F}\left(F_i\cE+F_k p_i\partial_{p_k}- F^jp_j\partial_{p^i}\right) - \frac{1}{2F}\left(F_k \xi^k\partial_{\xi^i}- F^j\xi_i\partial_{\xi^j}\right) - \frac{n\d}{2F}F_i ,
\end{eqnarray*}
where $\nabla^\l$ is the covariant derivative acting on $\Ga(S\otimes|\Lambda|^\l)$
and $\partial^\nabla$ the one acting on $\sT^\d$. 
From the conformal invariance of $|\vol_\mg|^{\frac{\Sigma}{n}}\g$, and of $\bdel$, $\bDel^*$ as operators from $\sT^\d$ to $\sT^{\d-1/n}$, we deduce the expression of each of the six terms involved in the operator
 $\mathsf{Q}^{\l,\m}_{\hat{\nabla}}-\mathsf{Q}^{\l,\m}_{\nabla}$: 
\begin{eqnarray*}
|\vol_\mg|^{\frac{\Sigma}{n}}\g(\partial_{p_j}P)(\hat{\nabla}^\l_j -\nabla^\l_j)&=&-|\vol_\mg|^{\frac{\Sigma}{n}}\g\left( \big[\bdel\xi^i-\half\xi^i\bDel^*+\half\bdel\partial_{\xi_i}+\frac{1}{4}\bDel^*\partial_{\xi_i}+n\l\partial_{p_i}\big] P\right)\frac{F_i}{2F},\\[6pt]
c_0\,\left(\bd^{\hat{\nabla}}-\bd^\nabla\right)\bdel 
&=& c_0\left((\Sigma-n\d)\xi^i\bdel\right)\frac{F_i}{2F},\\
c_1\left(D^{\hat{\nabla}}-D^\nabla\right)
&=&
c_1\left( \bdel\partial_{\xi_i}- \xi^i\bDel^* +n(1-\d)\partial_{p_i}\right)\frac{F_i}{2F},\\
c_2\,\left(\bd^{*\hat{\nabla}}-\bd^{*\nabla} \right)\bdel 
&=&
c_2\left((\Sigma-n(1-\d))\bdel\partial_{\xi_i}\right)\frac{F_i}{2F},\\
c_3\,\left(\bd^{\hat{\nabla}}-\bd^\nabla\right)\bDel^* 
&=&
c_3\left((\Sigma-n\d)\xi^i\bDel^*\right)\frac{F_i}{2F},\\
c_4\,\left(\bd^{*\hat{\nabla}}-\bd^{*\nabla} \right)\bDel^* 
&=& c_4\left((\Sigma-n(1-\d))\bDel^*\partial_{\xi^i}\right)\frac{F_i}{2F}.
\end{eqnarray*}
The first equality holds for all $P\in\sT^\d_1$ whereas the other ones 
should be understood as equalities of operators acting on $\sT^\d_1$.
The equality $\mathsf{Q}^{\l,\m}_{\hat{\nabla}}(P)-\mathsf{Q}^{\l,\m}_{\nabla}(P)=0$, for all $P\in\sT^\d_1$, is then equivalent to a linear system in the five coefficients entering the definition of $\mathsf{Q}^{\l,\m}_{\nabla}$. Solving this system leads exactly to $\mathsf{Q}^{\l,\m}_\nabla=\cQ^{\l,\m}_\nabla\circ\fS^\d_\nabla$, as wanted.

Since $c_0=c_{\bd}(\Sigma)=\frac{-1}{\Sigma-n\d}$, the preceding computations lead to
$$
\fS^\d_{\hat{\nabla}}(P)=F^{\Sigma/2}\left(\fS^\d_\nabla(P)-\frac{\hbar}{\bi}\frac{F_j}{2F}\xi^j\bdel P\right),
$$
and then neither $\fS^\d_\nabla$ nor $\cQ^{\l,\m}_\nabla$ are conformally invariant. 
\end{proof}

\begin{rmk}
Let $(M,\mg)$ be a conformally flat manifold and $A$ a first order differential operator 
acting on the tensor bundle $\cS TM\otimes\Lambda T^*M$. Then, $\fkg$-equivariance of $A$ implies its conformal invariance, whenever the $\fkg$-actions are the natural ones on tensors \cite{Feg76}. 
But the Hamiltonian action is not of this type, as seen in \eqref{EqLtLd}. 
This explains the lack of conformal invariance of the maps $\fS^\d$ and $\cQlm$.
\end{rmk}

\section{Symmetries of spinning particles}

We classify all the symmetries of free massless spinning particles over conformally flat manifolds, both in the classical and quantum cases. They arise via the conformally equivariant superization and quantization of conformal Killing hook tensors. We introduce 
the latters over an arbitrary pseudo-Riemannian manifold. 

\subsection{Reminder on the non-spinning case}

A classical particle on a pseudo-Riemannian manifold $(M,\mg)$ admits the cotangent bundle $T^*M$ as phase space,
endowed with its canonical Poisson bracket $\{\cdot,\cdot\}$. In the free case, its motion is given by 
the Hamiltonian flow of $R=\mg^{ij}p_ip_j$, which projects onto the geodesic flow of $(M,\mg)$. 
The symmetries, or conserved quantities, are the elements $K\in\Pol(T^*M)$ such that $\{R,K\}=0$,
i.e. symmetric Killing tensors (here and thereafter we freely use the identification $\Pol(T^*M)\cong\Ga(\cS TM)$).
If the particle is in addition massless, it moves along the null cone, characterized by $R=0$. 
Then, the symmetries are the elements $K\in\Pol(T^*M)$ such that $\{R,K\}=0$ on the null cone. 
This means $\{R,K\}\in (R)$ with $(R)\subset\Pol(T^*M)$ the ideal generated by $R$. 
Such symmetries form a subalgebra $\hat{\cK}\leq\Pol(T^*M)$, which contains $(R)$.
Moreover the elements in $(R)$ are considered as trivial symmetries since they vanish on the null cone.
Hence, relevant symmetries are given by elements in the algebra $\cK=\hat{\cK}/(R)$
or equivalently by traceless tensors in $\hat{\cK}$. 
They are  characterized as follows.

\begin{prop}\label{PropKilling}
Let $K$ be a traceless symmetric $k$-tensor, with $k\in\bbN$. 
Then,  the three following conditions are equivalent
$$
\left\{\begin{array}{l}
\{R,K\}\in(R), \\ 
\Pi_{0}G^\nabla K=0,\\
\Pi_{ 0}\nabla_{(i_0}K_{i_1\cdots i_k)}=0,
\end{array}\right.
$$
where $\Pi_0$ denotes the projection on traceless tensors and the round brackets denote symmetrization.
If $K$ satisfies one of the above condition, it is called a conformal Killing symmetric tensor. 
\end{prop} 


Roughly speaking, a quantum massless free particle is given by a function
in the kernel of the conformal Laplacian 
$\Delta=\nabla_{\! i\,}\mg^{ij}\nabla_{\! j\,}-\frac{n-2}{4(n-1)}\mathrm{Scal}$, where $\mathrm{Scal}$ denotes the scalar curvature. The latter is a conformally invariant operator if considered as an element of the space $\Dlm:=\D(M;|\Lambda|^\l,|\Lambda|^\m)$ for $\l=\frac{n-2}{2n}$, $\m=\frac{n+2}{2n}$. Following \cite{Eas05}, we introduce higher symmetries of $\Delta$.

\begin{defi}
Let $\l=\frac{n-2}{2n}$ and  $\m=\frac{n+2}{2n}$.
A higher symmetry of $\Delta$ is a differential operator $D_1\in\D^{\l,\l}$, such that $\Delta D_1=D_2\Delta$, for some $D_2\in\D^{\m,\m}$.
\end{defi} 

If $D_1=A\Delta$, with $A\in\D^{\m,\l}$, then $D_1$ is a higher symmetry of $\Delta$, called a trivial higher symmetry.
The higher symmetries preserve the kernel of $\Delta$ and form a subalgebra $\hat{\cA}\leq\D^{\l,\l}$,
whereas the trivial higher symmetries act trivially on $\ker\Delta$ and form the ideal 
$(\Delta)=\{A\Delta, \text{ s.t. } A\in\D^{\m,\l}\}$.
Hence, we do not distinguish between two higher symmetries differing by a trivial one, 
and consider the quotient algebra $\cA=\hat{\cA}/(\Delta)$  of equivalence classes of higher symmetries.
They are quantum analogs of the conformal Killing symmetric tensors.

\begin{thm}\label{thm:symLaplacian}\cite{Eas05,Mic11b}
Let $(M,\mg)$ be a conformally flat manifold. The conformally equivariant quantization 
induces an isomorphism of $\fkg$-modules $\cQ^{\l,\l}:\hat{\cK}\rightarrow \hat{\cA}$, 
such that $\cQ^{\l,\l}((R))=(\Delta)$.
It descends as an isomorphism of $\fkg$-modules, 
$\cQ^{\l,\l}:\cK\rightarrow \cA$, 
mapping conformal Killing symmetric tensors to (equivalence classes of) higher symmetries of $\Delta$.
\end{thm}

\subsection{Conformal Killing hook tensors}
 We call hook tensors the tensors whose symmetry is described by a hook Young diagram of the following type:
 \raisebox{-17pt}{$
\begin{picture}(40,40)
\put(0,0){\line(0,1){40}}
\put(10,0){\line(0,1){40}}
\put(0,40){\line(1,0){40}}
\put(0,30){\line(1,0){40}}
\put(30,30){\line(0,1){10}}
\put(40,30){\line(0,1){10}}
\put(20,35){\makebox(0,0){$\cdots$}}
\put(0,0){\line(1,0){10}}
\put(0,10){\line(1,0){10}}
\put(5,25){\makebox(0,0){$\vdots$}}
\end{picture}
$}.
They correspond to elements in the tensor algebra $\sT^0$ and are referred as $(k,\k)$-tensors if they lie in $\sT^0_{k,\k}$. Thus, $(k,0)$-tensors are symmetric and $(0,\k)$-tensors are skew-symmetric. We define in this section the notion of conformal Killing hook tensors, so that it generalizes both conformal Killing symmetric tensors and conformal Killing forms, defined below.

\begin{defi}
Let $1\leq \k\leq n$.
A differential $\k$-form $\eta$ is conformal Killing, if it satisfies
\[
\nabla_X\eta=\frac{1}{\k+1}\left\langle X,\mathbf{d}\eta \right\rangle +\frac{1}{n-\k+1} X^\flat\wedge \mathbf{d}^*\eta,
\]
for all $X\in\Vect(M)$. Here, $\mathbf{d}^*$ and $\mathbf{d}$ denote the de Rham (co-)differentials, and $X^\flat$ is the dual $1$-form of $X$ through the metric $\mg$. 
\end{defi}

We would like to characterize the space of conformal Killing forms as the kernel of a conformally invariant operator. 
We use the invariant operators introduced in \eqref{EqInvIsoSpo} and \eqref{Op:nabla}, as well as the conformally invariant projection $\Pi_{\bf 0}:\sT^\d\rightarrow\bigoplus_{k,\k}\sT^\d_{k,\k,0;00}$, whose image is equal to
$\ker T\cap\ker \bdel\cap\ker\bDel^*$.

\begin{prop}\label{Prop:CKY}
The operator $\Pi_{\bf 0}\circ G^\nabla:\sT^{-\frac{1}{n}}\rightarrow\sT^{\frac{1}{n}}$ 
is conformally invariant and, on $\sT^{-\frac{1}{n}}_{k,\k,s;\a\b}$, it is given by
\begin{equation}\label{OpCK}
\Pi_{\bf 0}\circ G^\nabla=
\begin{cases}
G^\nabla-\frac{1}{\cE+\Sigma}\bdel^*\bd^\nabla-\frac{1}{n+\cE-\Sigma}\bDel\bd^{*\nabla} -\frac{1}{n+2\cE-4}RD^\nabla, 
&\text{if }  s=0, \a\b=00,\\
0, & \text{else}.
\end{cases}
\end{equation}
On $\sT^{-\frac{1}{n}}_{0,\k,0;00}$, with $1\leq \k\leq n$, its kernel is the space of conformal Killing $\k$-forms.
\end{prop}

\begin{proof}
The operator $\Pi_{\bf 0}\circ G^\nabla$ is the composition of the Levi-Civita covariant derivative 
with the projection on an irreducible homogeneous subbundle.
By Theorem~\ref{thmConfInv}, it is $\fkg$-invariant in the conformally flat case, 
and  by Fegan's work \cite{Feg76}, it is invariant by a conformal change of metric over a general manifold $(M,\mg)$.

The operator $G^\nabla$ commutes with $R$, $\bdel^*$ and $\bDel$, so that the vanishing of $\Pi_{\bf 0}\circ G^\nabla$ on $\sT^{-\frac{1}{n}}_{k,\k,s;\a\b}$ if $s\neq 0$ or $\a\b\neq 00$ is obvious. In the remaining case, the expression of $\Pi_{\bf 0}\circ G^\nabla$ follows from the vanishing of its composition with $T$, $\bdel$ and $\bDel^*$. The coefficients are computed thanks to the Table \eqref{CommRel} of commutation relations. 

Let $\eta\in\sT^{-\frac{1}{n}}_{0,\k,0;00}$, and $X\in\Vect(M)$. We get
$$
\mg_{ij}X^i\partial_{p_j}(\Pi_{\bf 0}\circ G^\nabla \eta)=\nabla_X\eta-\frac{1}{\cE+\Sigma}X^i\partial_{\xi^i}\bd^\nabla \eta-\frac{1}{n+\cE-\Sigma}\mg_{ij}X^i\xi^j\bd^{*\nabla}  \eta,
$$
where $\bd^{*\nabla} $ and $\bd^\nabla$ identify to de Rham (co-)differentials on differential forms. 
The right hand side vanishes for all $X\in\Vect(M)$ if and only if $\eta$ is a conformal Killing form. 
The left hand side vanishes for all $X\in\Vect(M)$ if and only $\eta$ lies in the kernel of $\Pi_{\bf 0}\circ G^\nabla$.
The conclusion follows.
\end{proof}

We introduce the conformally invariant projection $\Pi_{01}:\sT^\d\rightarrow\bigoplus_{k,\k}\sT^\d_{k,\k,0;01}$.
Recall that $\sT^\d_{0,0}\oplus\big(\bigoplus_{k,\k}\sT^\d_{k,\k,0;01}\big)=\ker \bdel^*\cap\left(\ker T\cap\ker\bDel^*\right)$.

\begin{defi}
If $(k,\k)\neq (0,0)$, a conformal Killing $(k,\k)$-tensor 
is an element in $\sT^0_{k,\k,0;01}\cap\ker(\Pi_{01}\circ G^\nabla)$.
By convention, a conformal Killing $(0,0)$-tensor is a constant function. 
We denote by $\cKh$ the space of all conformal Killing hook tensors.
\end{defi}

By Proposition \ref{PropKilling}, conformal Killing $(k,0)$-tensors are usual conformal Killing symmetric tensors of order $k$. 
Since $G^\nabla$ commutes with $\bdel^*$, the equality $\bdel^*\Pi_{\bf 0}=\Pi_{01}\bdel^*$ leads to $\bdel^*\Pi_{\bf 0}G^\nabla=\Pi_{01}G^\nabla\bdel^*$.
The equality $\ker\bdel^*\cap\ker\bdel=\{0\}$ on $\sT_{>0}^\d$ (see  Proposition \ref{Prop:proj})
and the Proposition \ref{Prop:CKY} show that conformal Killing $(1,\k-1)$-tensors identify to conformal Killing $\k$-forms through the operator $\bdel^*$. Therefore, the latter definition extends both notions. 

Since $\ker T\cap\ker\bDel^*\cong\sT^0/(\bDel,R)$,
the space $\cKh$ of conformal Killing hook tensors
identifies to the $\fkg$-submodule of $\sT^0/(\bDel,R)$
given by $\ker\bdel^*\cap\ker G^\nabla$.
As $\bdel^*$ and $G^\nabla$ are first order operators, 
$\cKh$ is also a subalgebra of $\sT^0/(\bDel,R)$.

\subsection{Pseudo-classical spinning particles}

The phase space of a spinning particle on a pseudo-Riemannian manifold $(M,\mg)$ 
is its supercotangent bundle $(\cM,\om)$, the spin being represented 
by a quadratic function in the Grassmann variables, $S=S_{ij}\xi^i\xi^j$. 
It automatically transforms in the right way under the action of the orthogonal group. 
This Hamiltonian model is equivalent to the well-known Lagrangian one developed in \cite{BMa77}, 
it is the classical counterpart of the quantum description of spinning particles as spinor fields \cite{Mic10a}.
In the free case, the equations of motion of the particle are again given by the Hamiltonian flow of $R$. 
Denoting by $x$ the parameterized trajectory, we get in particular
\begin{eqnarray*}
\nabla_{\dot{x}}\dot{x}^i&=&-\frac{1}{2}\mg^{il}(R^a_{~bkl}\xi_a\xi^b)\dot{x}^k,\\
\nabla_{\dot{x}}\xi^a&=&0,
\end{eqnarray*}
where $(R^a_{~bkl})$ denote the components of the Riemann tensor. 
The second equation shows that the spin is parallely transported, 
whereas the first one is an analog of Papapetrou's equation~\cite{Pap51}, 
which generalizes the geodesic equation for an extended object with spin in general relativity. 
The deviation from geodesic motion is due to a coupling between the curvature and the spin. 
Generally, the particle's spin is spacelike, that is expressed by $p_i\xi^i=0$.  
See \cite{Rav80,KCa11} for further informations.   

The principal Hamiltonian symbol of the Dirac operator is given by $\bDel=p_i\xi^i$, 
which Poisson squares to $R=\{\bDel,\bDel\}$. Thus, all symmetries 
of the Hamiltonian flow of $\bDel$ are also symmetries for the one of $R$. 
Following \cite{GRV93}, we call them supercharges.

\begin{defi}
A supercharge is an element $K$ of $\cO(\cM)$ such that $\{\bDel,K\}=0$. 
A conformal supercharge is an element $K$ of $\cO(\cM)$ such that 
$\{\bDel,K\}\in(\bDel,R)$, where $(\bDel,R)$ is the ideal in $\cO(\cM)$ generated by $\bDel$ and $R$. 
\end{defi}

Among the conformal supercharges stand all the elements in the ideal $(\bDel,R)$. 
They are considered as trivial  supercharges since they vanish if $\bDel=0=R$. 
Hence, we introduce the Poisson algebra of equivalence classes of conformal supercharges, defined by
$$
\cS\cC:=\{K\in\cO(\cM)|\{\bDel,K\}\in(\bDel,R)\}/(\bDel,R).
$$
It identifies to the reduced Poisson algebra resulting from the symplectic reduction of 
$\cM$ along the Hamiltonian flows of $\bDel$ and $R$. 

We classify below all the conformal supercharges in a close spirit to Proposition~\ref{PropKilling}.
This generalize \cite{GRV93,Tan95}, where supercharges are built from Killing forms. 

\begin{lem}
Let $(M,\mg)$ be a  pseudo-Riemannian manifold. 
For all $K\in\ker T\cap\ker\bDel^*$, 
the following equivalence holds
\begin{equation}\label{SuperConf}
\{\bDel,\fS^0_\nabla(K)\}\in(\bDel,R)
\quad\Longleftrightarrow\quad
\begin{cases}
\bdel^* K=0,\\
\Pi_{01}G^\nabla K=\frac{\hbar}{\bi}\Pi_{01}(\bd^\nabla)^2 K.
\end{cases}
\end{equation}
\end{lem}

\begin{proof}
For $K\in\ker T\cap\ker\bDel^*$, we compute 
\begin{eqnarray}\nonumber
\{\bDel,\fS^0_\nabla(K)\}&=&\big(\bd^\nabla-\frac{\bi}{\hbar}\bdel^*\big)\circ\big(\Id-\frac{\hbar}{\bi}\frac{1}{\cE+\Sigma}\bd^\nabla\bdel\big) (K)+H,\\ \label{Compute:HSC}
&=& -\frac{\bi}{\hbar}\bdel^* K+\left(\bd^\nabla+\frac{1}{\cE+\Sigma}\bdel^*\bd^\nabla\bdel-\frac{\hbar}{\bi}\frac{1}{\cE+\Sigma}(\bd^\nabla)^2\bdel\right) (K)+H,\\ \nonumber
&=& -\frac{\bi}{\hbar}\bdel^* K+\frac{\bdel}{\cE+\Sigma}\Big(G^\nabla-\bd^\nabla\bdel^*-\frac{\hbar}{\bi}(\bd^\nabla)^2\Big)(K)+H,
\end{eqnarray}
with $H\in(\bDel,R)$. By Proposition \ref{Prop:proj}, we have $\im\bdel\cap\im\bdel^*=\{0\}$. 
Therefore, $\fS^0_\nabla(K)$ is a conformal supercharge if and only if  $\bdel^*K\in(\bDel,R)$ and $\bdel\Big(G^\nabla-\bd^\nabla\bdel^*-\frac{\hbar}{\bi}(\bd^\nabla)^2\Big)(K)\in(\bDel,R)$. 
Since $K\in\ker T\cap\ker\bDel^*$, this means $\bdel^*K=0$ and
$\Pi_{01}G^\nabla K=\frac{\hbar}{\bi}\Pi_{01}(\bd^\nabla)^2 K$. 
\end{proof}

\begin{prop}\label{Prop:KH-SC curved}
Let $(M,\mg)$ be a pseudo-Riemannian manifold. 
\begin{itemize}
\item If $S\in\sS^0_{k,\k}$ is a conformal supercharge 
then, up to trivial supercharges, its principal tensorial symbol 
is a conformal Killing hook tensor, i.e., $\ve_{k,\k}(S)\in(\bDel,R)\oplus\cKh$.
\item Let $K$ be a conformal Killing hook tensor. Then, $\fS^0_\nabla(K)$ is a conformal supercharge if and only if
$\Pi_{01}(\bd^\nabla)^2K=0$. The operator
$\Pi_{01}\circ (\bd^\nabla)^2=\Pi_{01}\circ\left(\xi^k\xi^lW^a_{~bkl} p_a\partial_{p_b}\right)$, 
with $W$ the Weyl tensor, vanishes if $\mg$ is conformally flat.
\end{itemize}
\end{prop}

\begin{proof}
Let $S\in\sS^0_{k,\k}$ be a conformal supercharge.
There exists $K\in\bigoplus_{\ell}\sT^0_{k-\ell,\k+2\ell}$ such that $\fS^0_\nabla(K)=S$. 
Since $\sT^0=(\bDel,R)\oplus\ker T\cap\ker\bDel^*$, $K$ splits accordingly as $K=K_0+K_1$. 
In view of \eqref{cSCovariant}, the relation $K_0\in(\bDel,R)$ implies that 
$\fS^0_\nabla(K_0)\in(\bDel,R)$ is a trivial supercharge. Hence, $\fS^0_\nabla(K_1)$
is a conformal supercharge.
As $K_1\in\ker T\cap\ker\bDel^*$, we can use the Equivalence \eqref{SuperConf}.
By collecting the terms of $p$-degree $k+1$ in the right hand side,
we deduce that the component of $K_1$ with $p$-degree equal to $k$
is a conformal Killing hook tensor. As a result, we obtain that $\ve_{k,\k}(S)\in(\bDel,R)\oplus\cKh$.

Let $K\in\cKh$. Then, $\bdel^*K=0$ and $\Pi_{01}G^\nabla K=0$.
By Equivalence \eqref{SuperConf}, $\fS^0_\nabla(K)$ is a conformal supercharge if and only if
$\Pi_{01}(\bd^\nabla)^2K=0$. Moreover, we have 
$\Pi_{01}\circ(\bd^\nabla)^2=
\Pi_{01}\circ\left(\xi^k\xi^lR^a_{~bkl}p_a\partial_{p_b}\right)=
\Pi_{01}\circ\left(\xi^k\xi^lW^a_{~bkl} p_a\partial_{p_b}\right)$ 
since the traces of the Riemann tensor are killed by the projection $\Pi_{01}$.
\end{proof}

\begin{thm}
Let $(M,\mg)$ be a conformally flat manifold. The conformally equivariant superization 
descends as a linear isomorphism $\fS^0:\sT^0/(\bDel,R)\longrightarrow
\sS^0/(\bDel,R)$ and provides an isomorphism of $\fkg$-modules 
\begin{equation}\label{Iso:K-SC}
\fS^0:\cKh\longrightarrow
\cS\cC,
\end{equation}
between conformal Killing hook tensors and (equivalence classes of) conformal supercharges.
\end{thm}

\begin{proof}
By Eq.\ \eqref{cSCovariant}, $K\in(\bDel,R)$ implies $\fS^0(K)\in(\bDel,R)$.
Hence $\fS^0$ descends as  a linear isomorphism $\fS^0:\sT^0/(\bDel,R)\longrightarrow
\sS^0/(\bDel,R)$.
Propositions \ref{Curved-Flat} and \ref{Prop:KH-SC curved} allow to conclude.
\end{proof}

\begin{rmk}
The condition $\Pi_{01}G^\nabla K=\frac{\hbar}{\bi}\Pi_{01}(\bd^\nabla)^2 K$
reduces in dimension $4$, for a symmetric conformal Killing $2$-tensor $K$, to the condition (A1) in \cite{ABB14}.
The authors of \cite{ABB14} prove the following.
A symmetric conformal Killing $2$-tensor $K$ is the principal symbol of a higher symmetry of the Dirac operator (see below)
if and only if  $K$ satisfies condition (A1) and $K$ is the principal symbol 
of a higher symmetry for the conformal Laplacian (condition (A0) in \cite{ABB14}).
\end{rmk}

\subsection{Higher symmetries of the Dirac operator}

The Dirac operator $\Dirac\in\Dslm$ is a $G$-invariant operator 
for $\l=\frac{n-1}{2n}$ and $\m=\frac{n+1}{2n}$, where $G$ is the (local)
conformal Lie group of~$(M,\mg)$. 
We introduce the higher symmetries of $\Dirac$ following the Laplacian case. 

\begin{defi}
Let $\l=\frac{n-1}{2n}$, $\m=\frac{n+1}{2n}$. A higher symmetry of $\Dirac$ is a differential operator $D_1\in\sD^{\l,\l}$, such that $\Dirac D_1=D_2\Dirac$, for some $D_2\in\sD^{\m,\m}$.
\end{defi} 

The space of higher symmetries of $\Dirac$ is a subalgebra of $\sD^{\l,\l}$ 
containing the ideal of trivial symmetries $(\Dirac)=\{A\Dirac \vert \, A\in\sD^{\m,\l}\}$.
We denote by $\cA^\Dirac$ the  quotient algebra of the higher symmetries of $\Dirac$ by the trivial ones. 
This is exactly the kernel of the following conformally invariant operator
\begin{eqnarray}\nonumber
\HSQ:\sD^{\l,\l}/(\Dirac) & \rightarrow & \sD^{\l,\m}/(\Dirac)\\ \label{HSQ}
 \left[D\right] &\mapsto & [\Dirac D]
\end{eqnarray}
where $\HSQ$ stands for {\it Quantum Higher Symmetries}. 
Since $[\Dirac D]$ is equal to the class of the commutator $[\Dirac,D]$, the Property \eqref{PcpalSymbol} 
implies that
\begin{equation}\label{HSQtoQ}
\xymatrix{
\sD^{\l,\l}_{[k]}/(\Dirac)\ar[d]^{\sigma_k}\ar[rr]^{\HSQ} && \sD^{\l,\m}_{[k+1]}/(\Dirac)\ar[d]^{\sigma_{k+1}}\\
\sS^0_{[k]}/\sigma_k\big((\Dirac)\big) \ar[rr]_{\{\bDel,\cdot\}}&& \sS^{\frac{1}{n}}_{[k+1]}/\sigma_{k+1}\big((\Dirac)\big)
}
\end{equation}
is a commutative diagram of $G$-modules, with $\sigma_k, \sigma_{k+1}$ principal Hamiltonian symbol maps.
In the conformally flat case, the conformally equivariant quantization allows to 
invert the vertical maps in Diagram \eqref{HSQtoQ} and, as proved below, 
provides a correspondence
between conformal supercharges and higher symmetries of $\slashed{D}$.
In particular, it allows to recover the classification of first order higher symmetries 
of $\slashed{D}$ \cite{BKr04}.
Nevertheless the correspondence is not bijective: $\g(\vol_\mg)$ is a higher symmetry
but $\vol_\mg$ is not a conformal supercharge and,
conversely, all the elements in $\sT^0_{1,n-1,0;01}$ 
are conformal Killing hook tensors and non-trivial conformal supercharges, 
but they are quantized as trivial symmetries.

\begin{thm}
Let $\l=\frac{n-1}{2n}$, $\m=\frac{n+1}{2n}$ and $(M,\mg)$ be a conformally flat spin manifold of even dimension. We have the following isomorphism of $G$-modules
$$
\cQ^{\l,\l}:\cS\cC/\sT^0_{1,n-1,0;01}\longrightarrow\cA^\Dirac/\bbC\cdot\g(\vol_\mg)
$$
which essentially establishes a correspondence between 
conformal supercharges and higher symmetries of $\Dirac$. Those of first order are given by
\begin{eqnarray*}
\cQ^{\l,\l}\circ\fS^0(\bdel^* K) &=& \frac{\hbar}{\bi(\sqrt{2})^{\k}} \left(\mg^{ij}\g(K_i)\nabla^\l_{j}-\frac{\k+1}{\k+2}\g(\bd^\nabla K) +\frac{n-\k-1}{2(n-\k)}\g\left(\bd^{*\nabla}  K\right) \right),
\end{eqnarray*}
where $\k$ runs over $0,\ldots,n-1$ and $K$ runs over the space of conformal Killing $\k+1$-forms.
\end{thm}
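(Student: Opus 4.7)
The plan has four movements. First I would establish the $\fkg$-module structure on both sides. The key preliminary observation is that $\Dirac$ is itself $\fkg$-invariant as an element of $\sD^{\l,\m}$: its principal symbol $\bDel=\xi^ip_i$ spans the one-dimensional space of $\fkg$-invariants in $\sS^{1/n}_{[3]}/\bigoplus_{\ell<3}\sS^{1/n}_{[\ell]}$, and by uniqueness of the conformally equivariant quantization at the critical weight $\d=1/n$ for $\l=(n-1)/(2n)$ (Theorem~\ref{thm_equi_adaptees}), we have $\cQ^{\l,\m}(\bDel)\propto\Dirac$. Consequently the left ideals $(\Dirac)\subset\sD^{\l,\l}$ and $(\bDel,R)\subset\sS^0$ are $\fkg$-submodules. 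The maps $\HSQ$ of \eqref{HSQ} and $\{\bDel,\cdot\}$ are then well-defined $\fkg$-equivariant maps on the quotients, so $\cA=\ker\HSQ$ and $\cS\cC=\ker\{\bDel,\cdot\}$ are $\fkg$-modules.

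The second, and central, step is to show that $\cQ^{\l,\l}$ descends to a $\fkg$-module isomorphism $\bar\cQ^{\l,\l}:\sS^0/(\bDel,R)\to\sD^{\l,\l}/(\Dirac)$ intertwining the horizontal maps of the diagram \eqref{HSQtoQ}. Following the argument used for the Laplacian in \cite{Mic11b}, I would proceed by induction on Hamiltonian order. Given $K\in\sS^0$ with $\{\bDel,K\}\in(\bDel,R)$, set $D^{(0)}=\cQ^{\l,\l}(K)$ and $A^{(0)}=\cQ^{\m,\m}(K)$; both exist since $\d=0$ lies outside $I^\cQ$. The operator $\Dirac D^{(0)}-A^{(0)}\Dirac\in\sD^{\l,\m}$ has strictly smaller Hamiltonian order than $\Dirac D^{(0)}$ because the principal symbols match (both equal $\bDel\cdot K$ in the graded commutative algebra $\cO(\cM)$); moreover, by the commutativity of \eqref{HSQtoQ} its new principal symbol is the class of $\{\bDel,K\}$ in $\sS^{1/n}/(\bDel,R)$, which vanishes by hypothesis. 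A representative in $(\bDel,R)$ of this Poisson bracket lifts via $\cQ^{\cdot,\cdot}$ to corrections $D^{(1)},A^{(1)}$ of one Hamiltonian order less, and iterating finitely many times yields $\Dirac D-A\Dirac=0$ for suitable $D=D^{(0)}+D^{(1)}+\cdots$, proving $[D^{(0)}]\in\cA$. The same induction, run on the inverse $(\cQ^{\l,\l})^{-1}$, shows injectivity of the descended map, and the full symbol bijectivity of $\cQ^{\l,\l}$ gives surjectivity. Composing with the isomorphism $\fS^0:\cK\to\cS\cC$ established just before yields the second asserted isomorphism.

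The main obstacle is precisely this descent: $\cQ^{\l,\l}$ is not an algebra morphism, so the inclusion $\cQ^{\l,\l}\bigl((\bDel,R)\bigr)\subseteq(\Dirac)$ must be handled order by order via the symbol/commutator comparison enforced by the commutative square \eqref{HSQtoQ}. The uniqueness of $\cQ^{\l,\l}$ and $\cQ^{\m,\m}$ (both non-critical for $\d=0$) together with the existence of the critical quantization $\cQ^{\l,\m}(\bDel)\propto\Dirac$ is what makes the induction go through without ambiguity.

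Finally, for the explicit first-order formula I would compute directly. Given a conformal Killing $(\k+1)$-form $K$, the hook tensor $\bdel^*K$ sits in $\sT^0_{1,\k,0;01}$. Applying Theorem~\ref{ProprS} with $k=1,s=0,\a\b=01,\d=0$: the $R^s\bDel\bdel(\cdots)T^s$ term vanishes because $T(\bdel^*K)=0$ and $\bDel^*(\bdel^*K)=0$ by antisymmetry, while the $c_G$ term is zero by convention for $s=0$, leaving only $\bd\bdel$. Using $\bdel\bdel^*K=(\k+1)K$ gives $\fS^0(\bdel^*K)=\cF\bigl(\bdel^*K-\tfrac{\hbar}{\bi}\tfrac{\k+1}{\k+2}\bd K\bigr)$. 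Then applying Theorem~\ref{thm_equi_adaptees} with $\l=(n-1)/(2n)$, $\d=0$, the two $\bDel^*$-terms vanish (both summands of the argument are annihilated by $\bDel^*$), while $D$ and $\bd^*\bdel$ each produce a multiple of $\bd^*K$; the coefficient simplification $c_D(\k)+(\k+1)c_{\bdel}(\k)=\tfrac{n-\k-1}{2(n-\k)}$ via the identity $n(n-\k)-(\k+1)=(n+1)(n-\k-1)$ yields exactly the three-term expression. The $(\sqrt{2})^{-\k}$ prefactor comes from normal ordering of the $\k$-form $\bdel^*K$. The global covariant form on an arbitrary conformally flat $(M,\mg)$ follows by substituting $\bd,\bd^*\to\bd^\nabla,\bd^{*\nabla}$, since $\cQ^{\l,\l}\circ\fS^0$ is conformally invariant by Theorem~\ref{cQlmConfInvariant}.
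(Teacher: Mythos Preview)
Your overall strategy is sound but differs from the paper's in a non-trivial way, and there is one point where your induction is incomplete.

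\textbf{On the central step.} You propose an order-by-order induction in the spirit of \cite{Mic11b}: quantize $K$ by $\cQ^{\l,\l}$ and $\cQ^{\m,\m}$, compare $\Dirac D^{(0)}$ with $A^{(0)}\Dirac$, and iteratively correct. The paper takes a different route. It first proves directly (Lemma~\ref{lem:ideals}) that $\cQ^{\l,\l}((\bDel,R))=(\Dirac)$ and that a suitable $\cQ^{\l,\m}$ exists with the same property; the key is the factorization $\cQ^{\l,\l'}(P\bDel)=\cQ^{\m,\l'}(P)\Dirac$, forced by uniqueness of the equivariant quantization. Once the quantizations descend to the quotients, the paper defines $\HSC$ by the commutative diagram, observes from \eqref{HSQtoQ} that $\HSC=\{\bDel,\cdot\}+B$ with $B$ of lower degree, and then kills $B$ by a classification argument: Lemma~\ref{ConfInvT} lists the only conformally invariant operators $\sT^0/(\bDel,R)\to\sT^{1/n}/(\bDel,R)$, and Lemma~\ref{QtoA} identifies the one coming from $\{\bDel,\cdot\}$. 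This structural argument avoids any recursion.

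\textbf{The gap in your version.} Your induction needs, at each step, that the lower-order corrections $D^{(1)},D^{(2)},\ldots$ lie in $(\Dirac)$, so that $[D^{(0)}]=[\cQ^{\l,\l}(K)]$ is unchanged in $\sD^{\l,\l}/(\Dirac)$. You acknowledge this as ``the main obstacle'' but do not actually establish it: saying the symbol at the next order lies in $(\bDel,R)$ is not the same as producing a correction in $(\Dirac)$ whose symbol matches it. Concretely, since $\bDel^2=0$ in $\cO(\cM)$ while $\Dirac^2\neq 0$, the ideal $(\bDel)$ does not contain $R$, and you must explain how a symbol of the form $P'R$ is absorbed. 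The paper's factorization lemma (using $\fS^0(PR)\mapsto\cQ^{\m,\l'}(P\bDel)\Dirac$) is exactly what closes this gap; without it, or an equivalent statement, your induction does not terminate cleanly. If you import Lemma~\ref{lem:ideals}, your inductive argument becomes viable and is then a legitimate alternative to the paper's classification-of-invariants proof.

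\textbf{On the explicit formula.} Your computation is correct and essentially identical to the paper's: $\fS^0(\bdel^*K)$ reduces to the $\bd\bdel$ term via $\bdel\bdel^*K=(\k+1)K$, the $\bDel^*$-terms in $\cQ^{\l,\l}$ vanish, and your coefficient identity $n(n-\k)-(\k+1)=(n+1)(n-\k-1)$ correctly produces $\frac{n-\k-1}{2(n-\k)}$.
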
   

\begin{proof}
We introduce the submodule 
$(\bDel,R)_*:=(\bDel,R)\oplus\sT^\d_{1,n-1,0;01}$
of the $G$-module $\sT^\d$.
We will need the three following lemmas.

\begin{lem}\label{lem:ideals}
The $G$-module $(\bDel,R)_*$ is preserved by the conformally equivariant superizations 
$\fS^{\d}$. If $\l=\frac{n-1}{2n}$ and $\m=\frac{n+1}{2n}$, 
we have $\cQ^{\l,\l}\left((\bDel,R)_*\right)=(\Dirac)$ and there exists 
a conformally equivariant quantization $\cQlm$ such that $\cQ^{\l,\m}\left((\bDel,R)_*\right)=(\Dirac)$.
\end{lem}

\begin{proof}
The Formula \eqref{EqrS} leads to the result concerning 
the conformally equivariant superizations. Let $\l'=\l,\m$. According to 
Corollary \ref{cor:existQ}, the conformally equivariant quantizations $\cQ^{\l,\l}$  
and $\cQ^{\m,\l'}$ exist and are unique. The conformal invariance of $\Dirac\in\Dslm$ 
ensures that $P\bDel\mapsto\cQ^{\m,\l'}(P)\Dirac\in\sD^{\l,\l'}$ defines 
a conformally equivariant map on $(\bDel)$. This conformally extends to $(\bDel,R)$ 
via $\fS^0(P R)\mapsto\cQ^{\m,\l'}(P\bDel)\Dirac$ for $P\notin(\bDel)$. 
If $\l'=\l$ this map coincides with $\cQ^{\l,\l}$ by uniqueness. If $\l'=\m$, 
this defines a conformally equivariant quantization on $(\bDel,R)$, 
which can be extended to $\cS^{\frac{1}{n}}=(\bDel,R)\oplus\fS^{\frac{1}{n}}\left(\ker\bDel^*\cap\ker T\right)$. 
Indeed, combining Theorems \ref{ExistenceSuper} and \ref{thmConfInv}, 
the only obstructions to existence of $\cQlm$ are given by the conformally invariant operators
$\fS^{\frac{1}{n}}\circ GT\circ (\fS^{\frac{1}{n}})^{-1}$ and 
$\fS^{\frac{1}{n}}\circ G\bdel\bDel^*\circ (\fS^{\frac{1}{n}})^{-1}$, 
which vanish on $\fS^{\frac{1}{n}}\left(\ker\bDel^*\cap\ker T\right)$. 
Using Theorem \ref{thm_equi_adaptees} and the obtained factorization formul{\ae} 
for $\cQ^{\l,\l'}$, we deduce that $\cQ^{\l,\l'}\left((\bDel,R)_*\right)\subset(\Dirac)$.
To prove the converse inclusion, it suffices to show that the principal symbol
of $D\Dirac$ pertains to $(\bDel,R)_*$, for all $D\in\sD^{\l,\l'}$.
Suppose $D$ is exactly of order $k$.
There are three cases.
If $\sigma_k(D)\bDel\neq 0$, we get $0\neq\sigma_{k+3}(D\Dirac)\in(\bDel)$ by Eq.\ \eqref{PcpalSymbol}.
If $\sigma_k(D)\bDel= 0$ and $\sigma_k(D)\in\sS^{\l'-\l}_{0,k}$,
then we have $k=n$, $D\Dirac\in\sD^{\l,\l'}_{[n+1]}$ and 
$0\neq\sigma_{n+1}(D\Dirac)\in\sT^{\l'-\l}_{1,n-1,0;01}$.
In the remaining case, Proposition \ref{Prop:proj} applies and we obtain $\sigma_k(D)\in(\bDel)$.
Hence, $0\neq\sigma_{k+1}(D\Dirac)\in(R)$ and the result is proved.
\end{proof}

\begin{lem}\label{ConfInvT}
Let $A:\sT^0/(\bDel,R)_*\rightarrow\sT^{\frac{1}{n}}/(\bDel,R)_*$ be a $G$-invariant operator. 
Then, on $\sT^0_{k,\k}/(\bDel,R)_*$ (with $k\neq 0$), the operator $A$ is proportional to 
a linear combination of $\bdel^*$ and $\bdel\Pi_{01}G$. 
\end{lem}

\begin{proof}
By Theorem \ref{DiagCt}, the space $\sT^0/(\bDel,R)_*$ is a submodule of
$\bigoplus_{k,\k,\b}\sT^0_{k,\k,0;0\b}$.
In view of Theorem \ref{thmConfInv}, if $k$ is non-vanishing, 
the spaces of $G$-invariant linear operators with source spaces 
$\sT^0_{k,\k,0;00}$ and $\sT^0_{k,\k,0;01}$ are generated by $\bdel^*$ and $G_{\bf 0}\bdel$. 
By Proposition \ref{Prop:proj}, we have $\Pi_{\bf 0}=\frac{1}{\cE+\Sigma}\bdel\bdel^*$
and $\Pi_{01}=\frac{1}{\cE+\Sigma}\bdel^*\bdel$ on $\sT^0_{>0}/(\bDel,R)_*$.
Hence, on the space $\sT^0_{k,\k,0;01}$, we have 
$G_{\bf 0}\bdel=\frac{1}{\cE+\Sigma}\bdel G\bdel^*\bdel=\bdel G \frac{\cE+\Sigma}{\cE+\Sigma+1}$
and $\bdel\Pi_{01}G=\frac{1}{\cE+\Sigma}\bdel\bdel^*\bdel G=\bdel G$.
The operators $\bdel\Pi_{01}G$ and $G_{\bf 0}\bdel$ are proportional,
this concludes the proof.
\end{proof}

By Lemma \ref{lem:ideals}, the maps $\fS^\d:\sT^\d/(\bDel,R)_*\rightarrow\sS^\d/(\bDel,R)_*$ 
are well-defined. Since $\{\bDel,(\bDel,R)_*\}\subset (\bDel,R)_*$,
the map $\{\bDel,\cdot\}:\sS^0/(\bDel,R)_*\rightarrow\sS^{\frac{1}{n}}/(\bDel,R)_*$
is also well-defined.

\begin{lem}\label{QtoA}
The operator $A$ defined by the following commutative diagram
\begin{equation}\label{HSC}
\xymatrix{
\sS^0/(\bDel,R)_*\ar[rr]^{\{\bDel,\cdot\}} && \sS^{\frac{1}{n}}/(\bDel,R)_* \\
\sT^0/(\bDel,R)_* \ar[u]^{\fS^0}\ar[rr]_{A}&& \sT^{\frac{1}{n}}/(\bDel,R)_*\ar[u]_{\fS^{\frac{1}{n}}}
}
\end{equation}
satisfies $A=-\frac{\bi}{\hbar}\bdel^*+\frac{1}{\cE+\Sigma}\bdel\Pi_{01}G$.
\end{lem}

\begin{proof}
We have to prove that $\{\bDel,\fS^0(K)\}-\fS^{\frac{1}{n}}(A K)\in(\bDel,R)_*$ 
for $K$ in the $G$-module $\sT^0/(\bDel,R)_*\leq\ker\bDel^*\cap\ker T$. 
By Theorem \ref{DiagCt}, the latter space splits into $\ker\bdel + \ker\bdel^*$ 
and leads to two cases. As $(M,\mg)$ is conformally flat, we can deduce 
from Computation \eqref{Compute:HSC} that 
\begin{eqnarray*}
\{\bDel,\fS^0(K)\}-(\frac{\bi}{\hbar}\bdel^*+d) (K)\in(\bDel,R)_* &\text{if}& K\in\ker\bdel,\\ 
\{\bDel,\fS^0_\nabla(K)\}-\frac{1}{\cE+\Sigma}\bdel G K\in(\bDel,R)_* & \text{if} &  K\in\ker\bdel^*. 
\end{eqnarray*}
By the proof of Lemma \ref{ConfInvT}, we have 
$\bdel G K-\bdel\Pi_{01}G K\in(\bDel,R)_*$ if $K\in\ker\bdel^*$. 
The formula \eqref{EqrS} giving the superization allows then to conclude in both cases. 
\end{proof}

We are now ready to prove the theorem. According to Lemma \ref{lem:ideals}, 
the quantizations $\cQ^{\l,\l'}$ descend to the quotient spaces as follows,
$\cQ^{\l,\l'}:\sS^\d/(\bDel,R)_*\rightarrow\sD^{\l,\l'}/(\Dirac)$, for $\l'=\l+\d$ and $\d=0,1/n$. 
Hence, we get the following commutative diagram of $G$-modules
\begin{equation}\label{HSQ-HSC}
\xymatrix{
\sD^{\l,\l}/(\Dirac)\ar[rr]^{\HSQ} && \sD^{\l,\m}/(\Dirac)\\
\sS^0/(\bDel,R)_*\ar[u]^{\cQ^{\l,\l}}\ar[rr]^{\HSC} && \sS^{\frac{1}{n}}/(\bDel,R)_* \ar[u]_{\cQlm}\\
\sT^0/(\bDel,R)_* \ar[u]^{\fS^0}\ar[rr]_{A}&& \sT^{\frac{1}{n}}/(\bDel,R)_*\ar[u]_{\fS^{\frac{1}{n}}}
}
\end{equation}
where $\HSC$ and $A$ are conformally invariant operators. The Diagram \eqref{HSQtoQ} 
leads to $\HSC=\{\bDel,\cdot\}+B$, where $B$ does not rise the Hamiltonian degree, 
contrary to $\{\bDel,\cdot\}$ which rises it by one. On $\sT^0_{>0}$, by Lemma \ref{ConfInvT}, 
we know the form of $A$ and together with Lemma \ref{QtoA} we deduce that 
the only possibility is $\HSC=\{\bDel,\cdot\}$ and $A=-\frac{\bi}{\hbar}\bdel^*+\frac{1}{\cE+\Sigma}\bdel\Pi_{01}G$.
On $\sT^0_{0,\k}$, with $\k<n$, the same holds. On $\sT^0_{0,n}$, we have $\HSC=A=\bd^*$.

In view of the latter diagram, the kernels of $\HSC$ and $\HSQ$
 are isomorphic via $\cQ^{\l,\l}$. 
By definition, $\ker\HSQ=\cA^\Dirac$ and 
the kernel of $\{\bDel,\cdot\}$ on $\sT^0/(\bDel,R)$ is $\cS\cC$.
Since the kernel of $\bd^*$ on $\sT^0_{0,n}$ is $\bbC\cdot\vol_\mg$,
we obtain that $\ker\HSC=\bbC\cdot\vol_\mg\oplus\cS\cC/\sT^0_{1,n-1,0;01}$.  

It remains to compute $\cQ^{\l,\l}\circ\fS^0(\bdel^* K)$ for $K$ a conformal Killing $\k+1$-form. 
The result follows from Eqs.\ \eqref{cSCovariant}-\eqref{cQCovariant}
and from the relations $\bdel\bdel^* K= (\k+1)K$, $D^\nabla\bdel^* K=\bd^* K$. 
\end{proof}

 \subsection*{Acknowledgements}
 It is a pleasure to acknowledge Christian Duval for his essential guidance in our investigation of geometric and conformally equivariant quantizations of $(\cM,d\a)$. Special thanks are due to Valentin Ovsienko for his constant interest in this work and to Josef \v{S}ilhan for discussions on higher symmetries of the Dirac operator.

\appendix
 \section{}
We collect here informations on the $13$ generators of the $\rE(p,q)$-invariant operators on $\D(\cM)$ introduced in Proposition \ref{InvIso}.
  
\subsection{}In the following table we recall their definitions together with their interpretation
as operators on $\Ga(\cS TM\otimes\Lambda T^*M)$, in the flat case $(M,\mg)=(\bbR^{p,q},\eta)$. 

\begin{equation}\label{InterpretationInv}
\begin{array}{|c|c||c|c|}
				\hline	
R=\eta^{ij}p_ip_j	& \text{metric} & D=\partial_{p_i}\partial_i &\text{divergence}\\[3pt]\hline
\cE=p_i\partial_{p_i} & \text{ Even Euler operator} & G=\eta^{ij}p_i\partial_j & \text{gradient}\\[3pt]\hline
T=\eta_{ij}\partial_{p_i}\partial_{p_j} & \text{trace} & L=\eta^{ij}\partial_i\partial_j & \text{Laplacian}\\[3pt]\hline
\hline
\Sigma=\xi^i\partial_{\xi^i}& \text{ Odd Euler operator} & & \\[3pt]\hline
\bdel=\eta_{ij}\xi^i\partial_{p_j} & \text{Koszul differential} & \bd=\xi^i\partial_i & \text{de Rham differential} \\[3pt]\hline
\bdel^*=\eta^{ij}p_i\partial_{\xi^j} & \text{Koszul codifferential}& \bd^*=\partial_{\xi^i}\partial_i & \text{de Rham codifferential} \\[3pt]\hline
\bDel=p_i\xi^i &  \text{Berezin differential} & &\\
&\text{or symbol of } \Dirac & &\\[3pt]\hline
\bDel^*=\partial_{\xi^i}\partial_{p_i} & \text{Berezin codifferential}& & \\[3pt]\hline
\end{array}
\end{equation}  

\subsection{}We compute the action of the inversion $\bar{X}_i$, see \eqref{AlgLieConf}, on the five generators of $\fh(2|1,1)$ viewed as operators $A:\sT^\d\rightarrow\sT^{\d'}$ with $\d'$ chosen according to Table \eqref{tableInv}. Explicitly, this action reads as $[A,\bbL_{\bar{X}_i}^*]:=A\bbL_{\bar{X}_i}^\d-\bbL_{\bar{X}_i}^{\d'}A$ and we get
\begin{align}\nonumber
\left[D,\bbL^\d_{\bar{X}_i}\right] &= 2\big(2\cE+n(1-\d)\big)\partial_{p_i}-2p_iT+2\bdel\partial_{\xi^i}-2\xi_i\bDel^*,\\[3pt] \nonumber
\left[G,\bbL_{\bar{X}_i}^*\right] &= -2n\d p_i+2R\partial_{p_i}+2\bDel\partial_{\xi^i}-2\xi_i\bdel^*,\\[3pt] \label{Commutateur}
\left[L,\bbL_{\bar{X}_i}^*\right] &= 2 \big(2\cE+n(1-2\d)\big)\partial_i+4G\partial_{p_i}-4p_iD+4\bd\partial_{\xi^i}-4\xi_i\bd^*,\\[3pt] \nonumber
\left[\bd,\bbL_{\bar{X}_i}^*\right] &= 2\big(\cE+\Sigma-1-n\d\big)\xi_i+2\bDel\partial_{p_i}-2p_i\bdel,\\[3pt]
\nonumber
\left[\bd^*,\bbL_{\bar{X}_i}^*\right] &= 2\big(\cE-\Sigma-1+n(1-\d)\big)\partial_{\xi^i}-2p_i\bDel^*+2\bdel^*\partial_{p_i}.
\end{align}
We introduce $\Pi_{\bf 0}$, the conformally invariant projection on $\ker T\cap\ker\bDel^*\cap\ker\bdel$, and denote by an index ${\bf 0}$ the five generators of $\fh(2|1,1)$ restricted and corestricted to that space.  Then, the action of the inversion $\bar{X}_i$ on their powers, acting on $\sT^\d_{k,\k,0;00}$, reads as
\begin{align}\nonumber
\left[D_{\bf 0}^d,\bbL^\d_{\bar{X}_i}\right] &= 2d\big(2k-d+n(1-\d)\big)\partial_{p_i}D_{\bf 0}^{d-1},\\[3pt]\nonumber
\left[G_{\bf 0}^g,\bbL_{\bar{X}_i}^*\right] &= -2g\big(g+n\d\big)\Pi_{\bf 0}p_iG_{\bf 0}^{g-1},\\[3pt] \label{L0Xi} 
\left[L_{\bf 0}^\ell,\bbL_{\bar{X}_i}^*\right] &= 2\ell \Big(\big(2(k-\ell)+n(1-2\d)\big)\partial_i+4\big(G\partial_{p_i}+\bd\partial_{\xi^i}-p_iD-\xi_i\bd^*\big)\Big)L_{\bf 0}^{\ell-1},\\[3pt] \nonumber
\left[\bd_{\bf 0},\bbL_{\bar{X}_i}^*\right] &= 2\big(k+\k-n\d\big)\Pi_{\bf 0}\xi_i,\\[3pt]
\nonumber
\left[\bd^*_{\bf 0},\bbL_{\bar{X}_i}^*\right] &= 2\big(k-\k+n(1-\d)\big)\Pi_{\bf 0}\partial_{\xi^i}.
\end{align}

\subsection{}Recall that $E=\cE+\frac{n}{2}$, and $\mathsf{\Sigma}=\Sigma-\frac{n}{2}$.
We sum up all the commutation relations between the previous $13$ operators, they generate the super Lie algebra $\spo(2|1,1)\ltimes\fh(2|1,1)$. 
\begin{equation}\label{CommRel}
\begin{array}{|c||c|c|c|c|c|c|c|c||c|c|c|c|c|}
				
				\hline
 & R & E & T    & \mathsf{\Sigma} & \bdel & \bdel^*  & \bDel & \bDel^* & D & G & L & \bd & \bd^{*} \\[3pt]\hline
	\hline
R& 0 & -2R & -4E& 0  & -2\bDel & 0 &  0     &  -2\bdel^* & -2G& 0 & 0& 0   &  0  \\[3pt]\hline
E& 2R& 0 & 2T  &  0     & -\bdel & \bdel^*  & \bDel & -\bDel^* & -D& G & 0 & 0   &  0  \\[3pt]\hline
T&4E & -2T& 0& 0   & 0   &  2\bDel^* & 2\bdel  &  0      & 0 & 2D& 0 & 0   &  0    \\[3pt]\hline
\mathsf{\Sigma}&	0& 0  &  0  & 0 & \bdel &  -\bdel^*   & \bDel & -\bDel^* & 0 & 0 & 0 & \bd & -\bd^{*}  	 \\[3pt]\hline
\bdel& 2\bDel&\bdel&0& -\bdel & 0&E+\mathsf{\Sigma}&  0     & T       & 0 &\bd& 0 & 0   & D	  \\[3pt]\hline
\bdel^*& 0& -\bdel^*&-2\bDel^*&\bdel^*&E+\mathsf{\Sigma}& 0 &  R     &  0      &\bd^*&0&0& G& 0 \\[3pt]\hline
\bDel&0&-\bDel &-2\bdel& -\bDel&0& R &0& E-\mathsf{\Sigma}& -\bd& 0& 0	&0& G		\\[3pt]\hline
\bDel^*& 2\bdel^* &\bDel^* & 0& \bDel^* & T & 0 & E-\mathsf{\Sigma}& 0 & 0 & \bd^{*} & 0 & D & 0\\[3pt]\hline
\hline
D& 2G & D& 0 & 0 & 0 & \bd^{*} & \bd & 0 & 0 & L & 0 & 0& 0 				\\[3pt]\hline
G& 0 & -G& -2 D	& 0 & -\bd & 0 & 0& -\bd^*& -L & 0& 0 & 0& 0			\\[3pt]\hline
L&0 &0&0 &0&0 &0&0 &0&0 &0&0 &0&0  				\\[3pt]\hline
\bd&0 &0&0 &-\bd  & 0& G & 0& D& 0 &0 &0 &0 &L				\\[3pt]\hline
\bd^*&0 &0&0 &\bd^*& D &0& G& 0 &0&0&0 &L&0		\\[3pt]\hline
\end{array}
\end{equation}  \\
We denote by a zero index the operators $\bdel^*,\bdel,\bDel,\bDel^*$ restricted and corestricted to the kernel of the operator $T$. They satisfy the following commutation relations:
\begin{equation}\label{CommutationkerT}
\begin{array}{|c|c|c|c|c|}
				\hline	
	[\cdot,\cdot]				& \bDel_0 & \bdel^*_0 & \bdel_0 & \bDel^*_0\\[3pt]\hline
\bDel_0	& 0				& 0		& -4\sfc\bDel_0\bdel_0		& (n+\cE-\Sigma)-4\sfc\bdel^*_0\bdel_0\\[3pt]\hline 
\bdel^*_0		&	0				&	0		&	\Sigma+\cE-4\sfc\bDel_0\bDel^*_0 & -4\sfc\bdel^*_0\bDel^*_0\\[3pt]\hline 
\bdel_0		& -4\sfc\bDel_0\bdel_0	&	\Sigma+\cE-4\sfc\bDel_0\bDel^*_0 & 0 & 0\\[3pt]\hline 
\bDel^*_0 &(n+\cE-\Sigma)-4\sfc\bdel^*_0\bdel_0& -4\sfc\bdel^*_0\bDel^*_0&0&0\\ 
\hline 
\end{array}
\end{equation}
where $\sfc=\frac{1}{2\left(n+2(\cE-1) \right)}$ comes from the coefficient of $RT$ in $\Pi_0$, but with $\cE-1$ instead of $\cE$ as the commutation with $\bdel$ or $\bDel^*$ lowers by $1$ the $p$-degree.

\bibliographystyle{plain}
\bibliography{Biblio}

\begin{thebibliography}{10}

\bibitem{ABB14}
L.~Andersson, T.~B\"ackdahl, and P.~Blue.
\newblock Second order symmetry operators.
\newblock {\em Class. Quant. Grav.}, 31(13):135015, 2014.

\bibitem{BEHRR11}
\`A. Ballesteros, A.~Enciso, F.~J. Herranz, O.~Ragnisco, and D.~Riglioni.
\newblock Quantum mechanics on spaces of nonconstant curvature: The oscillator
  problem and superintegrability.
\newblock {\em Ann. Phys.}, 326(8):2053 -- 2073, 2011.

\bibitem{BCh97}
I.~M. Benn and P.~Charlton.
\newblock Dirac symmetry operators from conformal {K}illing - {Y}ano tensors.
\newblock {\em Class. Quant. Grav.}, 14(5):1037, 1997.

\bibitem{BKr04}
I.~M. Benn and J.~M. Kress.
\newblock First-order {D}irac symmetry operators.
\newblock {\em Class. Quant. Grav.}, 21(2):427, 2004.

\bibitem{BMa77}
F.~A. {Berezin} and M.~S. {Marinov}.
\newblock Particle spin dynamics as the grassmann variant of classical
  mechanics.
\newblock {\em Ann. Phys.}, 104:336--362, April 1977.

\bibitem{BCo85a}
B.~D. Boe and D.~H. Collingwood.
\newblock A comparison theory for the structure of induced representations.
\newblock {\em J. Algebra}, 94(2):511--545, 1985.

\bibitem{BCo85b}
B.~D. Boe and D.~H. Collingwood.
\newblock A comparison theory for the structure of induced representations.
  {II}.
\newblock {\em Math. Z.}, 190(1):1--11, 1985.

\bibitem{Bor00}
M.~Bordemann.
\newblock The deformation quantization of certain super-{P}oisson brackets and
  {BRST} cohomology.
\newblock In {\em Conf{\'e}rence {M}osh{\'e} {F}lato 1999, {V}ol. {II}
  ({D}ijon)}, volume~22 of {\em Math. Phys. Stud.}, pages 45--68. Kluwer Acad.
  Publ., Dordrecht, 2000.

\bibitem{CSi09}
A.~{\v{C}}ap and J.~{\v{S}}ilhan.
\newblock Equivariant quantizations for {AHS}-structures.
\newblock {\em Adv. Math.}, 224(4):1717 -- 1734, 2010.

\bibitem{CKK11}
M.~Cariglia, P.~Krtou\v{s}, and D.~Kubiz\v{n}\'ak.
\newblock Dirac equation in {K}err-{NUT}-{(A)dS} spacetimes: {I}ntrinsic
  characterization of separability in all dimensions.
\newblock {\em Phys. Rev. D}, 84:024008, Jul 2011.

\bibitem{Car77}
B.~Carter.
\newblock Killing tensor quantum numbers and conserved currents in curved
  space.
\newblock {\em Phys. Rev. D (3)}, 16(12):3395--3414, 1977.

\bibitem{DEO04}
C.~Duval, A.~M. {El Gradechi}, and V.~Yu. Ovsienko.
\newblock Projectively and conformally invariant star-products.
\newblock {\em Comm. Math. Phys.}, 244(1):3--27, 2004.

\bibitem{DLO99}
C.~Duval, P.~B.~A. Lecomte, and V.~Yu. Ovsienko.
\newblock Conformally equivariant quantization: existence and uniqueness.
\newblock {\em Ann. Inst. Fourier (Grenoble)}, 49(6):1999--2029, 1999.

\bibitem{DOv01}
C.~Duval and V.~Yu. Ovsienko.
\newblock Conformally equivariant quantum {H}amiltonians.
\newblock {\em Selecta Math. (N.S.)}, 7(3):291--320, 2001.

\bibitem{DVa05}
C.~Duval and G.~Valent.
\newblock Quantum integrability of quadratic {K}illing tensors.
\newblock {\em J. Math. Phys.}, 46(5):053516, 22, 2005.

\bibitem{DVa11}
C.~Duval and G.~Valent.
\newblock A new integrable system on the sphere and conformally equivariant
  quantization.
\newblock {\em J. Geom. Phys.}, 61(8):1329 -- 1347, 2011.

\bibitem{Eas05}
M.~G. Eastwood.
\newblock Higher symmetries of the {L}aplacian.
\newblock {\em Ann. of Math. (2)}, 161(3):1645--1665, 2005.

\bibitem{Feg76}
H.~D. Fegan.
\newblock Conformally invariant first order differential operators.
\newblock {\em Quart. J. Math. Oxford (2)}, 27(107):371--378, 1976.

\bibitem{Get83}
E.~Getzler.
\newblock Pseudodifferential operators on supermanifolds and the
  {A}tiyah-{S}inger index theorem.
\newblock {\em Comm. Math. Phys.}, 92(2):163--178, 1983.

\bibitem{GRV93}
G.~W. Gibbons, R.~H. Rietdijk, and J.~W. {van Holten}.
\newblock S{USY} in the sky.
\newblock {\em Nucl. Phys. B}, 404(1-2):42--64, 1993.

\bibitem{HWa05}
K.~Hallowell and A.~Waldron.
\newblock Constant curvature algebras and higher spin action generating
  functions.
\newblock {\em Nucl. Phys. B}, 724:453, 2005.

\bibitem{HWa07}
K.~Hallowell and A.~Waldron.
\newblock The symmetric tensor {L}ichnerowicz algebra and a novel associative
  {F}ourier-{J}acobi algebra.
\newblock {\em SIGMA}, 3:Paper 089, 12, 2007.

\bibitem{HWa08}
K.~Hallowell and A.~Waldron.
\newblock {Supersymmetric quantum mechanics and super-Lichnerowicz algebras.}
\newblock {\em Comm. Math. Phys.}, 278(3):775--801, 2008.

\bibitem{Hom01}
Y.~Homma.
\newblock Spinor-valued and {C}lifford algebra-valued harmonic polynomials.
\newblock {\em J. Geom. Phys.}, 37(3):201--215, 2001.

\bibitem{How89}
R.~Howe.
\newblock Remarks on classical invariant theory.
\newblock {\em Trans. Amer. Math. Soc.}, 313(2):539--570, 1989.
\newblock Erratum Trans. Amer. Math. Soc., 318(2): 823, 1990.

\bibitem{Kos72}
Y.~Kosmann.
\newblock D{\'e}riv{\'e}es de {L}ie des spineurs.
\newblock {\em Ann. Mat. Pura Appl. (4)}, 91:317--395, 1972.

\bibitem{KCa11}
D.~Kubiz\v{n}\'ak and M.~Cariglia.
\newblock Integrability of spinning particle motion in higher-dimensional
  rotating black hole spacetimes.
\newblock {\em Phys. Rev. Lett.}, 108:051104, 2012.

\bibitem{LOv99}
P.~B.~A. Lecomte and V.~Yu. Ovsienko.
\newblock Projectively equivariant symbol calculus.
\newblock {\em Lett. Math. Phys.}, 49(3):173--196, 1999.

\bibitem{LHo10}
D.~Lu and R.~Howe.
\newblock The dual pair {$({\rm O}_{p,q},{\rm OSp}_{2,2})$} and {M}axwell's
  equations.
\newblock In {\em Casimir force, {C}asimir operators and the {R}iemann
  hypothesis}, pages 105--133. Walter de Gruyter, Berlin, 2010.

\bibitem{Man88}
Yu.~I. Manin.
\newblock {\em Gauge field theory and complex geometry}, volume 289 of {\em
  Grundlehren der Mathematischen Wissenschaften [Fundamental Principles of
  Mathematical Sciences]}.
\newblock Springer-Verlag, Berlin, 1988.
\newblock Translated from the Russian by N. Koblitz and J. R. King.

\bibitem{MRa09}
P.~Mathonet and F.~Radoux.
\newblock {On natural and conformally equivariant quantizations.}
\newblock {\em J. Lond. Math. Soc., II. Ser.}, 80(1):256--272, 2009.

\bibitem{Mic11b}
J.-Ph. Michel.
\newblock Higher symmetries of {L}aplacian via quantization.
\newblock {\em Ann. Inst. Fourier (to appear)}.

\bibitem{Mic09}
J.-Ph. Michel.
\newblock {\em Quantification conform{\'e}ment {\'e}quivariante des fibr{\'e}s
  supercotangents}.
\newblock PhD thesis, Universit{\'e} Aix-Marseille II, 2009.
\newblock Electronically available as tel-00425576.

\bibitem{Mic10a}
J.-Ph. Michel.
\newblock Conformal geometry of the supercotangent and spinor bundles.
\newblock {\em Comm. Math. Phys.}, 312(2):303--336, 2012.

\bibitem{Mic11a}
J.-Ph. Michel.
\newblock Conformally equivariant quantization - a complete classification.
\newblock {\em SIGMA}, 8:Paper 022, 2012.

\bibitem{MRS13}
J.-Ph. Michel, F.~Radoux, and J.~{\v{S}}ilhan.
\newblock Second order symmetries of the conformal {L}aplacian.
\newblock {\em SIGMA}, 10:Paper 016, 2014.

\bibitem{MSS14}
J.-Ph. Michel, P.~Somberg, and J.~{\v{S}}ilhan.
\newblock Prolongation of symmetric {K}illing tensors and commuting symmetries
  of the {L}aplace operator.
\newblock {\em arXiv:1403.7226}.

\bibitem{Pap51}
A.~Papapetrou.
\newblock Spinning test-particles in general relativity. {I}.
\newblock {\em Proc. Roy. Soc. London. Ser. A.}, 209:248--258, 1951.

\bibitem{Rav80}
F.~Ravndal.
\newblock Supersymmetric {D}irac particles in external fields.
\newblock {\em Phys. Rev. D (3)}, 21(10):2823--2832, 1980.

\bibitem{Rot90}
M.~Rothstein.
\newblock The structure of supersymplectic supermanifolds.
\newblock In {\em Differential geometric methods in theoretical physics
  ({R}apallo, 1990)}, volume 375 of {\em Lecture Notes in Phys.}, pages
  331--343. Springer, Berlin, 1991.

\bibitem{Roy02}
D.~Roytenberg.
\newblock On the structure of graded symplectic supermanifolds and {C}ourant
  algebroids.
\newblock In {\em Quantization, {P}oisson brackets and beyond ({M}anchester,
  2001)}, volume 315 of {\em Contemp. Math.}, pages 169--185. Amer. Math. Soc.,
  Providence, RI, 2002.

\bibitem{Sil09}
J.~{\v{S}}ilhan.
\newblock Conformally invariant quantization - towards complete classification.
\newblock {\em Differ. geom. appl.}, 33, Supplement(0):162 -- 176, 2014.
\newblock The Interaction of Geometry and Representation Theory. Exploring new
  frontiers.

\bibitem{Tan95}
M.~Tanimoto.
\newblock The role of {K}illing-{Y}ano tensors in supersymmetric mechanics on a
  curved manifold.
\newblock {\em Nucl. Phys. B}, 442(3):549--560, 1995.

\bibitem{Vor90}
F.~F. Voronov.
\newblock Quantization on supermanifolds and an analytic proof of the
  {A}tiyah-{S}inger index theorem.
\newblock In {\em Current problems in mathematics. {N}ewest results, {V}ol.\ 38
  ({R}ussian)}, Itogi Nauki i Tekhniki, pages 3--118, 186. Akad. Nauk SSSR
  Vsesoyuz. Inst. Nauchn. i Tekhn. Inform., Moscow, 1990.
\newblock Translated in J. Soviet Math. {{\bf{6}}4} (1993), no. 4, 993--1069.

\bibitem{Wey97}
H.~Weyl.
\newblock {\em The classical groups}.
\newblock Princeton Landmarks in Mathematics. Princeton University Press,
  Princeton, NJ, 1997.
\newblock Their invariants and representations, Fifteenth printing, Princeton
  Paperbacks.

\bibitem{Wid80}
H.~Widom.
\newblock A complete symbolic calculus for pseudodifferential operators.
\newblock {\em Bull. Sci. Math. (2)}, 104(1):19--63, 1980.

\end{thebibliography}
 
\end{document}